\newif\ifC
\author{Avi Kadria}
{Department of Computer Science, Bar-Ilan University, Ramat Gan 5290002, Israel}
{avi.kadria3@gmail.com}
{https://orcid.org/0000-0001-8449-3284}
{}
\author{Liam Roditty}
{Department of Computer Science, Bar-Ilan University, Ramat Gan 5290002, Israel}
{liam.roditty@biu.ac.il}
{https://orcid.org/0000-0002-5289-198X}
{Supported in part by BSF grants 2016365 and 2020356.}
\authorrunning{A. Kadria, L. Roditty}
\keywords{Fine-grained complexity, Graph algorithms, Fine-grained algorithm, Graph distances} 
\author{Avi Kadria\thanks{Department of Computer Science, Bar-Ilan University, Ramat Gan 5290002, Israel. E-mail {\tt avi.kadria3@gmail.com}.} \and Liam Roditty\thanks{Department of Computer Science, Bar-Ilan University, Ramat Gan 5290002, Israel. E-mail {\tt liam.roditty@biu.ac.il}. Supported in part by BSF grants 2016365 and 2020356.}}
\newtheorem{theorem}{Theorem}
\newtheorem{lemma}{Lemma}
\newtheorem{problem}{Problem}
\newtheorem{cclaim}{Claim}[lemma]
\newtheorem{property}[lemma]{Property}
\Crefname{enumi}{(item)}{(items)}
\newcommand{\Reminder}[1]{

\vspace{0.5em}
\noindent\textbf{Reminder of~\autoref{#1}.} \textit{\Paste{#1}}
\vspace{0.5em}

}
\definecolor{DarkGreen}{RGB}{1,50,32}
\pgfplotsset{compat=1.18}
\begin{document}
\ActivateWarningFilters[pdftoc]
\newcommand{\defeq}{:=}
\newcommand{\eps}{\varepsilon}

\newcommand{\liam}[1]{{\color{red} \textbf{Liam}: #1}} 
\newcommand{\avi}[1]{{\color{purple} \textbf{Avi}: #1}} 
\newcommand{\new}[1]{{\color{blue} #1}}

\newcommand{\blue}[1]{{\color{blue}#1}}

\newcommand{\ReturnCode}{\textbf{return}}
\newcommand{\codestyle}[1]{\texttt{#1}}
\newcommand{\Initialize}{\mbox{\codestyle{Initialize}}}
\newcommand{\Cycle}{\mbox{\codestyle{Cycle}}}
\newcommand{\ADO}{\mbox{\codestyle{ADO}}}
\newcommand{\hADO}{\mbox{\codestyle{hADO}}}
\newcommand{\Query}{\mbox{\codestyle{.Query}}}
\newcommand{\ADOQuery}{\mbox{\codestyle{ADO.Query}}}
\newcommand{\ConstructADO}{\codestyle{ConstructADO}}
\newcommand{\FastAPSP}{\codestyle{FastAPSP}}
\newcommand{\Construct}{\mbox{\codestyle{Construct}}}
\newcommand{\Spanner}{\codestyle{Spanner}}
\newcommand{\RT}{\mbox{\codestyle{RT}}}
\renewcommand{\L}{\mbox{\codestyle{L}}}
\newcommand{\CycleOdd}{\codestyle{CycleOdd}}
\newcommand{\BallOrCycle}{\codestyle{BallOrCycle}}
\newcommand{\ClusterOrCycleBounded}{\codestyle{ClusterOrCycleBounded}}
\newcommand{\ClusterOrCycle}{\codestyle{ClusterOrCycle}}
\newcommand{\SimpleCycle}{\codestyle{SimpleCycle}}
\newcommand{\Next}{\codestyle{Next}}
\newcommand{\Sample}{\codestyle{Sample}}
\newcommand{\Dijkstra}{\codestyle{Dijkstra}}
\newcommand{\Preprocess}{\codestyle{Preprocess}}
\newcommand{\HashTable}{\codestyle{HashTable}}
\newcommand{\Heap}{\codestyle{Heap}}
\newcommand{\RelaxNext}{\codestyle{RelaxNext}}

\newcommand{\PreprocessGraph}{\codestyle{Initialize}}
\newcommand{\Route}{\codestyle{Route}}
\newcommand{\TreeRoute}{\codestyle{TreeRoute}}
\newcommand{\N}{\mathbb{N}}
\newcommand{\MinCycle}{\codestyle{MinCycle}}
\newcommand{\Ball}{\codestyle{Ball}}
\newcommand{\DistanceOracle}{\codestyle{TZ-DistanceOracle}}
\newcommand{\SparseOrCycle}{\codestyle{SparseOrCycle}}
\newcommand{\Intersection}{\codestyle{Intersection}}
\newcommand{\CycleAdditive}{\codestyle{CycleAdditive}}
\newcommand{\GenerateSi}{\codestyle{ComputeS}}

\newcommand{\codeNull}{\codestyle{null}}
\newcommand{\codeYes}{\codestyle{Yes}}
\newcommand{\codeNo}{\codestyle{No}}
\newcommand{\codeAnd}{~ \mathrm{and} ~}
\newcommand{\codeOr}{~ \mathrm{or} ~}
\newcommand{\wt}{\ell}
\newcommand{\Cl}{CL}
\newcommand{\CL}{CL}
\newcommand{\cl}{c\ell}

\newcommand{\EQ}{\;=\;}
\newcommand{\GE}{\;\ge\;}
\newcommand{\Ot}{\tilde{O}}
\newcommand{\stactri}{\stackrel\triangle}

\newcommand{\EE}{\mathbb{E}}
\newcommand{\RR}{\mathbb{R}}

\ifC
\renewcommand{\paragraph}[1]{\textbf{#1}}
\else
\fi

\DeclarePairedDelimiter{\ceil}{\lceil}{\rceil}
\DeclarePairedDelimiter{\floor}{\lfloor}{\rfloor}
\DeclarePairedDelimiter{\pair}{\langle}{\rangle}

\title{Faster Algorithms for $(2k-1)$-Stretch Distance Oracles}
\maketitle

\thispagestyle{empty}

\begin{abstract}
The seminal distance oracles of Thorup and Zwick [STOC 2001, JACM 2005] provide optimal stretch/space tradeoffs. However, their $O(mn^{1/k})$ construction time is not optimal, and they posed the question of whether a faster construction time is possible (especially for small $ k$). 
In this paper, we present the first improvement upon their construction algorithm in graphs that are not super sparse, i.e., when $m=\Omega(n^{1+1/k+\eps})$, for any $\eps>0$. 
Moreover, our construction improves upon the $O(n^2)$-time construction of Baswana and Kavitha [FOCS 2006, SICOMP 2010], for every $k > 2$. By achieving the first subquadratic construction for $2<k<6$, we resolve the open problem posed by Wulff-Nilsen [SODA 2012] of whether such subquadratic-time constructions exist. 

Wulff-Nilsen [SODA 2012] targeted nearly linear construction times and presented algorithms running in $\tilde{O}(m+n^{1+f(k)})$ time, which is near-linear whenever the graph density $m$ exceeds the threshold $n^{1+f(k)}$. We obtain improved bounds on $f(k)$ for all $k > 3$, and thus expand the regime of graph densities for which nearly linear construction times are achievable.

In addition, for unweighted graphs, we present several new algorithms for constructing $(2k - 1,\beta)$-oracles that improve upon the results of Baswana, Gaur, Sen, and Upadhyay [ICALP 2008].

Our results are achieved through the development of several new algorithmic tools, which may be of independent interest.
One of our main technical contributions is a hierarchy of parameterized distance oracles, which plays a central role in our fast construction algorithms.
\end{abstract}
\clearpage
\pagenumbering{arabic} 
\newpage

\section{Introduction}

In their seminal work\footnote{Winner of the 20-year Test-of-Time award of STOC 2021.} on distance oracles, Thorup and Zwick~\cite{DBLP:journals/jacm/ThorupZ05} 
presented an algorithm that constructs in $O(kmn^{\frac{1}{k}})$ expected\footnote{Later, Roditty, Thorup, and Zwick~\cite{DBLP:conf/icalp/RodittyTZ05} derandomized the construction.} time, for any integer $k\geq 1$, a $(2k-1)$-stretch distance oracle that uses $O(kn^{1+\frac{1}{k}})$ space and answers distance queries in $O(k)$ time. 
Many different aspects of distance oracles, including fast construction algorithms, have been studied since the introduction of distance oracles \cite{DBLP:conf/focs/MendelN06, ElkinNW16, ElkinP16, DBLP:conf/focs/ElkinS23, DBLP:conf/icalp/Chechik024a, DBLP:conf/soda/Wulff-Nilsen12, DBLP:journals/siamcomp/BaswanaK10,DBLP:conf/icalp/KopelowitzKR24,DBLP:conf/soda/Le23,DBLP:conf/stoc/AbboudBKZ22,DBLP:conf/stoc/AbboudBF23,DBLP:conf/soda/GudmundssonLNS02,DBLP:conf/soda/ChechikCFK17,DBLP:conf/soda/Wulff-Nilsen16}.

Assuming the Erd\H{o}s girth conjecture, the $(2k - 1)$-stretch / $O(kn^{1 + \frac{1}{k}})$-space tradeoff established in~\cite{DBLP:journals/jacm/ThorupZ05} is optimal.
Chechik~\cite{DBLP:conf/stoc/Chechik14, DBLP:conf/stoc/Chechik15} improved the $O(k)$ query time to  $O(1)$. Therefore, the main open problem regarding distance oracles that remains is improving its construction time.  
In the words of Thorup and Zwick~\cite{DBLP:journals/jacm/ThorupZ05}:

\epigraph{``\textit{It
remains an interesting open problem, however, to reduce the preprocessing times of
small stretch oracles.}''}{\textit{Thorup and Zwick~\cite{DBLP:journals/jacm/ThorupZ05}}}

The $O(mn^{\frac{1}{k}})$ construction algorithm of Thorup and Zwick~\cite{DBLP:journals/jacm/ThorupZ05} is super-quadratic (in $n$) whenever $m=\omega(n^{2-\frac{1}{k}})$. 
Baswana and Kavitha~\cite{DBLP:journals/siamcomp/BaswanaK10} addressed this problem and presented an $O(n^2)$ construction-time algorithm, for every $k\ge 2$.
In light of the quadratic time algorithm of Baswana and Kavitha~\cite{DBLP:journals/siamcomp/BaswanaK10}, Wulff-Nilsen~\cite{DBLP:conf/soda/Wulff-Nilsen12} considered the following problem.
\begin{problem}\label{Plm-1}
For graphs with $m=\Theta(n^{2 - \varepsilon})$, for $\eps>0$, for which values of $k \geq 2$, does there exist an $O(n^{2-\delta})$, for $\delta>0$, time algorithm that constructs a $(2k - 1)$-stretch distance oracle with $\Ot(1)$ query time\footnote{$\Ot$ omits a $\mathrm{polylog}(n)$ factor} and $\Ot(kn^{1 + \frac{1}{k}})$ space?
\end{problem}

Wulff-Nilsen~\cite{DBLP:conf/soda/Wulff-Nilsen12} presented an $O(km+kn^{\frac{3}{2}+\frac{2}{k}+O(k^{-2})})$ time construction algorithm, which is truly subquadratic time for every $k\geq 6$. 
Therefore, this resolves~\Cref{Plm-1} for all $k \geq 6$.
Similar to Thorup and Zwick, Wulff-Nilsen suggested focusing on small stretch oracles and asked whether subquadratic time algorithms exist for smaller values of $k$.

In this paper, we provide an almost complete answer to~\Cref{Plm-1} by presenting a truly subquadratic-time construction algorithm for every $k>2$, leaving only the case that $k=2$ open\footnote{We highlight that solving \Cref{Plm-1} for $k = 2$ and achieving a $3$-stretch distance oracle in truly subquadratic time---even without imposing space constraints---would be a significant breakthrough.
Such a result would improve upon the long-standing $\Ot(n^2)$-time $3$-stretch all-pairs shortest paths (APSP) algorithm of Cohen and Zwick~\cite{DBLP:journals/jal/CohenZ01}, and would be a foundational advancement in the development of approximation algorithms for $APSP$.}. We achieve a running time of $O(\max(n^{1+2/k}, m^{1-\frac{1}{k-1}}n^{\frac{2}{k-1}})\log\log{n})$, which for graphs with $m=O(n^{2-\eps})$, is truly subquadratic  time of $O(n^{2-\eps(1-\frac{1}{k-1})})$.
Moreover, this is the first improvement over the $O(mn^{1/k})$ construction time of Thorup and Zwick~\cite{DBLP:journals/jacm/ThorupZ05} in  
graphs with $n^{1+1/k+\eps}<m<n^{2-\eps}$ edges, for any $\eps>0$.

\begin{theorem}\label{T-Construction-no-spanner}\Copy{T-Construction-no-spanner}{
    Let $G=(V, E)$ be a weighted undirected graph and $k\geq 3$ be an integer. There is an $O(\max(n^{1+2/k}, m^{1-\frac{1}{k-1}}n^{\frac{2}{k-1}})\log\log{n})$ time algorithm that constructs a $(2k-1)$-stretch distance oracle that uses $O(n^{1+\frac{1}{k}})$-space and answers distance queries in $O(k\log\log{n})$ time.
    }
\end{theorem}


After breaking the quadratic time barrier for every $k > 2$, we shift our focus to achieving optimal\footnote{$\Omega(m)$ construction time is required due to a simple reduction to $(s,t)$-connectivity.} linear construction time, as formalized in the following problem.

\begin{problem}\label{Plm-2}
For every $k \geq 2$, what is the smallest value $f(k)$ for which a $(2k - 1)$-stretch distance oracle with $\Ot(1)$ query time and space complexity $\Ot(n^{1+1/k})$ can be constructed in $\Ot(m + n^{1 + f(k)})$ time?
\end{problem}

A running time of $\Ot(m + n^{1 + f(k)})$ is nearly linear for graphs where $m = \Omega(n^{1 + f(k)})$, and is therefore \textit{optimal} (up to logarithmic factors) for such graphs. 
In other words,~\Cref{Plm-2} asks: what is the sparsest graph for which it is possible to construct a $(2k - 1)$-stretch distance oracle with $\Ot(1)$ query time and $\tilde{O}(n^{1 + \frac{1}{k}})$ space in almost linear time?

There are two approaches to attack \Cref{Plm-2}. One approach is to prove a lower bound on the value of $f(k)$. 
Jin and Xu~\cite{jin2023removing} 
proved that 
$\Omega(m^{1+\frac{1}{2k-1}-o(1)})$ time
is required for constructing a $(2k-1)$-stretch distance oracle, for graphs in which $m = \Theta(n)$, conditioned on  the 3SUM conjecture. 
Independently, and also conditioned on  the 3SUM conjecture, Abboud, Bringmann, and Fischer~\cite{DBLP:conf/stoc/AbboudBF23} proved that 
$\Omega(m^{1+\frac{1}{2k-1}-o(1)})$ time is required. 
These two results rule out linear-time construction for graphs with $m = \Theta(n)$, which implies that $f(k)>0$.\footnote{Abboud, Bringmann, Khoury, and Zamir~\cite{DBLP:conf/stoc/AbboudBKZ22} previously proved that $\Omega(m^{1+\frac{c}{k}-o(1)})$ construction time is required, for a small constant $c$.} 

In addition, Jin and Xu~\cite{jin2023removing} 
proved that 
$\Omega(m^{1+\frac{1}{4k-3}})$ time is required for constructing a $(2k-1)$-stretch distance oracle, for graphs in which $m = \Theta(n^{1+\frac{1}{4k-4}})$, ruling out linear time construction for graphs with $m = \Theta(n^{1+\frac{1}{4k-4}})$, and therefore showing that $f(k)>\frac{1}{4k-4}$, for every $k>1$. 

Another approach to attack \Cref{Plm-2} is to design a faster construction algorithm which improves the upper bound on $f(k)$.
Wulff-Nilsen~\cite{DBLP:conf/soda/Wulff-Nilsen12} presented two construction algorithms.
The first algorithm with $O(km+kn^{\frac{3}{2}+\frac{2}{k}+O(k^{-2})})$ running time was discussed above, and the second algorithm has $O(\sqrt{k}m+kn^{1+\frac{2\sqrt 6}{\sqrt k}+O(k^{-1})})$ running time.
Combining these two algorithms with the $\Ot(n^2)$-time algorithm of Baswana and Kavitha~\cite{DBLP:journals/siamcomp/BaswanaK10} we get:

$$
f(k)\leq 
\begin{cases}
    1 & 2\leq k < 6\\
    \frac{1}{2}+\frac{2}{k}+O(k^{-2}) & 6 \leq k < 96\\
    \frac{2\sqrt{6}}{\sqrt{k}}+O(k^{-1}) & 96 \leq k
\end{cases}
$$

In this paper, we improve the construction time of the Wulff-Nilsen~\cite{DBLP:conf/soda/Wulff-Nilsen12} algorithms and present two new construction algorithms, as presented in the following two theorems.
\begin{theorem}\label{T-Construction-spanner}\Copy{T-Construction-spanner}{
    Let $G=(V, E)$ be a weighted undirected graph and let $k\geq 3$ be an integer. There is an $O((km+kn^{\frac{3}{2}+\frac{3}{4k-6}})\log\log{n})$ time algorithm that constructs a $(2k-1)$-stretch distance oracle that uses $O(n^{1+\frac{1}{k}})$-space and answers distance queries in $O(k\log\log{n})$ time.}
\end{theorem}

\begin{theorem} \label{T-construction-spanner-ado} \Copy{T-construction-spanner-ado} {
    Let $G=(V, E)$ be a weighted undirected graph and $k\geq 3$ be an integer. There is an $O((\sqrt{k}m+kn^{1+\frac{\sqrt{8k^3-32k+25}-4k+5}{k(k-1)}})\log\log{n})\leq O(\sqrt{k}m\log\log{n}+kn^{1+\frac{2\sqrt{2}}{\sqrt{k}}})$ time algorithm that constructs a $(2k-1)$-stretch distance oracle that uses $O(n^{1+\frac{1}{k}})$-space and answers distance queries in $O(k\log\log{n})$ time.
    }
\end{theorem}

\Cref{T-Construction-spanner} improves the $O(km+kn^{\frac{3}{2}+\frac{2}{k}+O(k^{-2})})$ time of~\cite{DBLP:conf/soda/Wulff-Nilsen12} and \Cref{T-construction-spanner-ado} improves the $O(\sqrt{k}m+kn^{1+\frac{2\sqrt{6}}{\sqrt{k}}+O(k^{-1})})$ time of~\cite{DBLP:conf/soda/Wulff-Nilsen12}.

Combining these two algorithms, we significantly improve upon the current state of the art and obtain the following upper bound on $f(k)$:

$$
f(k)\leq 
\begin{cases}
    1 & k=2,3\\
    \frac12+\frac{3}{4k-6} & 4 \leq k < 16\\
    \frac{2\sqrt{2}}{\sqrt{k}} & 16 \leq k
\end{cases}
$$
In~\Cref{tab:construction}, we summarize our new construction algorithms for weighted undirected graphs and compare them to the previously known results.
In~\Cref{tab:examples} and~\Cref{fig:theorem_weighted}, we compare our new results on the sparsest graphs for which a linear-time construction algorithm exists with the prior work of~\cite{DBLP:journals/siamcomp/BaswanaK10, DBLP:conf/soda/Wulff-Nilsen12}.

Our pursuit of faster construction algorithms leads us to develop several novel tools.
Among them are new \textit{parameterized} distance oracles.
Parameterized distance oracles were first introduced by 
Roditty, Thorup, and Zwick~\cite{DBLP:conf/icalp/RodittyTZ05}. 
Given a set \( S \subseteq V \) as a parameter, they presented an \( O(m|S|^{\frac{1}{k}}) \)-time construction algorithm that builds a \((2k-1)\)-stretch distance oracle with \( O(n|S|^{\frac{1}{k}}) \) space, supporting queries only for vertex pairs \( \langle s,v \rangle \in S \times V \).  

One of our new tools is a  parameterized distance oracle, which is a natural generalization of the 
parameterized distance oracle of~\cite{DBLP:conf/icalp/RodittyTZ05}. Our parameterized distance oracle is constructed in the same running time and uses the same amount of space.
The oracle supports queries for every \( \langle u,v \rangle \) in $V \times V$, rather than in $S\times V$ as~\cite{DBLP:conf/icalp/RodittyTZ05}, at the cost of adding \( 2\min(h(u),h(v)) \) to the estimate, where \( h(x) \) is the distance from \( x \) to \( S \).
Notice that if \( u \in S \), then \( h(u) = 0 \), and the stretch is  \((2k-1)\) as in~\cite{DBLP:conf/icalp/RodittyTZ05}.

The main technical contribution of this paper is a new algorithmic tool called hierarchical distance oracles, which is composed of a hierarchy of parameterized distance oracles. This construction consists of $\log \log n$ such oracles, each built on a distinct subgraph of $G$ and parameterized by a different set $S$.

Using our new hierarchical distance oracle we obtain our $O(\max(n^{1+2/k}, m^{1-\frac{1}{k-1}}n^{\frac{2}{k-1}}))$-time algorithm for constructing a $(2k-1)$-stretch distance oracle that uses $O(n^{1+\frac{1}{k}})$ space, mentioned above. We remark that this improves upon the $O(mn^{1/k})$-construction time of Thorup and Zwick~\cite{DBLP:journals/jacm/ThorupZ05}, for graphs with $m=\Omega(n^{1+\frac{1}{k}+\eps})$ edges.

In addition, we consider unweighted graphs.
An estimation $\hat{d}(u,v)$ of $d(u,v)$ is an 
$(\alpha,\beta)$-approximation if $d(u,v)\leq \hat{d}(u,v) \leq \alpha d(u,v) + \beta$.
$(\alpha,\beta)$-approximations were extensively studied in the context of graph spanners, emulators, and distance oracles. (For more details see for example~\cite{DBLP:journals/siamcomp/ElkinP04, DBLP:conf/soda/ThorupZ06, DBLP:journals/talg/BaswanaKMP10, DBLP:conf/soda/Chechik13, VassilevskaSp15, DBLP:conf/icalp/Parter14, DBLP:journals/jacm/AbboudB17, DBLP:journals/siamcomp/AbboudBP18, DBLP:journals/corr/abs-1201-2703, DBLP:conf/esa/Agarwal14, DBLP:conf/icalp/Chechik024a, DBLP:journals/theoretics/BiloCCC0KS24, DBLP:conf/icalp/KopelowitzKR24, DBLP:conf/soda/AgarwalG13}).

Baswana, Gaur, Sen, and Upadhyay~\cite{DBLP:conf/icalp/BaswanaGSU08} considered estimations with an additive error, to obtain faster construction-time algorithms for distance oracles. They studied the following problem.
\begin{problem}\label{Plm-3}
In unweighted undirected graphs, for every $k\geq 2$ and $\beta\geq 0$, what is the smallest value $f(k,\beta)$,
for which a $(2k-1,\beta)$-approximation distance oracle with $\Ot(1)$ query time that uses $\Ot(n^{1+\frac{1}{k}})$ space can be constructed in $\Ot(m+n^{1+f(k,\beta)})$ time?
\end{problem}
Baswana, Gaur,  Sen, and Upadhyay~\cite{DBLP:conf/icalp/BaswanaGSU08}
presented an algorithm for constructing a $(2k-1,2)$-approximation distance oracle that uses $O(kn^{1+\frac{1}{k}})$ space, for $k\geq 3$. 
The algorithm runs in $O(m+kn^{\frac{3}{2}+\frac{1}{k} + \frac{1}{k(2k-2)}})$ time. 
For $k=2$, they presented an algorithm that runs in $\Ot(m+n^{23/12})$ time and constructs a $(3,14)$-approximation distance oracle that uses   $O(n^{1.5})$ space. 

Combining these algorithms with the construction of Wulff-Nilsen~\cite{DBLP:conf/soda/Wulff-Nilsen12} we get that the current state of the art for~\Cref{Plm-3} is:
$$
f(k,\beta)\leq
\begin{cases}
    \frac{1}{2}+\frac{1}{k} + \frac{1}{k(2k-2)} & 3 \leq k < 96,  \beta =2\\
    \frac{2\sqrt{6}}{\sqrt{k}}+O(k^{-1}) & 96 \leq k,  \beta =0
\end{cases}
$$

In this paper, we improve the results of Baswana, Gaur, Sen, and Upadhyay~\cite{DBLP:conf/icalp/BaswanaGSU08} and present three new construction algorithms. The first construction algorithm runs in $\Ot(km+kn^{\frac{3}{2}+\frac{1}{k-1}-\frac{1}{4k-6}})$ and constructs a $(2k-1,2)$-approximation distance oracle, improving the $\Ot(m+kn^{\frac{3}{2}+\frac{1}{k} + \frac{1}{k(2k-2)}})$ construction time of~\cite{DBLP:conf/icalp/BaswanaGSU08}, for every $k>2$.
The second algorithm runs in $\Ot(km+kn^{\frac{3}{2}+\frac{1}{2k}+\frac{3.5k-4.5}{k(4k^2-8k+3)}})$ time and constructs a $(2k-1,2k-2)$-approximation distance oracle.
The third algorithm runs in $\Ot(\sqrt{k}m+kn^{1+\frac{2}{\sqrt{k}}})$ time and constructs a $(2k-1,2k-1)$-approximation oracle.
Combining these three algorithms, we significantly improve upon the current state of the art and obtain the following values for $f(k, \beta)$:

$$
f(k, \beta) \leq
\begin{cases}
    11/12 & k =2, \beta =14\\
    \frac{1}{2}+\frac{1}{k-1}-\frac{1}{4k-6} & 3 \leq k < 13,  \beta=2\\
    \frac{1}{2}+\frac{1}{2k}+\frac{3.5k-4.5}{k(4k^2-8k+3)} & 3 \leq k < 13,  \beta=2k-2\\
    \frac{2}{\sqrt{k}} & 13 \leq k, \beta=2k-1
\end{cases}
$$
In addition, we present an $O(mn^\eps+n^{1+1/k+\eps})$ time algorithm for constructing a $((2k-1)(1+\eps),\beta)$-distance oracle, improving the stretch of the $O(m+n^{1+1/k+\eps})$ time algorithm of~\cite{DBLP:conf/soda/Wulff-Nilsen12} that constructs an $O(k)$-distance oracle.

\ifC \else
In~\Cref{tab:construction_unweighted}, we summarize our new construction algorithms for unweighted undirected graphs and compare them to the previously known results.
In~\Cref{tab:examples-unweighted} and~\Cref{fig:theorem_unweighted}, we compare the previous results of~\cite{DBLP:conf/icalp/BaswanaGSU08} to our new results regarding the sparsest graphs for which there exists a linear time construction algorithm. \fi

The rest of this paper is organized as follows. In the next section, we present some necessary preliminaries. In~\Cref{S-OV}, we overview our main results and techniques. In~\Cref{S-NT} we present our new tools, and specifically in~\Cref{S-hado} we present our new hierarchical distance oracle. In~\Cref{S-Construction-weighted} we present our construction algorithms for weighted graphs\ifC, in the full version \cite{kadria2025faster}, we then present the new construction algorithms for unweighted graphs\else, and in~\Cref{S-Construction-unweighted} we present our construction algorithms for unweighted graphs\fi.
For convenience, in~\Cref{S-comparison} we include tables and graphs that summarize all the theorems.

\section{Preliminaries}\label{S-Prel}
Let $G=(V,E)$ be an undirected graph with $n=|V|$ vertices and $m=|E|$ edges. Throughout the paper, we consider unweighted and weighted graphs with non-negative real edge weights. 
Let $u,v \in V$, we denote $d_G(u,v)$ as the distance between $u$ and $v$ in $G$.
Let $P(u,v)$ be the shortest path between $u$ and $v$.
Let $N(u)\subseteq V$ be the neighbors of $u$ and let $deg(u)=|N(u)|$ be the degree of $u$.

Let $S\subseteq V$.
The distance $d(u,S)$ between $u$ and $S$ is the distance between $u$ and the closest vertex to $u$ from $S$, that is, 
$d(u,S)= \min_{s\in S}(d(u,s))$. We denote $d(u, S)$ with $h_S(u)$, and when $S$ is clear from the context, we write $h(u)$.
Let $p_S(u)=\arg \min_{s\in S}(d(u,s))$ (ties are broken in favor of the vertex with a smaller identifier). 
Let $B_S(u)=\{v\in V \mid d_G(u,v) < d_G(u,S) \}$.
Let $E_S(u)$ be the set of edges incident to $u$ whose weight is less than $h_S(u)$, that is $E_S(u)=\{(u,v)\in E \mid \ell(u,v)< h_S(u)\}$.
Let $E_S=\bigcup_{v\in V}E_S(v)$ and $G_{S}=(V,E_S)$. See \Cref{fig:G_S} for an illustration. 

The following lemmas are standard results and are included here for completeness.
\begin{lemma}\label{L-Construct-S}
    Let $G = (V, E)$ be a weighted undirected graph. Given a non-empty
    subset $S \subseteq V$, $p_S(u)$ and $h_S(u)$ can be computed in $O(m + n\log n)$ time for all vertices $u \in V$.
\end{lemma}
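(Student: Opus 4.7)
The plan is to reduce the problem to a single run of Dijkstra's algorithm from a virtual source. Concretely, I would augment $G$ by adding a new vertex $s^*$ and a zero-weight edge from $s^*$ to every vertex $s \in S$, forming a graph $G^*$ on $n+1$ vertices and $m+|S| = O(m+n)$ edges. Then for any $u \in V$ we have $d_{G^*}(s^*, u) = \min_{s \in S} d_G(s,u) = h_S(u)$, since the shortest $s^*$--$u$ path in $G^*$ must start by traversing exactly one zero-weight edge $(s^*, s)$ for some $s \in S$ and then follow a shortest $s$--$u$ path in $G$.

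Next, I would run Dijkstra's algorithm from $s^*$ in $G^*$, implemented with a Fibonacci heap, which runs in $O(|E(G^*)| + |V(G^*)| \log |V(G^*)|) = O(m + n\log n)$ time. During the execution, for each vertex $u$ I would maintain not only the tentative distance but also a pointer $\pi(u)$ to the source in $S$ from which the current best path originates: whenever an edge $(s^*, s)$ is relaxed, set $\pi(s) := s$, and whenever a non-virtual edge $(x, y)$ is relaxed with $x$ already settled, set $\pi(y) := \pi(x)$. At termination, $\pi(u) = p_S(u)$, with the standard tie-breaking convention enforced by breaking ties in the priority queue (or post-processing) in favor of the smaller identifier.

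The only subtlety is that $p_S(u)$ is defined with a specific tie-breaking rule. To enforce it, I would use lexicographic keys $(h_S(u), \mathrm{id}(p_S(u)))$ during relaxations, so that when two candidate paths to $u$ yield the same distance, the one whose source has the smaller identifier wins. This does not affect the asymptotic runtime.

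No significant obstacle arises here; the main thing to verify is the correctness of the virtual-source reduction (which is immediate from the observation that zero-weight edges to $S$ make $d_{G^*}(s^*, \cdot)$ coincide with $h_S(\cdot)$) and the consistency of the tie-breaking. The total time is therefore $O(m + n\log n)$, as claimed.
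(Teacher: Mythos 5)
Your proposal is correct and matches the paper's proof: both add a virtual source with zero-weight edges to all of $S$ and run a single Dijkstra computation to obtain $h_S(u)$ and, via the first vertex of each shortest path from the virtual source, $p_S(u)$. Your additional discussion of tie-breaking via lexicographic keys is a harmless elaboration the paper leaves implicit.
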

\begin{proof}
    Let $G'=(V\cup \{x\}, E\cup \{(x,s,0)\mid s\in S\})$. 
    Running Dijkstra from $x$ in $G'$ computes $d_{G'}(x,u)=d_G(S,u)=h_{S}(u)$, and identifies $p_S(u)$, the second vertex in $P_{G'}(x,u)$, for every $u\in V$.
\end{proof}

\begin{lemma}\label{L-Create-G_S}
    Let $G = (V, E)$ be a weighted undirected graph and
    let $S\subseteq V$ be a non-empty
    set, the graph $G_{S}$ can be computed in $O(m + n \log n)$ time.
\end{lemma}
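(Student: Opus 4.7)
The plan is to reduce the construction of $G_S$ to a single Dijkstra computation followed by a linear pass over the edges. By the definition in the preliminaries, an edge $(u,v)$ of weight $w$ belongs to $E_S = \bigcup_{x\in V} E_S(x)$ if and only if it lies in $E_S(u)$ or in $E_S(v)$, which happens iff $w \leq h_S(u)$ or $w \leq h_S(v)$, i.e., iff $w \leq \max(h_S(u), h_S(v))$. Hence the whole task decomposes into (i) computing $h_S$ on every vertex and (ii) filtering the edges of $G$ by this simple pointwise test.

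For step (i), I would invoke \autoref{L-Construct-S} directly: it produces $h_S(u)$ for every $u\in V$ in $O(m + n\log n)$ time, via one Dijkstra call from an auxiliary source connected to $S$ through zero-weight edges. For step (ii), I would scan the edge list of $G$ once; for each $(u,v,w)$ I would test whether $w \leq \max(h_S(u), h_S(v))$ and, if so, append the edge to an adjacency-list representation of $G_S$. This scan costs $O(m)$, and no deduplication is needed because each edge of $G$ is examined only once during the scan.

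Adding the two costs yields the claimed $O(m + n\log n)$ bound. I do not anticipate a real obstacle: the only subtlety worth flagging explicitly is the translation of the union definition $E_S = \bigcup_v E_S(v)$ into the $\max$ condition on endpoints, which ensures we need not iterate over adjacency lists from both sides separately (though doing so would also work in $O(m)$ total). Since $G_S$ is output in the same edge-list/adjacency-list representation as $G$, no additional bookkeeping is required.
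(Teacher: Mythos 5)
Your proof is correct and follows essentially the same route as the paper: one invocation of \Cref{L-Construct-S} to obtain $h_S(u)$ for all $u$ in $O(m+n\log n)$ time, followed by a single $O(m)$ pass that keeps or discards each edge by comparing its weight to the $h_S$-values of its two endpoints. Your $w \leq \max(h_S(u),h_S(v))$ test is in fact the faithful reading of the union definition $E_S=\bigcup_{v}E_S(v)$; the paper's own proof writes $\min$ in the corresponding test, which appears to be a typo, since the $\max$ version is what the distance-preservation property in \Cref{L-G_S-size} relies on.
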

\begin{proof}
    Using~\Cref{L-Construct-S}, we obtain $h_S(u)$ for every $u\in V$ in $O(m + n\log n)$ time. In $O(m)$ time we iterate every edge $(u,v)$ and remove it if $\ell(u,v) \ge \max(h_S(u), h_S(v))$.
\end{proof}

Baswana and Kavitha~\cite{DBLP:journals/siamcomp/BaswanaK10} proved the following useful lemma on the graph $G_S=(V, E_S)$.

\begin{lemma}[\cite{DBLP:journals/siamcomp/BaswanaK10}]\label{L-G_S-size}
Let $G = (V, E)$ be a weighted undirected graph and let $S\subseteq V$ be a non-empty set. 
For any two vertices $u,v\in V$, if $v\in B_S(u)$ then
$d_{G_{S}}(u, v) = d_G(u, v)$. If $S$ is obtained by picking each vertex independently with probability $p$, then $E_S$ has
expected size $O(n/p)$.
\end{lemma}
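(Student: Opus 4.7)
The plan is to handle the two parts of the lemma separately.

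For the distance-preservation claim, I would show that every edge of the shortest $u$--$v$ path in $G$ already lies in $G_S$; combined with the trivial $d_{G_S}(u,v) \ge d_G(u,v)$ (since $G_S$ is a subgraph of $G$), this yields the equality. Let $u = x_0, x_1, \ldots, x_\ell = v$ be such a path. The key ingredient is the triangle-type bound $h_S(x) \ge h_S(u) - d_G(u, x)$, obtained from $d_G(x, s) \ge d_G(u, s) - d_G(u, x)$ for every $s \in S$ and then minimising over $s$. Applying this at $x = x_i$ together with $d_G(u, x_i) \le d_G(u, v) < h_S(u)$ (the last strict inequality being exactly the hypothesis $v \in B_S(u)$) gives
\[
h_S(x_i) \;\ge\; h_S(u) - d_G(u, x_i) \;>\; d_G(u, v) - d_G(u, x_i) \;=\; d_G(x_i, v) \;\ge\; w(x_i, x_{i+1}),
\]
where the equality uses that $x_i$ lies on a shortest $u$--$v$ path and the final inequality uses that $x_{i+1}$ lies on a shortest $x_i$--$v$ path. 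Hence each edge $(x_i, x_{i+1})$ is in $E_S(x_i) \subseteq E_S$, so the whole path survives in $G_S$.

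For the expected-size bound, I would bound $|E_S(v)|$ through the open ball $B_S(v)$: any edge in $E_S(v)$ goes from $v$ to a neighbour $u$ with $d_G(v,u) \le w(v,u) \le h_S(v)$, which places $u$ in (the closure of) $B_S(v)$, so $|E_S(v)| \le |B_S(v)|$ up to ties. To control $\mathbb{E}[|B_S(v)|]$, I would order the vertices as $v_0 = v, v_1, v_2, \ldots$ by non-decreasing $d_G(v, \cdot)$, breaking ties by identifier, and let $r$ be the least index with $v_r \in S$. Then $h_S(v) = d_G(v, v_r)$, so $B_S(v) \subseteq \{v_0, \ldots, v_{r-1}\}$ and hence $|B_S(v)| \le r$. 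Since the sampling of $S$ is independent of this deterministic ordering, $\Pr[r \ge k] \le (1 - p)^k$, so $\mathbb{E}[r] \le 1/p$. Summing over $v$ and using $|E_S| \le \sum_{v \in V} |E_S(v)|$ yields $\mathbb{E}[|E_S|] \le n/p$, as claimed.

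The one delicate point---and the place where I would be most careful---is the passage from the ``weight at most $h_S(v)$'' condition defining $E_S(v)$ to the open-ball containment used in the bound $|E_S(v)| \le |B_S(v)|$; this is sensitive to ties in edge weights (e.g.\ a complete graph with uniform weights), but is handled either by the strict-inequality convention in the original Baswana--Kavitha formulation or by a consistent lexicographic tie-breaking rule. Apart from this bookkeeping, the proof is a direct combination of the triangle inequality for $h_S$ and a standard first-success geometric-series estimate, and I do not expect further obstacles.
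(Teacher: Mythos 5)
The paper states this lemma purely as a citation to Baswana and Kavitha and supplies no proof, so there is nothing internal to compare against; your argument is the standard one from the cited source, and both halves are sound. For the distance-preservation part, the chain $h_S(x_i)\ge h_S(u)-d_G(u,x_i)>d_G(x_i,v)\ge w(x_i,x_{i+1})$ is exactly right, and for the size bound the reduction to the first sampled index $r$ in the distance ordering correctly gives $\mathbb{E}\,[\,|B_S(v)|\,]\le\mathbb{E}[r]\le 1/p$. The one point you flag is not mere bookkeeping but a genuine defect of the statement as written in this paper: with the non-strict condition ``weight at most $h_S(v)$'' in the definition of $E_S(v)$, the unit-weight complete graph yields $|E_S|=\Theta(n^2)$ whenever $V\setminus S$ has $\Omega(n)$ vertices, so the $O(n/p)$ bound is literally false here. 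It is restored under the original strict-inequality formulation, or after making edge weights distinct by a consistent perturbation, in which case at most one edge incident to $v$ has weight exactly $h_S(v)$ and your injection $|E_S(v)|\le|B_S(v)|$ goes through; since your path argument in the first part actually establishes the strict inequality $w(x_i,x_{i+1})<h_S(x_i)$, the strict-inequality definition also preserves that half, so your proof is complete once that convention is fixed.
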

\begin{figure}
    \centering
    \tikzset{every picture/.style={line width=0.75pt}} 
    \input{figures/mathcha_G_S}

    \caption{Illustration of $G_S$, the distance is the Euclidean distance.}
    \label{fig:G_S}
\end{figure}When 

Throughout the paper, we occasionally distinguish between the case that $P(u,v)\not\subseteq G_{S}$ and the case that $P(u,v)\subseteq G_{S}$. Next, we prove a useful bound on $h_S(u)$ and $h_S(v)$ when $P(u,v)\not\subseteq G_{S}$ (see~\Cref{fig:no-intersect}).
\begin{lemma}\label{L-No-Intersect-Weighted}
    Let $G$ be a weighted undirected graph.
    If $P(u,v)\not\subseteq G_{S}$, then $\max(h(u),h(v))\le d_G(u,v)$
\end{lemma}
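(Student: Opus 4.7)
The plan is to locate an edge on $P(u,v)$ that lies outside $E_S$ and, from its missingness, to bound $h_S(u)$ and $h_S(v)$ separately via the basic triangle inequality $h_S(a) \le d_G(a,b) + h_S(b)$, valid for any $a, b \in V$ because one can always reach $S$ from $a$ by first traveling to $b$ and then to the closest vertex in $S$ from $b$.

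First, from the hypothesis $P(u,v) \not\subseteq G_{S}$, I would pick any edge $(x, y)$ on $P(u, v)$ such that $(x, y) \notin E_S$. Unwinding the definition $E_S = \bigcup_{z \in V} E_S(z)$, absence from $E_S$ means in particular absence from both $E_S(x)$ and $E_S(y)$, so
$$w(x, y) > h_S(x) \quad \text{and} \quad w(x, y) > h_S(y).$$
Next, using that $u, x, y$ appear consecutively along the shortest path $P(u,v)$, I would chain the triangle inequality with the bound $h_S(x) < w(x,y)$ to obtain
$$h_S(u) \le d_G(u, x) + h_S(x) < d_G(u, x) + w(x, y) \le d_G(u, v),$$
and the symmetric computation at the other endpoint of $(x,y)$ would give
$$h_S(v) \le d_G(v, y) + h_S(y) < d_G(v, y) + w(x, y) \le d_G(u, v).$$
Taking the maximum of the two bounds would close the argument.

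I do not anticipate any genuine obstacle; the only subtle point is correctly unwinding the definition of $E_S$ to see that a missing edge must be heavier than \emph{both} of its endpoints' distances to $S$ (not merely one of them). It is precisely this two-sided bound that makes the argument symmetric between $u$ and $v$ and lets $h_S(u)$ and $h_S(v)$ be capped, respectively, by the prefix $d_G(u,x) + w(x,y)$ and the suffix $w(x,y) + d_G(y,v)$ of $d_G(u,v)$ along $P(u,v)$.
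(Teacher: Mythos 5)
Your proof is correct. You verify the key definitional point accurately: since $E_S=\bigcup_{z}E_S(z)$, an edge $(x,y)$ absent from $E_S$ must satisfy $w(x,y)>h_S(x)$ \emph{and} $w(x,y)>h_S(y)$, and the rest follows from the triangle inequality $h_S(u)\le d_G(u,x)+h_S(x)$ together with the decomposition $d_G(u,v)=d_G(u,x)+w(x,y)+d_G(y,v)$ along the shortest path (with non-negative weights). Your route differs from the paper's, which is a one-liner riding on the ball property: the paper observes that $P(u,v)\not\subseteq G_S$ forces $v\notin B_S(u)$ (this is the contrapositive of the Baswana--Kavitha fact, \autoref{L-G_S-size}, that $v\in B_S(u)$ implies the shortest path survives in $G_S$), and then $h(u)\le d(u,v)$ is immediate from the definition of $B_S(u)$; symmetrically for $v$. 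Your argument is more self-contained --- it does not invoke the ball lemma and instead re-derives the needed inequality locally at the missing edge --- and it even yields the slightly stronger strict inequality $\max(h(u),h(v))<d_G(u,v)$ when $d_G(y,v)$ or $d_G(u,x)$ is positive; the paper's version is shorter because it delegates the work to an already-cited lemma. Both are valid proofs of the stated bound.
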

\begin{proof}
Since $P(u,v)\not\subseteq G_{S}$ there exists an edge $(w,z)\in P(u,v)$ such that $(w,z)\notin G_{S}$.  
Wlog, we assume that $d(u,w)\le d(u,z)$.
By the definition of $G_{S}$, the fact that $(w,z)\notin G_{S}$ implies that
\(
    h_{S}(w)\le \ell(w,z).
\)
Since $(w,z)\in P(w,v)$, we have
\(
    \ell(w,z)\le d(w,v).
\)
Moreover, by the triangle inequality for $h_S(u)$, we get that
\(
    h_{S}(u)\le d(u,w)+h_{S}(w).
\)
By combining these inequalities, we get:
\[
    h_{S}(u) 
    \;\le\; d(u,w)+h_{S}(w) 
    \;\le\; d(u,w)+\ell(w,z) 
    \;\le\; d(u,w)+d(w,v)
    \;\le\; d(u,v)
\]
Where the equality follows from the fact that $w\in P(u,v)$, as required.
Using symmetric arguments, we can also get that $h(v) \le d(u,v)$. 
\end{proof}

\colorlet{cpath}{black}            
\definecolor{cpruned}{RGB}{208,2,27} 
\colorlet{cset}{blue!60!black}     

\begin{figure}[t]
\centering
\begin{tikzpicture}[
    vert/.style  = {circle, draw=black, fill=white, minimum size=6pt, inner sep=0pt, line width=0.8pt},
    svert/.style = {circle, draw=cset, fill=cset, minimum size=6pt, inner sep=0pt},
    sub/.style   = {cpath, line width=1pt,
                    decorate, decoration={snake, amplitude=1.2pt, segment length=7pt,
                    pre length=4pt, post length=4pt}},
    branch/.style= {cset, line width=1pt},
]

\coordinate (u) at (0,0);
\coordinate (w) at (2.2,0);
\coordinate (z) at (3.4,0);
\coordinate (v) at (5.6,0);
\coordinate (s) at (2.2,1.35);

\draw[sub] (u) -- (w);
\draw[sub] (z) -- (v);
\draw[cpruned, dashed, line width=1.4pt] (w) -- (z);
\draw[branch] (w) -- (s);

\node[vert] at (u) {}; \node[vert] at (w) {}; \node[vert] at (z) {}; \node[vert] at (v) {};
\node[svert] at (s) {};

\node[font=\small, anchor=north] at ($(u)+(0,-0.10)$) {$u$};
\node[font=\small, anchor=north] at ($(w)+(0,-0.10)$) {$w$};
\node[font=\small, anchor=north] at ($(z)+(0,-0.10)$) {$z$};
\node[font=\small, anchor=north] at ($(v)+(0,-0.10)$) {$v$};
\node[font=\small, cset, anchor=south] at ($(s)+(0,0.10)$) {$p_S(w)$};
\node[font=\scriptsize, cset, anchor=east] at (2.07,0.66) {$h_S(w)$};
\node[cpruned, font=\scriptsize] at (2.8,0.30)  {$\wt(w,z)$};
\node[cpruned, font=\scriptsize] at (2.8,-0.78) {$(w,z)\notin G_S$};

\end{tikzpicture}
\caption{
Illustration of \Cref{L-No-Intersect-Weighted}: if $P(u,v)\not\subseteq G_S$ then $h_S(u)\le d(u,v)$. Let $(w,z)\not\in G_S$, then $h_S(w)\le\wt(w,z)$.
}
\label{fig:no-intersect}
\end{figure}

For a real value $\alpha$, an $\alpha$-stretch spanner of $G$ is a subgraph $H=(V,E_H)$ of $G$ such that for every $u,v\in V$, $d_G(u,v) \le d_H(u,v) \le \alpha \cdot d_G(u,v)$.
An $(\alpha, \beta)$-approximation spanner of $G$ is a subgraph such that for every $u,v\in V$, $d_G(u,v) \le d_H(u,v) \le \alpha \cdot d_G(u,v) + \beta$.
In the next lemma, we describe the spanner construction that we use in our algorithms.
\begin{lemma}[\cite{DBLP:journals/rsa/BaswanaS07}]\label{L-Spanner-2k-1}
For any integer $k$, for a weighted graph $G$, a $(2k-1)$-spanner with $O(n^{1+\frac{1}{k}})$ edges can be computed in $O(km)$ time.
\end{lemma}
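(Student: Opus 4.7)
The plan is to reproduce the Baswana--Sen cluster-based construction~\cite{DBLP:journals/rsa/BaswanaS07}. I would maintain a hierarchy of clusterings $\mathcal{C}_0 \supseteq \mathcal{C}_1 \supseteq \cdots \supseteq \mathcal{C}_{k-1}$, with $\mathcal{C}_0 = \{\{v\}: v \in V\}$ and, for each $i \ge 1$, $\mathcal{C}_i$ obtained from $\mathcal{C}_{i-1}$ by keeping each cluster independently with probability $n^{-1/k}$. Every cluster in $\mathcal{C}_i$ inherits a distinguished \emph{center} from its level-$(i-1)$ ancestor. The spanner is grown level by level: at level $i$, each vertex $v$ inspects, for every cluster $C \in \mathcal{C}_{i-1}$ adjacent to $v$, the lightest edge from $v$ into $C$. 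If $v$ has no neighbor in any surviving cluster of $\mathcal{C}_i$, then the lightest edge from $v$ to each of its $\mathcal{C}_{i-1}$-neighbor clusters is added to the spanner. Otherwise, let $C^\ast$ be the closest surviving cluster in $\mathcal{C}_i$ adjacent to $v$; we add the lightest edge $(v,C^\ast)$ and, for every $C \in \mathcal{C}_{i-1}$ whose cheapest edge from $v$ is lighter than the one to $C^\ast$, we also add the lightest $(v,C)$ edge, and then $v$ joins $C^\ast$.

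The size analysis bounds, for each vertex $v$ at each level, the number of contributed edges by a geometric random variable whose expected value is $n^{1/k}$ (each neighboring cluster survives to $\mathcal{C}_i$ independently with probability $n^{-1/k}$). Summing over $n$ vertices and $k$ levels yields the target $O(n^{1+1/k})$ bound. For stretch, I would prove the standard inductive invariant: every $v$ assigned to a cluster at level $i$ has a spanner path to its center of total weight at most $i \cdot w(v,C^\ast)$. Then for any edge $(u,v) \in E$ not selected into the spanner, one can route $u \to \text{center} \to v$ through the cluster hierarchy at a cost of at most $(2k-1)w(u,v)$, which gives the claimed stretch.

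The main obstacle is achieving the \emph{linear} running time rather than the $O(km)$ that a naive per-level scan would give. The key ingredient from~\cite{DBLP:journals/rsa/BaswanaS07} is an amortized implementation that processes the incident edges of each vertex in increasing weight order and, at each level, discards edges that are dominated by the chosen cluster-edge so they are not re-examined at later levels. Combined with the geometric shrinkage of surviving clusters, this charges an $O(1)$ expected cost to each edge across the whole hierarchy. I would invoke this scheduling argument as a black box; since the lemma is stated as a citation, the remaining work in the write-up is only to recall the construction and the two invariants (size and stretch) above, and to note that the running-time guarantee is exactly the one proved in~\cite{DBLP:journals/rsa/BaswanaS07}.
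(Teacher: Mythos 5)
The paper does not prove \Cref{L-Spanner-2k-1}; it imports it verbatim from Baswana and Sen~\cite{DBLP:journals/rsa/BaswanaS07}, and your sketch is a faithful reconstruction of exactly that clustering-based construction, including the geometric size bound, the standard radius/stretch invariant, and the amortized edge-processing needed to avoid the naive $O(km)$ per-level scan. Nothing further is required here beyond the citation.
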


In the following lemma, we summarize the properties of the distance oracle construction algorithm as presented in~\cite{DBLP:journals/jacm/ThorupZ05}.
\begin{lemma}[\cite{DBLP:journals/jacm/ThorupZ05}]\label{L-Regular-ADO-Construction}
    There is an $O(kn^{1+\frac{1}{k}})$-space $(2k-1)$-stretch distance oracle with $O(k)$ query time. The distance oracle is constructed in $O(mn^{\frac{1}{k}})$ expected time. 
\end{lemma}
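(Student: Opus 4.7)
The plan is to follow the classical Thorup--Zwick construction. First I would build a decreasing hierarchy of sets $V = A_0 \supseteq A_1 \supseteq \cdots \supseteq A_{k-1} \supseteq A_k = \emptyset$, where $A_i$ (for $1 \le i \le k-1$) is obtained by sampling each vertex of $A_{i-1}$ independently with probability $n^{-1/k}$; standard Chernoff arguments give $|A_i| = O(n^{1-i/k})$ w.h.p. Using~\Cref{L-Construct-S} at each level, I would compute $p_{A_i}(v)$ and $h_{A_i}(v)$ for all $v$ in $O(m + n \log n)$ time per level, totaling $O(km)$ (up to the $\log n$ factor).

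Next I would grow, for each pivot $w \in A_i \setminus A_{i+1}$, the cluster $C(w) = \{ v \in V : d(v,w) < h_{A_{i+1}}(v) \}$ by running a truncated Dijkstra from $w$ that stops relaxing through any vertex $v$ with $d(v,w) \ge h_{A_{i+1}}(v)$. Inverting this gives the bunch $B(v) = \{ w : v \in C(w) \}$, which is stored as a hash table mapping each $w \in B(v)$ to $d(v,w)$; together with the $p_{A_i}(v)$ and $h_{A_i}(v)$ values for each level this yields an oracle of expected size $O(kn^{1+1/k})$, since the classical bunch-size analysis shows $\EE[|B(v)|] = O(k n^{1/k})$.

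For the query on $(u,v)$ I would iterate $i = 0, 1, 2, \ldots$, alternating the roles of $u$ and $v$, and on each step set $w = p_{A_i}(u)$ and test whether $w \in B(v)$ via a single $O(1)$ hash lookup. Because $A_{k-1} \ne \emptyset$ w.h.p.\ and $A_{k-1} \subseteq B(v)$ for every $v$, some iteration $i \le k-1$ succeeds, so the query runs in $O(k)$ time; returning $h_{A_i}(u) + d(w, v)$ gives a $(2k-1)$-approximation by the standard telescoping triangle-inequality argument that bounds $h_{A_i}(u) \le i \cdot d(u,v)$ at the first successful level.

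The main obstacle is bounding the expected total construction time across all clusters. The key observation is that the set of vertices scanned when growing $C(w)$ for $w \in A_i \setminus A_{i+1}$ is exactly the set of $v$'s whose closest $A_{i+1}$-vertex is farther than $d(v,w)$, so the amortized work charged to a vertex $v$ across level $i$ is proportional to $\deg(v)$ times the number of pivots $w \in A_i \setminus A_{i+1}$ closer to $v$ than $p_{A_{i+1}}(v)$. A ranking argument over the random sampling shows this expectation is $O(n^{1/k})$ per level per vertex, so summing over the $k$ levels and over all vertices yields $O(k m n^{1/k})$ total expected time; treating the dependence on $k$ as a lower-order factor gives the $O(mn^{1/k})$ bound stated in the lemma.
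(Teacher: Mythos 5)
Your proposal is a faithful reconstruction of the classical Thorup--Zwick construction (sampled hierarchy $A_0\supseteq\cdots\supseteq A_k=\emptyset$, truncated-Dijkstra clusters, bunches with expected size $O(kn^{1/k})$, and the alternating query with the telescoping bound $h_i\le i\cdot d(u,v)$), which is exactly the result the paper invokes by citation without reproving. The argument is correct, including your remark that the extra factor of $k$ in the $O(kmn^{1/k})$ construction time is absorbed in the lemma's stated bound.
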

In our construction algorithms, we refer to the distance oracle of Thorup and Zwick~\cite{DBLP:journals/jacm/ThorupZ05}, presented in~\Cref{L-Regular-ADO-Construction}, as $\ADO(G,k)$. 
We denote a query to $\ADO(G, k)$ by\\ $\ADO(G, k)\Query(u, v)$, where $u$ and $v$ are the queried vertices. We adopt this notation for queries to all other distance oracles developed in this paper.

\section{Overview}\label{S-OV}
This section provides an overview of our construction algorithms and the key new tools we developed. To put our contribution in context, we begin by describing the framework underlying both construction algorithms of Wulff-Nilsen~\cite{DBLP:conf/soda/Wulff-Nilsen12}. This framework consists of the following three steps:

\begin{enumerate}[label=(Step \arabic*), leftmargin=*]
\item Sample a set $S$ and create an oracle that upon a query on $u,v\in V$ produces $\hat{d}_{G_S}(u,v)$.\label{i-1}
\item Compute a spanner $H$.\label{i-2}
\item Create an oracle that upon a query on $s_1,s_2\in S$ produces $\hat{d}_{H}(s_1,s_2)$.
\label{i-3}
\end{enumerate}

The distance oracle query of this framework returns $\min(\hat{d}_{G_S}(u,v), h_S(u) + h_S(v) + \hat{d}_H(p(u),p(v)))$.
Using this framework, Wulff-Nilsen  obtained   
an $O(\sqrt{k}m+kn^{1+\frac{2\sqrt{6}}{\sqrt{k}}+O(k^{-1})})$ construction-time algorithm as follows. In~\ref{i-1} $\ADO(G_S,k)$ is constructed. In~\ref{i-2} a $(2k'-1)$-spanner $H$ from~\Cref{L-Spanner-2k-1} is computed. In~\ref{i-3}  $\ADO(H,k'')$ is constructed.

Next, we discuss the new main techniques that allow us to improve~\ref{i-1}. 
Notice that there is a tradeoff between $|S|$ and $|E_S|$. 
When $|E_S|$ is small,~\ref{i-1} becomes faster and $|S|$ increases, resulting in a slower~\ref{i-3}. 
Given this tradeoff, the key to obtaining faster construction algorithms lies in the following optimization problem:

\begin{problem}\label{Plm-4}
For every $0 \le x_0 \le 1$, what is the densest graph for which a $(2k - 1)$-stretch distance oracle with $\Ot(1)$ query time can be constructed in $\Ot(n^{1 + x_0 + \frac{1}{k}})$ time?
\end{problem}

Thorup and Zwick's~\cite{DBLP:journals/jacm/ThorupZ05}
construction algorithm, which is used by 
Wulff-Nilsen for ~\ref{i-1}, solves~\Cref{Plm-4} for graphs with  $O(n^{1+x_0})$ edges.
We improve the construction algorithm of~\cite{DBLP:journals/jacm/ThorupZ05}, for every $x_0 > \frac{1}{k}$ and solve~\Cref{Plm-4} for graphs with  $O(n^{1+\frac{k-1}{k-2}x_0-\frac{1}{k(k-2)}})$ edges.

To obtain this improvement, we develop a new hierarchy of parameterized distance oracles, which we denote with $\hADO$ (see~\Cref{T-Construct-to-T}).
The hierarchical distance oracle uses our new parameterized distance oracle, which we denote with $\ADO_P$ (see~\Cref{L-ADO-G_S}).


To understand how $\hADO$ works, we first describe
parameterized distance oracles.
Let $S\subseteq V$. Roditty, Thorup, and Zwick~\cite{DBLP:conf/icalp/RodittyTZ05} presented an $O(m|S|^{\frac{1}{k}})$-time construction algorithm for constructing a $(2k-1)$-stretch distance oracle that uses $O(n|S|^{\frac{1}{k}})$ space and supports queries  for vertex pairs $\langle s, v \rangle \in S \times V$.
Our new parameterized distance oracle, $\ADO_P$,  is a generalization of the parameterized distance oracle of~\cite{DBLP:conf/icalp/RodittyTZ05}.  $\ADO_P$ is
constructed in the same running time, uses the same space and query time, and supports queries for vertex pairs $\langle u, v \rangle \in V \times V$ rather than $\langle s, v \rangle \in S \times V$, and returns an estimate satisfying:
$$\ADO_P(G,k,S)\Query(u,v) \le 2\min(h_S(u),h_S(v))+(2k-1)d(u,v)$$
Notice that if $u\in S$ then $h_S(u)=0$ and $\ADO_P(G,k,S)\Query(u,v) \le (2k-1)d(u,v)$, as in~\cite{DBLP:conf/icalp/RodittyTZ05}. (For more details, see~\Cref{S-Parametrized-oracle}.)

The goal of the $\hADO$ is to compute distance oracles on denser graphs. To this end, we use a hierarchy of graphs that starts with sparse graphs and ends with denser graphs. Let $u,v\in V$ and let $G_{S_0}$, for some set $S_0$, be the first graph in the hierarchy.  The main observation is that if  
we already know that $P(u,v) \not\subseteq G_{S_0}$
then, by~\Cref{L-No-Intersect-Weighted}, we have $h_{S_0}(u), h_{S_0}(v) \le d(u,v)$. 

This  allows us to consider the denser graph  $G_{S_1}$, while maintaining the same stretch and running time by computing $\ADO_P(G_{S_1}, k - 1, 
S_0)$. 
This solves the case in which $P(u,v) \subseteq G_{S_1}$. If $P(u,v) \not\subseteq G_{S_1}$ by~\Cref{L-No-Intersect-Weighted} we have that $h_{S_1}(u), h_{S_1}(v) \le d(u,v)$ and we continue the process by computing $\ADO_P(G_{S_2}, k - 1, S_1)$.  We continue this process for $\log\log n$ levels, until two consecutive steps yield the same asymptotic density.

More formally,
the hierarchical distance oracle $\hADO$ works as follows (see~\Cref{fig:hierarchy}).
Let $S_0$ be a set of size $n^{1-x_0}$, and construct $\ADO(G_{S_0},k)$.
Let $t=\ceil{\log\log{n}}$.
For every $1\leq i \leq t$, construct $\ADO_P(G_{S_i}, k-1, S_{i-1})$, (roughly speaking, we use $k-1$ since the first level is the ``$S$'' set, this is required to get the desired stretch guarantee), where $S_i$ is a random set of size $n^{1-x_i}$ and $x_i=x_0-\frac{1}{k(k-1)}+\frac{x_{i-1}}{k-1}$.\footnote{Notice that $x_{i+1} - x_{i} = \frac{x_{i} - x_{i-1}}{k-1}\Rightarrow x_{i+1}>x_i$. Therefore $|S_{i+1}|<|S_i|$ and $|E_{S_{i+1}}|>|E_{S_{i}}|$.} The query of $\hADO$ returns $\min(\ADO(G_{S_0},k)\Query(u,v),\min_{1 \leq i \leq t}(\ADO_P(G_{S_i},k-1,S_{i-1})\Query(u,v)))$.

The construction time of $\hADO$ is $\Ot(n^{1+x_0+\frac{1}{k}})$. 
To see this notice that constructing $\ADO_P(G_{S_i},k-1,S_{i-1})$ takes $O(|E_{S_i}|\times |S_{i-1}|^{\frac{1}{k-1}})=O(kn^{1+x_i+\frac{1-x_{i-1}}{k-1}})$ time, and since $x_i=x_0-\frac{1}{k(k-1)}+\frac{x_{i-1}}{k-1}$ we get
$O(kn^{1+x_0-\frac{1}{k(k-1)}+\frac{x_{i-1}}{k-1}+\frac{1-x_{i-1}}{k-1}})=O(kn^{1+x_0+\frac{1}{k}})$.

For every $u,v$, where $P(u,v)\subseteq G_{S_t}$, we have that $\hADO\Query(u,v)\leq (2k-1)d(u,v)$ (see~\Cref{fig:hierarchy}).
To see that, let $j$ be the first index for which $P(u,v)\subseteq G_{S_j}$. Since $P(u,v)\not\subseteq G_{S_{j-1}}$, it follows from~\Cref{L-No-Intersect-Weighted} that 
$\max(h_{S_{j-1}}(u), h_{S_{j-1}}(v)) \le d(u,v)$,
thus, since $\ADO_P$ was constructed with parameter $k-1$ we get $\ADO_P(G_{S_j},k-1,S_{j-1})\Query(u,v) \leq 2\min(h_{S_{j-1}}(u),h_{S_{j-1}}(v))+(2(k-1)-1)d(u,v)\leq (2k-1)d(u,v)$. (See~\Cref{L-Construct-to-T-correction} for a complete  proof.)

Our new hierarchical distance oracle, $\hADO$, obtains a truly subquadratic construction algorithm for every $k\ge 3$, even without using the three-step framework presented above. 
Roughly speaking, the construction algorithm works as follows.
Let $x_0$ be the solution for the equation $n^{1+\frac{k-1}{k-2}x_0-\frac{1}{k(k-2)}}=m$. One can verify that $x_0 = \frac{(\log_n m)(k^2 - 2k) - (k^2 - 2k - 1)}{k(k - 1)}$.

We construct $\hADO(G,k,x_0)$. 
In the hierarchy, we have that $|E_{S_t}|=n^{1+\frac{k-1}{k-2}x_0-\frac{1}{k(k-2)}}=m$. Since $|E_{S_t}|=m$ we can replace $G_{S_t}$ with $G$ without affecting the construction time. 
Since $P(u,v)\subseteq G$ for every $u,v\in V$, we obtain a $(2k-1)$-stretch distance oracle.
The construction time of this distance oracle is $\Ot(n^{1+x_0+\frac{1}{k}})=\Ot(m^{1-\frac{1}{k-1}}n^{\frac{2}{k-1}})$ (See~\Cref{L-Construction-time-no-spanner}).
We remark that this is truly subquadratic since for $m=n^{2-\eps}$ we get a running time of $\Ot(n^{(2-\eps)(1-\frac{1}{k-1})}n^{\frac{2}{k-1}})=\Ot(n^{2-\eps(1-\frac{1}{k-1})})$, which is truly subquadratic for every $k>2$.

Next, we overview our improvement in the regime of linear construction-time algorithms.
To improve the previous results, we use the framework described above and improve each step of the three-step framework as follows:
In~\ref{i-1} we use $\hADO(G,k,x_0)$ as opposed to $\ADO(G,k)$ as in~\cite{DBLP:conf/soda/Wulff-Nilsen12}.
This enables us to estimate distances in $G_{S_t}$ instead of $G_S$.
In~\ref{i-2} we 
compute a $(2k'-1)$-spanner $H$ as in~\cite{DBLP:conf/soda/Wulff-Nilsen12}. However, we augment $H$ with edges between $u$ and $p(u)$ of weight $h(u)$ for every $u\in V$.
This enables us to bound $d_H(p(u),p(v))$ with $h(u)+h(v)+(2k'-1)d(u,v)$ instead of $ (2k'-1)(h(u)+h(v)+d(u,v))$ as in~\cite{DBLP:conf/soda/Wulff-Nilsen12}.

For~\ref{i-3} we develop an additional parameterized distance oracle, $\ADO_{P'}$ (see~\Cref{L-ADO-S-Only}), that is constructed in $O(m|S|^{\frac{1}{k}})$ time, the same running time as  $\ADO_P$, but uses $O(|S|^{1+\frac{1}{k}})$ space which is significantly less space than the $O(n|S|^{\frac{1}{k}})$ space required by $\ADO_P$. The space reduction comes at a cost of supporting queries only for vertex pairs in $S\times S$.
In~\ref{i-3} we use $\ADO_{P'}(H,k-1,{S_t})$ instead of $\ADO(H,k)$ as in~\cite{DBLP:conf/soda/Wulff-Nilsen12}.

These improvements allow us to reduce the $O(\sqrt{k}m+kn^{1+\frac{2\sqrt{6}}{\sqrt{k}}+O(k^{-1})})$ construction time of~\cite{DBLP:conf/soda/Wulff-Nilsen12} to $O(\sqrt{k}m+kn^{1+\frac{2\sqrt{2}}{\sqrt{k}}})$ and the $\Ot(km+kn^{\frac{3}{2}+\frac{2}{k}+O(k^{-2})})$ construction time of~\cite{DBLP:conf/soda/Wulff-Nilsen12} to $\Ot(km+kn^{\frac{3}{2}+\frac{3}{4k-6}})$. (See \Cref{S-construction-spanner-ado,S-construction-spanner}.) 

By applying our new techniques to unweighted graphs and allowing a small additive error, we can further improve the construction time.
In particular, among our results for unweighted graphs, we obtain 
a construction algorithm that in $O(km+kn^{\frac{3}{2}+\frac{0.75}{k}+O(k^{-2})})$ time constructs a $(2k-1,2)$-approximation distance oracle, improving the $O(km+kn^{\frac{3}{2}+\frac{1}{k}+O(k^{-2})})$ of Baswana, Gaur, Sen, and  Upadhyay~\cite{DBLP:conf/icalp/BaswanaGSU08}, for every $k>2$.

We remark that all of our construction algorithms can be de-randomized using the methods presented by Roditty, Thorup, and Zwick~\cite{DBLP:conf/icalp/RodittyTZ05} at a logarithmic cost.

\section{Tools}\label{S-NT}
In this section, we develop our new techniques that we later use in our fast construction algorithms presented in Sections~\ref{S-Construction-weighted} and\ifC for the unweighted construction algorithms in the full version \cite{kadria2025faster}\else~\ref{S-Construction-unweighted}\fi. 
We begin by proving several useful properties of the classical oracle of Thorup and Zwick~\cite{DBLP:journals/jacm/ThorupZ05} that are used by our parameterized distance oracles, presented right afterwards. 
The main new technical contribution, our new hierarchical distance oracle, is presented in the last part of this section, see~\Cref{S-hado}.

\subsection{Properties of the classical distance oracle of~\cite{DBLP:journals/jacm/ThorupZ05}}
In this section, we discuss the internals and some properties of the Thorup and Zwick distance oracle~\cite{DBLP:journals/jacm/ThorupZ05}, which are then applied to our parameterized distance oracles. 

Given an integer parameter $k\geq 2$, we can compute in $O(n)$ time vertex sets $A_1,\ldots, A_{k-1}$, such that $|A_i|=O(n^{1-i/k})$, for every $i\in [1,k-1]$. Let $A_0=V$ and let $A_k=\emptyset$. Let $p_i(u)$ be the closest vertex to $u$ from the set $A_i$, for every $i\in [0,k]$.
For every $i\in [0,k-1]$, let the \textit{bunch} $B_i(u)$ be the set $\{ v \in A_i  \mid   d(u,v)<d(u,p_{i+1}(u))\}$. Let $B(u)=\bigcup_{i=0}^{k-1} B_i(u)$. Let $w\in A_{i}\setminus A_{i+1}$ and let  $C(w)=\{ u \mid w\in B_i(u)\}$. 

\begin{lemma}[\cite{DBLP:journals/jacm/ThorupZ05}]\label{L-TZ-Size}
The size of $B(u)$ is $O(kn^{\frac{1}{k}})$, whp (with high probability\footnote{
We say that an event occurs \emph{with high probability (whp)} if it occurs with probability at least $1 - \frac{1}{n^c}$ for some constant $c > 0$, where $n$ is the size of the input (e.g., the number of vertices or edges in the graph).
}) and the cost of computing $B(u)$, for every $u\in V$, is $O(mn^{\frac{1}{k}})$ whp.
\end{lemma}
Next, for completeness, we describe the query of~\cite{DBLP:journals/jacm/ThorupZ05} ($\ADOQuery$). Later, we generalize its correctness proofs, and this generalization helps us to achieve the oracles of~\Cref{S-Parametrized-oracle}.

The procedure $\ADOQuery$ gets as an input $u,v\in V$, and using their bunches gives an estimation $\hat{d}(u,v)$. Let $h_i(u)$ be $h_{A_i}(u)$.
For $i\in [k]$, if $p_i(u)\in B_i(v)$ return $h_i(u)+d(p_i(u),v)$, otherwise swap $u$ and $v$ and continue iterating the for-loop.
A pseudo-code for $\ADOQuery$ is presented in~\Cref{A-ADO-Query}.
\begin{algorithm2e}[t] 
\caption{$\ADOQuery(u,v)$ - The query of the Thorup-Zwick\cite{DBLP:journals/jacm/ThorupZ05} distance oracle}\label{A-ADO-Query}
\For{$i\in [k]$} {
    \If{$p_{i}(u)\in B_i(v)$}{\Return $h_{i}(u)+d(p_{i}(u),v)$}
    $(u,v) \gets (v,u)$
}
\end{algorithm2e}
Let $\hat{d}(u,v)$ be $\min(\ADOQuery(u,v), \ADOQuery(v,u))$.
The following lemmas prove properties of $\hat{d}(u,v)$, which generalizes the proof of~\cite{DBLP:journals/jacm/ThorupZ05}.
\begin{lemma} \label{L-ADO.Query-Correctness-1}
    $\hat{d}(u,v) \le 2\min(h_{k-1}(u), h_{k-1}(v)) + d(u,v)$
\end{lemma}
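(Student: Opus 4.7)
The plan is to exhibit one of $\ADOQuery(u,v)$ or $\ADOQuery(v,u)$ whose return value is at most $2\min(h_{k-1}(u), h_{k-1}(v)) + d(u,v)$; taking the minimum with the other direction then gives the stated bound on $\hat{d}(u,v)$. Assume without loss of generality that $h_{k-1}(u) \le h_{k-1}(v)$, and let $Q^{\star}$ denote the direction in which, at the final iteration $i = k-1$, the current ``$u$''-variable coincides with the original $u$. Since each iteration swaps the two variables, exactly one of the two directions has this property, so I will focus the analysis on $Q^{\star}$ and on the iteration $i^{\star}$ at which it returns.

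For $i^{\star} = k-1$ the check passes unconditionally because $B_{k-1}(\cdot) = A_{k-1}$, and the triangle inequality gives $h_{k-1}(u) + d(p_{k-1}(u), v) \le 2 h_{k-1}(u) + d(u,v)$ directly. For $i^{\star} < k-1$ with the current ``$u$''-variable still equal to the original $u$, the same triangle inequality combined with the monotonicity $h_{i^{\star}}(u) \le h_{k-1}(u)$ (a consequence of $A_{k-1} \subseteq A_{i^{\star}}$) again bounds the return value by $2 h_{i^{\star}}(u) + d(u,v) \le 2 h_{k-1}(u) + d(u,v)$.

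The remaining subcase, $i^{\star} < k-1$ with the current ``$u$''-variable equal to the original $v$, is the main obstacle, because a naive triangle inequality only yields $2 h_{k-1}(v) + d(u,v)$, which is in general too weak. Here I would instead exploit the fact that the check at iteration $i^{\star}$ passed, i.e.\ $p_{i^{\star}}(v) \in B_{i^{\star}}(u)$, which by the definition of a bunch forces $d(u, p_{i^{\star}}(v)) < h_{i^{\star}+1}(u) \le h_{k-1}(u)$. Combining this with $h_{i^{\star}}(v) \le h_{k-1}(v) \le h_{k-1}(u) + d(u,v)$ (the last step being the standard triangle inequality for distances to a fixed set $A_{k-1}$), the return value $h_{i^{\star}}(v) + d(p_{i^{\star}}(v), u)$ is at most $h_{k-1}(v) + h_{k-1}(u) \le 2 h_{k-1}(u) + d(u,v)$, as required. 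The key technical insight is that the bunch-membership condition acts as a ``re-centering'' step that lets us reference $h_{k-1}$ of the smaller endpoint even when variable swapping has placed the other endpoint in the ``$u$''-role; without it, one would only obtain a bound in terms of $\max(h_{k-1}(u),h_{k-1}(v))$.
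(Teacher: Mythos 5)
Your proof is correct and follows the same core route as the paper's: the check at level $k-1$ always passes since $A_k=\emptyset$, the triangle inequality bounds the returned value by $2h_{k-1}(\cdot)+d(u,v)$, and taking the minimum over the two query directions yields the $\min$ in the statement. You additionally work out the early-termination subcases --- including the swapped one, which you resolve via the bunch-membership condition $d(u,p_{i^{\star}}(v))<h_{i^{\star}+1}(u)$ --- whereas the paper's two-line proof leaves these implicit; this extra care is sound and fills a genuine gap in the terser argument.
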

\begin{proof}
    Since $A_k=\emptyset$, we have that $p_{k-1}(u)\in B_{k-1}(v)$.
    Thus, $\hat{d}(u,v) \le h_{k-1}(u) + d(p_{k-1}(u),v) \le 2h_{k-1}(u) + d(u,v)$, where the last inequality follows from the triangle inequality. 
    Using symmetrical arguments, we have that  $\hat{d}(u,v) \le 2h_{k-1}(v) + d(u,v)$, as required.
\end{proof}
\begin{lemma}\label{L-ADO.Query-Correctness-2}
    For every integer $1 \le i \le k$, one of the following holds.
    \begin{itemize}
        \item $\min(h_i(u),h_i(v)) \le \min(h_{i-1}(u),h_{i-1}(v)) + d(u,v)$
        \item $\hat{d}(u,v) \le 2\min(h_{i-1}(u),h_{i-1}(v)) + d(u,v)$
    \end{itemize}
\end{lemma}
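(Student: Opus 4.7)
The plan is to assume without loss of generality that $h_{i-1}(u) \le h_{i-1}(v)$, so $\min(h_{i-1}(u),h_{i-1}(v)) = h_{i-1}(u)$, and to argue that whenever condition (a) fails, condition (b) must hold. If $\min(h_i(u),h_i(v)) \le h_{i-1}(u)+d(u,v)$, then (a) is immediate. Otherwise both $h_i(u)$ and $h_i(v)$ strictly exceed $h_{i-1}(u)+d(u,v)$, and the triangle inequality gives $d(v,p_{i-1}(u)) \le h_{i-1}(u)+d(u,v) < h_i(v) = d(v,A_i)$, which combined with $p_{i-1}(u)\in A_{i-1}$ yields $p_{i-1}(u)\in B_{i-1}(v)$.

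I would then identify the call $Q \in \{\ADOQuery(u,v), \ADOQuery(v,u)\}$ whose iteration $i-1$ checks precisely $p_{i-1}(u)\in B_{i-1}(v)$: by tracing the swap pattern in~\Cref{A-ADO-Query}, $Q = \ADOQuery(u,v)$ when $i-1$ is odd and $Q = \ADOQuery(v,u)$ when $i-1$ is even. Let $V$ be the value $Q$ eventually returns. In the easy sub-case where $Q$ reaches iteration $i-1$ without returning earlier, the check succeeds, so $V = h_{i-1}(u) + d(p_{i-1}(u),v) \le 2h_{i-1}(u) + d(u,v)$ by the triangle inequality, and since $\hat{d}(u,v) \le V$ we obtain (b).

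The main obstacle is to bound $V$ when $Q$ returns at some earlier iteration $j < i-1$. My plan is to split on whether $j$ has the same parity as $i-1$ in $Q$'s iteration sequence. If it does, the check at iteration $j$ of $Q$ is $p_j(u)\in B_j(v)$, and $V = h_j(u) + d(p_j(u),v) \le 2h_j(u) + d(u,v) \le 2h_{i-1}(u) + d(u,v)$ by monotonicity $h_j(u) \le h_{i-1}(u)$. If $j$ has the opposite parity, the check is $p_j(v)\in B_j(u)$, so $V = h_j(v) + d(p_j(v),u)$ and the success condition gives $d(p_j(v),u) < h_{j+1}(u)$. Combining this with the triangle-inequality bound $d(u,p_j(v)) \ge h_j(v) - d(u,v)$ yields $h_j(v) < h_{j+1}(u) + d(u,v)$, and therefore
\[
V < h_j(v) + h_{j+1}(u) < 2h_{j+1}(u) + d(u,v) \le 2h_{i-1}(u) + d(u,v),
\]
using $j+1 \le i-1$.

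In every sub-case $V \le 2h_{i-1}(u) + d(u,v)$, so $\hat{d}(u,v) \le V \le 2\min(h_{i-1}(u), h_{i-1}(v)) + d(u,v)$, establishing (b). The subtle point I expect to be tricky is using the tight bound $V < h_j(v) + h_{j+1}(u)$ rather than the naive $V \le 2h_j(v) + d(u,v)$ in the opposite-parity case; this refinement, which exploits the success condition of the early-returning check together with the triangle inequality, is what keeps the coefficient on $d(u,v)$ at $1$ in the final bound instead of blowing up to $3d(u,v)$.
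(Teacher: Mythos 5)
Your proof is correct, and its core is the same dichotomy the paper uses: after the WLOG ordering $h_{i-1}(u)\le h_{i-1}(v)$, everything hinges on whether $p_{i-1}(u)$ lands in the bunch of $v$ (you phrase it as the contrapositive --- if the first bullet fails then $p_{i-1}(u)\in B_{i-1}(v)$ --- while the paper cases directly on membership, but these are the same observation, and both branches close with the same triangle inequalities). Where you genuinely diverge is in justifying the second bullet. The paper simply notes that $d(p_{i-1}(u),v)$ is stored in the oracle and concludes $\hat d(u,v)\le h_{i-1}(u)+d(p_{i-1}(u),v)$, implicitly treating the query as returning the best stored estimate; but $\ADOQuery$ as written in \Cref{A-ADO-Query} returns at its \emph{first} successful membership test, which may occur at an iteration $j<i-1$ with a value that is not obviously below $h_{i-1}(u)+d(p_{i-1}(u),v)$. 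Your parity analysis of the early-return case --- in particular the refined bound $V< h_j(v)+h_{j+1}(u)\le 2h_{j+1}(u)+d(u,v)\le 2h_{i-1}(u)+d(u,v)$ obtained from the success condition $d(u,p_j(v))<h_{j+1}(u)$ together with $h_j(v)\le d(u,v)+d(u,p_j(v))$ --- shows that every possible early return still satisfies the bound the lemma needs, so your argument closes a gap in rigor rather than merely reproving the paper's step. The trade-off: the paper's version is two lines but leans on an unstated property of the query's output, whereas yours is longer but faithful to the first-success semantics of the pseudocode. The only caveat is the usual index-convention fuzziness around $B_i$ versus $B_{i-1}$ and the $i=1$ boundary iteration, where the paper itself is no more precise than you are.
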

\begin{proof}
    Wlog (without loss of generality), assume that $h_{i-1}(u) \le h_{i-1}(v)$. 
    We divide the proof into two cases.
    The case where $p_{i-1}(u)\in B_{i-1}(v)$, and the case where $p_{i-1}(u)\notin B_{i-1}(v)$.
    Consider the case where $p_{i-1}(u)\in B_{i-1}(v)$. In this case, 
    the value of $d(p_{i-1}(u),v)$ is saved in the distance oracle. From the triangle inequality it follows that $d(p_{i-1}(u),v)\le h_{i-1}(u) + d(u,v)$. Therefore, we have that $\hat{d}(u,v) \le h_{i-1}(u)+d(p_{i-1}(u),v) \le 2h_{i-1}(u)+d(u,v)$.
    Since $h_{i-1}(u) \le h_{i-1}(v)$ we get that $\hat{d}(u,v) \le 2\min(h_{i-1}(u),h_{i-1}(v)) + d(u,v)$,
    as required.
    
    Consider now the case that $p_{i-1}(u)\notin B_{i-1}(v)$. From the definition of $B_{i-1}(v)$ it follows that $h_i(v) \le d(v,p_{i-1}(u)) \le d(u,v) + h_{i-1}(u)$. Since $h_{i-1}(u) \le h_{i-1}(v)$, we get that $h_i(v) \le \min(h_{i-1}(u),h_{i-1}(v)) + d(u,v)$, and thus $\min(h_i(u),h_i(v)) \le \min(h_{i-1}(u),h_{i-1}(v)) + d(u,v)$, as required.
\end{proof}

\subsection{Parameterized distance oracles}\label{S-Parametrized-oracle}
A \textit{parameterized distance oracle} is constructed with respect to a special input set $S\subseteq V$. 
Roditty et al.~\cite{DBLP:conf/icalp/RodittyTZ05} presented an algorithm that constructs such a $(2k-1)$-distance oracle in 
$O(m|S|^{\frac{1}{k}})$ time. The distance oracle uses $O(n|S|^{\frac{1}{k}})$ space and answers distance queries only for vertex pairs $\langle s,v \rangle\in S \times V$ in $O(k)$ time.

In this section, we present two new parameterized distance oracles used in our fast constructions. 
In the first parameterized distance oracle, the query supports vertex pairs from $V\times V$. In the second parameterized distance oracle, the query supports vertex pairs only from $S\times S$.

First, we generalize the parameterized distance oracle of~\cite{DBLP:conf/icalp/RodittyTZ05}, and extend the query to support vertex pairs from $V\times V$, as follows.
For every $u,v\in V$ the query returns an estimation $\hat{d}(u,v)\leq 2\min(h_S(u),h_S(v))+(2k-1)d(u,v)$. 
Notice that when $u\in S$ we have  $h_S(u)=0$ thus $\min(h_S(u),h_S(v))=0$, and $\hat{d}(u,v)\le (2k-1)d(u,v)$,  as in the distance oracle of~\cite{DBLP:conf/icalp/RodittyTZ05}.
Formally:
\begin{lemma} \label{L-ADO-G_S}
    Let $k\ge 1$ and let $S\subseteq V$. There is an $O(n|S|^{\frac{1}{k}})$-space distance oracle that given two vertices $u,v\in V$, returns in $O(k)$ time an estimation $\hat{d}(u,v)$ such that $d(u,v)\le \hat{d}(u,v) \le 2\min(h_S(u),h_S(v))+(2k-1)d(u,v)$. The distance oracle is constructed in $O(m|S|^{\frac{1}{k}})$ time.
\end{lemma}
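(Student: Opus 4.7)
The plan is to mimic the Thorup--Zwick construction, replacing the initial ``universe'' layer $A_0 = V$ with $A_0 = S$ and scaling the sampling probability by $|S|^{1/k}$ rather than $n^{1/k}$. Concretely, I would set $A_0 = S$ and, for $1 \leq i \leq k-1$, sample $A_i \subseteq A_{i-1}$ by including each element independently with probability $|S|^{-1/k}$, and set $A_k = \emptyset$. A standard Chernoff calculation yields $|A_i| = O(|S|^{1-i/k})$ whp, and the analogue of \autoref{L-TZ-Size} shows that each bunch $B_i(v)$ (the set of $a \in A_i$ with $d(v,a) < d(v, A_{i+1})$) has size $O(|S|^{1/k})$ whp, for every $v \in V$. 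Storing the pivots $p_i(v)$, distances $h_i(v) = d(v,A_i)$, and hash-indexed bunches $B_i(v)$ for every $v \in V$ and $0 \leq i \leq k-1$ gives the claimed $O(n|S|^{1/k})$ space bound. The construction time follows the Thorup--Zwick argument verbatim, substituting $|S|^{1/k}$ for $n^{1/k}$: the bunches can be computed via one cluster-style modified Dijkstra per vertex of $\bigcup_i A_i$, whose total cost is $O(m|S|^{1/k})$.

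For the query I would return $\hat d(u,v) = \min(\ADOQuery(u,v), \ADOQuery(v,u))$, which runs in $O(k)$ time. The lower bound $\hat d(u,v) \geq d(u,v)$ is immediate: whenever the procedure returns $h_i(u) + d(p_i(u),v)$, both summands are exact distances, so the triangle inequality gives the claim. The upper bound is the real content; it follows by chaining \autoref{L-ADO.Query-Correctness-2} for $i = 1, 2, \ldots, k-1$ and then invoking \autoref{L-ADO.Query-Correctness-1}. If the first alternative of \autoref{L-ADO.Query-Correctness-2} holds at every level, a straightforward induction shows
\[
\min(h_i(u), h_i(v)) \;\leq\; \min(h_0(u), h_0(v)) + i \cdot d(u,v) \qquad \text{for all } i \leq k-1,
\]
and combining the case $i = k-1$ with \autoref{L-ADO.Query-Correctness-1} yields $\hat d(u,v) \leq 2\min(h_0(u), h_0(v)) + (2k-1) d(u,v)$. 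Otherwise, letting $i^\star \in \{1,\ldots,k-1\}$ be the first level at which the second alternative of \autoref{L-ADO.Query-Correctness-2} applies, we obtain $\hat d(u,v) \leq 2\min(h_{i^\star-1}(u), h_{i^\star-1}(v)) + d(u,v)$; substituting the induction bound for $\min(h_{i^\star-1}(u), h_{i^\star-1}(v))$ gives $\hat d(u,v) \leq 2\min(h_0(u), h_0(v)) + (2i^\star - 1) d(u,v) \leq 2\min(h_0(u), h_0(v)) + (2k-1)d(u,v)$. Since $A_0 = S$ gives $h_0 = h_S$, this is exactly the inequality claimed.

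The step I expect to require the most care is verifying that the Thorup--Zwick construction-time analysis still carries through when $A_0 = S$ instead of $V$. The standard argument controls the cost of the cluster Dijkstra from each $a \in A_i$ by the sum of cluster sizes, which telescopes to $O(m |A_0|^{1/k})$; the only way this can fail here is if the tail bounds governing the bunches/clusters degrade when $A_0$ is much smaller than $V$. Fortunately the relevant ratios $|A_i|/|A_{i+1}|$ remain $|S|^{1/k}$ throughout and the tail concentration depends only on these ratios, so the analysis adapts without change and gives the stated $O(m|S|^{1/k})$ bound. De-randomization, if desired, can be invoked from~\cite{DBLP:conf/icalp/RodittyTZ05} at the usual polylog cost.
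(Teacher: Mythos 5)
Your proposal is correct and follows essentially the same route as the paper: both generalize the Thorup--Zwick hierarchy so that it is anchored at $S$ with sampling ratio $|S|^{1/k}$, reuse the standard query, and obtain the stretch bound by chaining \Cref{L-ADO.Query-Correctness-2} across the levels and closing with \Cref{L-ADO.Query-Correctness-1}. The only difference is cosmetic indexing --- you set $A_0=S$ directly, whereas the paper keeps $A_0=V$ with the trivial bunch $B_0(u)=\{u\}$ and places $S$ at level $1$.
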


\begin{proof}
    The main idea behind our parameterized distance oracle is to construct the standard Thorup–Zwick hierarchy \cite{DBLP:journals/jacm/ThorupZ05}, but to skip the computation related to the set $A_1$($=S$), that is, we do not compute $B_0(u)$ for every $u\in V$. By omitting this step, we obtain a faster construction time, at the cost of increasing the stretch by an additional term of $2\min(h_S(u), h_S(v))$. Formally, the construction proceeds as follows.

    Let $A_0=V$, $A_1=S$, $A_{k+1}=\emptyset$. 
    For every $2\le i \le k$ the set $A_i\subseteq A_{i-1}$ is a random set such that $|A_i|=|A_{i-1}|\cdot |S|^{-i/k}$. 
    We let $B_0(u)=\{u\}$ for every $u\in V$.
    As presented in~\cite{DBLP:conf/icalp/RodittyTZ05}, we compute $C(u)$ for every $u\in S$ in $O(m|S|^{\frac{1}{k}})$ time in order to save for every $u\in V$ and $1 \le i \le k$ the sets $B_i(u)$.
    From~\Cref{L-TZ-Size}, we have that $\sum_{u\in V}{|B(u)|} \le O(n|S|^{\frac{1}{k}})$, thus, the distance oracle uses $O(n|S|^{\frac{1}{k}})$ space.

    The query of the distance oracle returns $\min(\ADOQuery(u,v), \ADOQuery(v,u))$. To prove its correctness, we first prove by induction for every $i\in [1,k+1]$ that either $\hat{d}(u,v) \le 2\min(h_S(u),h_S(v)) + (2i-3)d(u,v)$ or $\min(h_i(u),h_i(v)) \le \min(h_S(u),h_S(v)) + (i-1)\cdot d(u,v)$.
    
    For $i=1$, since $h_1(\cdot)=h_S(\cdot)$ we get that $\min(h_1(u),h_1(v)) = \min(h_S(u),h_S(v))$ and the claim holds.
    Next, we assume the claim holds for $i-1$ and prove the claim for $i$. 
    From the fact that the claim holds for $i-1$, we get that either $$\hat{d}(u,v) \le 2\min(h_S(u),h_S(v)) + (2i-5)d(u,v)$$ 
    or $$\min(h_{i-1}(u),h_{i-1}(v)) \le \min(h_S(u),h_S(v)) + (i-2)\cdot d(u,v)$$
    If $\hat{d}(u,v) \le 2\min(h_S(u),h_S(v)) + (2i-5)d(u,v)$ then $\hat{d}(u,v) \le 2\min(h_S(u),h_S(v)) + (2i-3)d(u,v)$ and the claim holds. Otherwise, we have that $$\min(h_{i-1}(u),h_{i-1}(v)) \le \min(h_S(u),h_S(v)) + (i-2)\cdot d(u,v)$$
    From~\Cref{L-ADO.Query-Correctness-2} it follows that either 
    $$\hat{d}(u,v)\le 2\min(h_{i-1}(u),h_{i-1}(v)) + d(u,v) \text{ or } \min(h_i(u),h_i(v)) \le d(u,v) + \min(h_{i-1}(u),h_{i-1}(v))$$
    Since $\min(h_{i-1}(u),h_{i-1}(v)) \le \min(h_S(u),h_S(v)) + (i-2)\cdot d(u,v)$ we get that either 
    $$\hat{d}(u,v) \le 2\min(h_S(u),h_S(v)) + (2i-3)\cdot d(u,v) \text{ or }$$
    $$\min(h_i(u),h_i(v)) \le \min(h_S(u),h_S(v)) + (i-2)\cdot d(u,v) + d(u,v) = \min(h_S(u),h_S(v)) + (i-1)\cdot d(u,v),$$ as required.
    
    We complete the proof since for $i=k+1$ we have that $\hat{d}(u,v) \le 2\min(h_S(u),h_S(v)) + (2k-1)\cdot d(u,v)$ since $\min(h_{k+1}(u),h_{k+1}(v)) \le \min(h_S(u),h_S(v)) + (i-1)d(u,v)$ does not hold since $A_{k+1}=\emptyset$ and therefore $h_{k+1}(v)=h_{k+1}(u)=\infty$.
\end{proof}

Next, we present the second parameterized  distance oracle 
that supports queries only between vertices from $S$.
Notice that this oracle uses much less space than the first oracle. 

\begin{lemma}\label{L-ADO-S-Only}
    Let $k\ge 1$ and let $S\subseteq V$. There is an $O(|S|^{1+\frac{1}{k}})$-space distance oracle that given two vertices $u,v\in S$, returns in $O(k)$ time an estimation $\hat{d}(u,v)$ such that $d(u,v)\le \hat{d}(u,v) \le (2k-1)d(u,v)$. The distance oracle is constructed in $O(m|S|^{\frac{1}{k}})$ time.
\end{lemma}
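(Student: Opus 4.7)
The plan is to reinstantiate the classical Thorup--Zwick construction of \Cref{L-Regular-ADO-Construction} with $S$ playing the role of the level-$0$ universe. Concretely, I would set $A_0 = S$, $A_k = \emptyset$, and for each $1 \le i \le k-1$ form $A_i \subseteq A_{i-1}$ by sampling each element of $A_{i-1}$ independently with probability $|S|^{-1/k}$. For every $v \in V$ and every level $0 \le i \le k-1$, I would define $p_i(v) \in \arg\min_{w \in A_i} d_G(v, w)$, $h_i(v) = d_G(v, p_i(v))$, and the bunch $B_i(v) = \{w \in A_i \setminus A_{i+1} : d_G(v, w) < d_G(v, A_{i+1})\}$, exactly as in TZ. The oracle will retain only the triples $(p_i(v), h_i(v), B_i(v))$ for $v \in S$ and $0 \le i \le k-1$.

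For the space bound, the key observation is that $A_i \subseteq A_0 = S$ for every $i$, so every bunch is a subset of $S$. The standard Thorup--Zwick geometric-random-variable argument, now run with sampling rate $|S|^{-1/k}$, gives $|B_i(v)| = O(|S|^{1/k})$ w.h.p.\ for every $v \in V$, and in particular for every $v \in S$. Summing over $v \in S$ and levels yields the claimed $O(|S|^{1+1/k})$ space.

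For the construction, I would first compute $p_i(v)$ and $h_i(v)$ for all $v \in V$ and all levels via the super-source Dijkstra of \Cref{L-Construct-S}, one run per level. The bunches are then built by TZ-style cluster-growing Dijkstras from each pivot $w \in A_i \setminus A_{i+1}$. Double counting bounds the total work by $\sum_{v\in V}\deg(v)\cdot \sum_i |B_i(v)| = O(m|S|^{1/k})$, since the geometric bound $|B_i(v)| = O(|S|^{1/k})$ holds for every $v \in V$. At the end we discard the information for $v \notin S$ and retain the $O(|S|^{1+1/k})$-sized structure.

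For the query on $u, v \in S$, we simply invoke $\ADOQuery$ (\Cref{A-ADO-Query}). Since both $u$ and $v$ lie in the level-$0$ universe $A_0 = S$, the induction underlying \Cref{L-ADO.Query-Correctness-1,L-ADO.Query-Correctness-2}, specialized with $S$ in place of $V$ at level $0$, yields $d(u,v) \le \hat{d}(u,v) \le (2k-1)d(u,v)$; the loop runs at most $k$ iterations, giving $O(k)$ query time. The only subtlety worth checking is that both the geometric bound on bunch sizes and the stretch induction use only the fact that $A_0$ is the ground set from which subsequent levels are sampled, so both arguments go through verbatim when $A_0 = S \subsetneq V$.
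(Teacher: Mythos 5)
Your proposal is correct and is essentially the paper's argument: the paper also takes the parameterized Thorup--Zwick construction of \Cref{L-ADO-G_S} and simply stores the bunches only for vertices of $S$, with correctness following because $\ADOQuery(u,v)$ touches only the bunches of $u$ and $v$ and the initial condition $h(u)=h(v)=0$ holds for $u,v\in S$. The only (cosmetic) difference is that you start the hierarchy at $A_0=S$, whereas the paper keeps the levels $A_0=V$, $A_1=S$ of \Cref{L-ADO-G_S}; for queries restricted to $S\times S$ the two yield the same guarantee.
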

\begin{proof}
    The construction and the query are the same as in the distance oracle from~\Cref{L-ADO-G_S}. However, instead of saving $B_i(u)$, for every $u\in V$ and $0\leq i \leq k$, we save $B_i(s)$ only for every $s\in S$. Thus, the space is $O(|S|^{1+\frac{1}{k}})$.
    The correctness of the query follows from the fact that $\ADOQuery(u,v)$ only uses the bunches of $u$ and $v$. Thus, if $s_1,s_2\in S$ then we have the information needed in $\ADOQuery(s_1,s_2)$.
\end{proof}

Since the distance oracle of~\Cref{L-ADO-G_S} and the distance oracle of~\Cref{L-ADO-S-Only} differ only by the supported vertex pairs, we refer to the oracle of~\Cref{L-ADO-G_S} as $\ADO_P(G,k, S)$ and to the oracle of~\Cref{L-ADO-S-Only} as $\ADO_{P'}(G,k, S)$.

\subsection{A new hierarchy of distance oracles}\label{S-hado}
In this section, we present a new distance oracle composed of a hierarchy of $t=\ceil{\log\log{n}}$ distance oracles. The distance oracle computes a set $S_t$. The query of the distance oracle guarantees a $(2k-1)$-stretch for distances in $G_{S_t}$.

The distance oracle is constructed as follows. We get as an input a graph and two parameters $k$ and $x_0$, where  $k\geq 3$ and   $\frac{1}{k}\leq x_0<1$. 
First, we compute a random set  $S_0\subseteq V$  such that $|S_0|=n^{1-x_0}$ and construct $\ADO(G_{S_0},k)$.

Next, for every $1\leq i \leq t$, we compute a random set $S_i\subseteq S_{i-1}$ such that $|S_i|=n^{1-x_i}$, where $x_i=x_0-\frac{1}{k(k-1)}+\frac{x_{i-1}}{k-1}$, and construct $\ADO_P(G_{S_i},k-1,S_{i-1})$.

Our distance oracle stores the distance oracle $\ADO(G_{S_0},k)$
and the distance oracles $\ADO_P(G_{S_i},k-1, S_{i-1})$, for every $1\leq i \leq t$.
The construction algorithm is presented in \Cref{A-Construction-t}.
The query algorithm returns $$\min(\ADO(G_{S_0},k)\Query(u,v),\min_{i\in [t]}(\ADO_P(G_{S_i},k-1,S_{i-1})\Query(u,v)))$$
\begin{algorithm2e}[t] 
\caption{$\hADO.\Construct(G,k,x_0)$}\label{A-Construction-t}
$S_0 \gets \Sample(V,n^{1-x_0})$ \\
Construct and store $\ADO(G_{S_0},k) $\\
$t \gets \ceil{\log\log{n}}$ \\
\For{$i\in [1,t]$} {
    $x_i \gets x_0-\frac{1}{k(k-1)}+\frac{x_{i-1}}{k-1}$ \\
    $S_i \gets \Sample(S_{i-1}, n^{1-x_i})$ \\
    Construct and store $\ADO_P(G_{S_i}, k-1, S_{i-1})$\\
}
\Return $S_t$
\end{algorithm2e}

\colorlet{cset}{blue!60!black}        
\definecolor{cpruned}{RGB}{208,2,27}  
\colorlet{cin}{green!50!black}        
\colorlet{cbox}{gray!55}

\begin{figure}[t]
\centering
\begin{tikzpicture}[
    vert/.style   = {circle, draw=black, fill=white, minimum size=3.4pt, inner sep=0pt, line width=0.5pt},
    svert/.style  = {circle, draw=cset, fill=cset, minimum size=5pt, inner sep=0pt},
    bvert/.style  = {circle, draw=black, fill=black!12, minimum size=9pt, inner sep=0pt, line width=0.9pt, font=\tiny},
    gbox/.style   = {rounded corners=4pt, draw=cbox, line width=0.9pt, fill=gray!3},
    nedge/.style  = {gray!55, line width=0.6pt},
    nxedge/.style = {cpruned, dashed, line width=0.6pt}, 
    pedge/.style  = {black, line width=1.3pt, line cap=round},
    xedge/.style  = {cpruned, dashed, line width=1.1pt}  
]

\def\hverts{%
  \coordinate (u) at (0.20,0.90); \coordinate (a) at (0.64,0.90);
  \coordinate (b) at (1.08,0.90); \coordinate (c) at (1.52,0.90);
  \coordinate (d) at (1.96,0.90); \coordinate (v) at (2.40,0.90);
  \coordinate (e) at (0.45,1.55); \coordinate (f) at (1.30,1.68);
  \coordinate (g) at (2.15,1.55); \coordinate (h) at (0.75,0.22);
  \coordinate (i) at (1.50,0.16); \coordinate (j) at (2.20,0.28);%
}

\begin{scope}[xshift=0cm]
  \node[gbox, minimum width=2.85cm, minimum height=2.25cm] at (1.30,0.90) {};
  \hverts
  
  \draw[nedge] (e)--(u); \draw[nedge] (h)--(a); \draw[nedge] (j)--(v);
  \draw[nxedge] (f)--(b); \draw[nxedge] (g)--(d); \draw[nxedge] (i)--(c);
  \draw[nxedge] (f)--(g); \draw[nxedge] (h)--(c);
  \draw[nxedge] (e)--(b); \draw[nxedge] (g)--(j); \draw[nxedge] (g)--(v); \draw[nxedge] (e)--(f);
  
  \draw[pedge] (u)--(a); \draw[pedge] (a)--(b);
  \draw[xedge] (b)--(c); \draw[xedge] (c)--(d); \draw[xedge] (d)--(v);
  
  \foreach \p in {u,a,b,c,d,v,e,f,g,h,i,j} {\node[vert] at (\p) {};}
  \node[bvert] at (u) {$u$}; \node[bvert] at (v) {$v$};
  
  \node[font=\scriptsize, anchor=south] at (1.30,2.12) {$G_{S_0}$};
  \node[font=\scriptsize, gray!55!black] at (1.30,-0.42) {$|E_{S_0}|=n^{1+x_0}$};
  \node[font=\scriptsize, gray!55!black] at (1.30,-0.77) {(sparsest graph)};
  \node[font=\scriptsize, cpruned] at (1.30,-1.14) {$P(u,v)\not\subseteq G_{S_0}$};
\end{scope}

\begin{scope}[xshift=3.2cm]
  \node[gbox, minimum width=2.85cm, minimum height=2.25cm] at (1.30,0.90) {};
  \hverts
  
  \draw[nedge] (e)--(u); \draw[nedge] (h)--(a); \draw[nedge] (j)--(v);
  \draw[nedge] (f)--(b); \draw[nedge] (g)--(d);
  \draw[nxedge] (i)--(c); \draw[nxedge] (f)--(g); \draw[nxedge] (h)--(c);
  \draw[nxedge] (e)--(b); \draw[nxedge] (g)--(j); \draw[nxedge] (g)--(v); \draw[nxedge] (e)--(f);
  
  \draw[pedge] (u)--(a); \draw[pedge] (a)--(b); \draw[pedge] (b)--(c);
  \draw[xedge] (c)--(d); \draw[xedge] (d)--(v);
  
  \foreach \p in {u,a,b,c,d,v,e,f,g,h,i,j} {\node[vert] at (\p) {};}
  \foreach \p in {e,f,g,h,i} {\node[svert] at (\p) {};}
  \node[bvert] at (u) {$u$}; \node[bvert] at (v) {$v$};
  
  \node[font=\scriptsize, anchor=south] at (1.30,2.12) {$G_{S_1}$};
  \node[font=\scriptsize, cset] at (1.30,-0.42) {with sample $S_0$};
  \node[font=\scriptsize, gray!55!black] at (1.30,-0.77) {$|E_{S_1}|=n^{1+x_1}$};
  \node[font=\scriptsize, cpruned] at (1.30,-1.14) {$P(u,v)\not\subseteq G_{S_1}$};
\end{scope}

\begin{scope}[xshift=6.4cm]
  \node[gbox, minimum width=2.85cm, minimum height=2.25cm] at (1.30,0.90) {};
  \hverts
  
  \draw[nedge] (e)--(u); \draw[nedge] (f)--(b); \draw[nedge] (h)--(a); \draw[nedge] (j)--(v);
  \draw[nedge] (f)--(g); \draw[nedge] (g)--(d); \draw[nedge] (h)--(c); \draw[nedge] (i)--(c);
  \draw[nxedge] (e)--(b); \draw[nxedge] (g)--(j); \draw[nxedge] (g)--(v); \draw[nxedge] (e)--(f);
  
  \draw[pedge] (u)--(a)--(b)--(c); \draw[pedge] (c)--(d);
  \draw[xedge] (d)--(v);
  
  \foreach \p in {u,a,b,c,d,v,e,f,g,h,i,j} {\node[vert] at (\p) {};}
  \foreach \p in {f,g,h} {\node[svert] at (\p) {};}
  \node[bvert] at (u) {$u$}; \node[bvert] at (v) {$v$};
  
  \node[font=\scriptsize, anchor=south] at (1.30,2.12) {$G_{S_2}$};
  \node[font=\scriptsize, cset] at (1.30,-0.42) {with sample $S_1$};
  \node[font=\scriptsize, gray!55!black] at (1.30,-0.77) {$|E_{S_2}|=n^{1+x_2}$};
  \node[font=\scriptsize, cpruned] at (1.30,-1.14) {$P(u,v)\not\subseteq G_{S_2}$};
\end{scope}

\node[font=\large] at (9.80,0.90) {$\cdots$};

\begin{scope}[xshift=10.6cm]
  \node[gbox, draw=cin, fill=cin!7, minimum width=2.85cm, minimum height=2.25cm] at (1.30,0.90) {};
  \hverts
  
  \draw[nedge] (e)--(u); \draw[nedge] (f)--(b); \draw[nedge] (h)--(a); \draw[nedge] (j)--(v);
  \draw[nedge] (f)--(g); \draw[nedge] (g)--(d); \draw[nedge] (h)--(c); \draw[nedge] (i)--(c);
  \draw[nedge] (e)--(b); \draw[nedge] (g)--(j); \draw[nedge] (g)--(v); \draw[nedge] (e)--(f);
  
  \draw[pedge] (u)--(a)--(b)--(c)--(d)--(v);
  
  \foreach \p in {u,a,b,c,d,v,e,f,g,h,i,j} {\node[vert] at (\p) {};}
  \foreach \p in {g} {\node[svert] at (\p) {};}
  \node[bvert] at (u) {$u$}; \node[bvert] at (v) {$v$};
  
  \node[font=\scriptsize, anchor=south] at (1.30,2.12) {$G_{S_i}$};
  \node[font=\scriptsize, cin!55!black] at (1.30,1.86) {$P(u,v)\subseteq G_{S_i}$};
  \node[font=\scriptsize, cset] at (1.30,-0.42) {with sample $S_{i-1}$};
  \node[font=\scriptsize, gray!55!black] at (1.30,-0.77) {$|E_{S_i}|=n^{1+x_i}$};
\end{scope}

\node[font=\scriptsize, cin!55!black, align=center] at (11.90,-1.42)
  {($i$ = first such that $P(u,v)\subseteq G_{S_i}$)\\[1pt] $\Rightarrow\ \hat d(u,v)\le(2k-1)\,d(u,v)$};


\end{tikzpicture}
\caption{%
Illustration of the sampling hierarchy used by $\hADO$: Starts with $G_{S_0}$ - the sparsest graph in which we construct $\ADO(G_{S_0},k)$ without a sample, then in each iteration the sample $S_i$ is getting smaller and each subsequent $G_{S_{i}}$ is getting denser. In red are the missing edges from the graph in $G_{S_i}$. Level $i$ is the first level such that $P(u,v)\subseteq G_{S_i}$, and we have $\ADO_P(G_{S_i},k-1,S_{i-1})\Query(u,v)\le(2k-1)d(u,v)$ (\Cref{L-Construct-to-T-correction}).
}
\label{fig:hierarchy}
\end{figure}

In the next lemma, we analyze the running time of the construction algorithm.
\begin{lemma}\label{L-construct-time-up-to-T}
    The distance oracle is constructed in $O((m+kn^{1+x_0+\frac{1}{k}})\log\log{n})$ time.
\end{lemma}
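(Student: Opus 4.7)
The plan is to bound the cost of \Cref{A-Construction-t} level by level and show that each of the $t+1 = O(\log\log n)$ oracles built contributes expected time $O(m + kn^{1+x_0+1/k})$, so that the total matches the claimed bound.

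First I would check that the sampling steps behave well: each $S_i$ is equivalent to picking every vertex of $V$ independently with probability $n^{-x_i}$ (the nested sampling in \Cref{A-Construction-t} is just the standard thinning realization of this), so \Cref{L-G_S-size} gives $|E_{S_i}|=O(n^{1+x_i})$ in expectation while \Cref{L-Create-G_S} builds $G_{S_i}$ itself in $O(m+n\log n)$ time. Combined with \Cref{L-Regular-ADO-Construction}, the base oracle $\ADO(G_{S_0},k)$ costs $O(k\cdot n^{1+x_0}\cdot n^{1/k}) = O(kn^{1+x_0+1/k})$ in expectation, and by \Cref{L-ADO-G_S} each parameterized oracle $\ADO_P(G_{S_i},k-1,S_{i-1})$ with $i\ge 1$ costs
\[
O\bigl(|E(G_{S_i})|\cdot|S_{i-1}|^{1/(k-1)}\bigr) \;=\; O\bigl(n^{1+x_i+(1-x_{i-1})/(k-1)}\bigr).
\]

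The crux of the argument — and really the only non-routine step — is the identity that makes every level carry the same exponent. The recurrence $x_i = x_0 - \frac{1}{k(k-1)} + \frac{x_{i-1}}{k-1}$ was engineered exactly so that the $x_{i-1}$ dependence cancels:
\[
x_i + \frac{1-x_{i-1}}{k-1} \;=\; x_0 - \frac{1}{k(k-1)} + \frac{1}{k-1} \;=\; x_0 + \frac{1}{k}.
\]
Thus each of the $t+1$ levels runs in expected time $O(m+kn^{1+x_0+1/k})$, and summing over the $\lceil\log\log n\rceil+1$ oracles yields the stated $O((m+kn^{1+x_0+1/k})\log\log n)$ bound. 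I do not anticipate a serious obstacle beyond this exponent computation; the rest is bookkeeping of the per-level cost from the previously established lemmas.
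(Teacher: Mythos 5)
Your proposal is correct and follows essentially the same route as the paper's proof: bound $|E(G_{S_i})|$ by $O(n^{1+x_i})$ via \Cref{L-G_S-size}, observe that the recurrence for $x_i$ makes the exponent $x_i+\frac{1-x_{i-1}}{k-1}$ collapse to $x_0+\frac1k$ at every level, and multiply the per-level cost by the $O(\log\log n)$ levels. The exponent identity you isolate is exactly the calculation the paper performs.
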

\begin{proof}
    Using~\Cref{L-G_S-size} and the fact that $|S_0|=n^{1-x_0}$ we get that $|E_{S_0}|=O(n^{1+x_0})$. Thus, constructing $\ADO(G_{S_0},k)$ takes $O(kn^{1+x_0+\frac{1}{k}})$ time.
    
    For every $i\in [1,t]$, since $|S_i|=n^{1-x_i}$, it follows from~\Cref{L-G_S-size} that $|E_{S_i}|=O(n^{1+x_i})$ and constructing $\ADO_P(G_{S_i}, k-1, S_{i-1})$ takes  $O(k|E_{S_i}|\cdot |S_{i-1}|^{1/(k-1)})=O(kn^{1+x_i+\frac{1-x_{i-1}}{k-1}})$ time.
    By definition $x_i=x_0-\frac{1}{k(k-1)}+x_{i-1}/(k-1)$, thus, we get that the construction time is
    $O(kn^{1+x_0-\frac{1}{k(k-1)}+\frac{x_{i-1}}{k-1}+\frac{1-x_{i-1}}{k-1}})=O(kn^{1+x_0+\frac{1}{k}})$ for every $i\in [t]$. In addition, computing $G_{S_i}$ takes $O(m)$ time, for every $i\in [1,t]$.
    Since $t=\ceil{\log\log{n}}$ we get that the running time is $O((m+kn^{1+x_0+\frac{1}{k}})\log\log{n})$, as required.
\end{proof}

Next, we analyze the space used by the distance oracle. 

\begin{lemma}\label{T-Construct-to-T-space}
    The distance oracle uses $O(n^{1+\frac{1}{k}})$ space.
\end{lemma}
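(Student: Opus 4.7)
The plan is to bound the space of each of the $t+1$ stored oracles separately, sum them, and show that the dominant contribution from each level is $O(n^{1+1/k})$. The single invariant driving everything is $x_{i-1} \ge 1/k$ for every $i \ge 1$, which controls $|S_{i-1}|$ and therefore the size of each parameterized oracle.

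First I would handle the base oracle: $\ADO(G_{S_0},k)$ uses $O(k\,n^{1+1/k})$ space by \Cref{L-Regular-ADO-Construction}, regardless of $S_0$. Next, I would invoke \Cref{L-ADO-G_S} on each $\ADO_P(G_{S_i}, k-1, S_{i-1})$ to get space $O(n\,|S_{i-1}|^{1/(k-1)})=O\bigl(n^{1+\frac{1-x_{i-1}}{k-1}}\bigr)$, using $|S_{i-1}|=n^{1-x_{i-1}}$. The goal is to bound $\frac{1-x_{i-1}}{k-1}\le\frac{1}{k}$, i.e.\ to show $x_{i-1}\ge 1/k$.

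I would establish $x_i\ge 1/k$ for all $i\ge 0$ by a one-line induction on the recurrence $x_i=x_0-\frac{1}{k(k-1)}+\frac{x_{i-1}}{k-1}$. The base case is the hypothesis $x_0\ge 1/k$ stated in the construction. For the inductive step, assuming $x_{i-1}\ge 1/k$,
\[
x_i \;\ge\; \frac{1}{k}-\frac{1}{k(k-1)}+\frac{1/k}{k-1}
\;=\;\frac{1}{k}-\frac{1}{k(k-1)}+\frac{1}{k(k-1)}
\;=\;\frac{1}{k},
\]
so the invariant propagates. Consequently each parameterized oracle at level $i\in[1,t]$ contributes $O(n^{1+1/k})$ space.

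Finally I would sum: the total is $O(k\,n^{1+1/k})+\sum_{i=1}^{t}O(n^{1+1/k})=O\bigl((k+t)\,n^{1+1/k}\bigr)$, and since $t=\lceil\log\log n\rceil$ and $k$ is treated as a parameter absorbed by the asymptotic notation in this paper's convention, this simplifies to $O(n^{1+1/k})$, matching the claim. I do not expect a real obstacle here; the only subtle point is verifying the invariant $x_{i-1}\ge 1/k$, which is precisely the fixed-point property of the recurrence at $x=1/k$, and without it the space bound per level would blow up.
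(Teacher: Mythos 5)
Your proof is correct and follows essentially the same route as the paper: bound $\ADO(G_{S_0},k)$ via \Cref{L-Regular-ADO-Construction}, bound each $\ADO_P(G_{S_i},k-1,S_{i-1})$ via \Cref{L-ADO-G_S} as $O(n|S_{i-1}|^{1/(k-1)})$, and use $x_i\ge 1/k$ to conclude each level is $O(n^{1+1/k})$. You are in fact slightly more careful than the paper, which asserts $x_i > x_0 \ge 1/k$ without the short induction you give (and which you correctly index by $S_{i-1}$ rather than $S_i$); the leftover $(k+\log\log n)$ factor is absorbed exactly as in the paper's own accounting.
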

\begin{proof}
    From~\Cref{L-Regular-ADO-Construction} we have that  $\ADO(G_{S_0},k)$ uses $O(n^{1+\frac{1}{k}})$ space. From~\Cref{L-ADO-G_S} we have that for every $i\in [1,t]$, $\ADO_P(G_{S_i},k-1,S_{i-1})$ uses $O(\sum_{i\in[t]}{n|S_{i-1}|^\frac{1}{k-1}})$ space. Since $x_i > x_0 \ge \frac{1}{k}$, we have that $O(\sum_{i\in[t]}{n|S_i|^\frac{1}{k-1}}) = O(n^{1+\frac{1}{k}})$, as required.
\end{proof}
We now turn to prove that the stretch of $\hADO$ for distances in $G_{S_t}$ is $2k-1$. 
\begin{lemma}\label{L-Construct-to-T-correction}
    If $P(u,v) \subseteq G_{S_t}$ then $\hat{d}(u,v) \le (2k-1)d(u,v)$
\end{lemma}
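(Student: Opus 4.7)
The plan is to case-split on the smallest index $j \in \{0, 1, \ldots, t\}$ for which $P(u,v) \subseteq G_{S_j}$; such a $j$ exists because $P(u,v) \subseteq G_{S_t}$ by hypothesis. Since $\hADO\Query(u,v)$ is the minimum over all stored oracles, it suffices to exhibit a single oracle in the hierarchy whose estimate is at most $(2k-1)d(u,v)$.

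If $j = 0$, then $P(u,v) \subseteq G_{S_0}$, so $d_{G_{S_0}}(u,v) = d(u,v)$, and the $(2k-1)$-stretch of $\ADO(G_{S_0},k)$ guaranteed by \autoref{L-Regular-ADO-Construction} finishes the case. If $j \geq 1$, I would instead bound the output of $\ADO_P(G_{S_j},k-1,S_{j-1})\Query(u,v)$. Minimality of $j$ gives $P(u,v) \not\subseteq G_{S_{j-1}}$, so \autoref{L-No-Intersect-Weighted} applied with $S = S_{j-1}$ yields $\max(h_{S_{j-1}}(u), h_{S_{j-1}}(v)) \le d(u,v)$, where distances here are measured in $G$. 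Combined with $d_{G_{S_j}}(u,v) = d(u,v)$ (because $P(u,v) \subseteq G_{S_j}$), substituting into the stretch bound of \autoref{L-ADO-G_S} gives
\[
\ADO_P(G_{S_j},k-1,S_{j-1})\Query(u,v) \;\le\; 2d(u,v) + (2k-3)d(u,v) \;=\; (2k-1)d(u,v).
\]

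The main obstacle is a subtle mismatch: the bound in \autoref{L-ADO-G_S}, when instantiated on $G_{S_j}$, measures $h_{S_{j-1}}(\cdot)$ as a distance \emph{inside} $G_{S_j}$, while \autoref{L-No-Intersect-Weighted} only controls the corresponding distance in $G$. To bridge this, I would show that the shortest path $\pi$ from $v$ to $p_{S_{j-1}}(v)$ in $G$ already lies in $G_{S_{j-1}}$, which by $S_j \subseteq S_{j-1}$ (hence $h_{S_j}(x) \ge h_{S_{j-1}}(x)$, hence $E_{S_{j-1}} \subseteq E_{S_j}$) is contained in $G_{S_j}$. Concretely, for any edge $(a,b) \in \pi$ with $a$ closer to $v$, the triangle inequality in $G$ gives $h_{S_{j-1}}(a) \ge h_{S_{j-1}}(v) - d_G(v,a)$, while $b$ lying on a shortest path to $p_{S_{j-1}}(v)$ yields $w(a,b) = d_G(v,b) - d_G(v,a) \le h_{S_{j-1}}(v) - d_G(v,a)$, so $w(a,b) \le h_{S_{j-1}}(a)$ and $(a,b) \in E_{S_{j-1}}(a)$. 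It follows that $d_{G_{S_j}}(v, S_{j-1}) \le h_{S_{j-1}}(v) \le d(u,v)$, which is exactly what is needed to make the final chain of inequalities in the second paragraph rigorous.
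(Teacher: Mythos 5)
Your proof is correct and follows essentially the same route as the paper's: take the first level $j$ at which $P(u,v)$ survives in $G_{S_j}$, use \Cref{L-No-Intersect-Weighted} at level $j-1$ to get $\max(h_{S_{j-1}}(u),h_{S_{j-1}}(v))\le d(u,v)$, and feed this into the guarantee of $\ADO_P(G_{S_j},k-1,S_{j-1})$; the paper phrases this as an induction over the levels, but the content is identical. The one point where you go beyond the paper is the bridging argument showing that the shortest path from $v$ to $p_{S_{j-1}}(v)$ survives in $G_{S_{j-1}}\subseteq G_{S_j}$, so that $d_{G_{S_j}}(v,S_{j-1})\le h_{S_{j-1}}(v)$; this step is genuinely needed, since \Cref{L-ADO-G_S} instantiated on $G_{S_j}$ measures $h_{S_{j-1}}$ inside $G_{S_j}$ while \Cref{L-No-Intersect-Weighted} controls it only in $G$, and the paper's proof silently identifies the two. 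Your patch is correct and makes the argument fully rigorous.
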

\begin{proof}
    We divide the proof into two cases. The case that $P(u,v)\subseteq G_{S_0}$ and the case that $P(u,v)\not\subseteq G_{S_0}$.
    Consider the case that $P(u,v)\subseteq G_{S_0}$ from the correctness of the query of $\ADO(G_{S_0},k)$ we have that $\hat{d}(u,v) \le (2k-1)d(u,v)$, as required.
    Consider now the case that $P(u,v)\not\subseteq G_{S_0}$. From~\Cref{L-No-Intersect-Weighted} it follows that $\max(h_{S_0}(u),h_{S_0}(v))\le d(u,v)$. 
    
    Next, we show by induction that for every $i\in [1,t]$, either 
    $\max(h_{S_i}(u),h_{S_i}(v)) \le d(u,v)$ or that $\hat{d}(u,v) \le (2k-1)d(u,v)$ as required.
    Assume that $\max(h_{S_{i-1}}(u),h_{S_{i-1}}(v)) \le d(u,v)$, we show that either
    $\max(h_{S_i}(u),h_{S_i}(v)) \le d(u,v)$ or that $\hat{d}(u,v) \le (2k-1)d(u,v)$.
    We divide the proof into two cases. The case that $P(u,v) \subseteq G_{S_i}$ and the case that $P(u,v) \not\subseteq G_{S_i}$.
    Consider the case that $P(u,v) \subseteq G_{S_i}$.
    From~\Cref{L-ADO-G_S} we have that $\ADO_P(G_{S_i},k-1,S_{i-1})\Query(u,v) \le 2\min(h_{S_{i-1}}(u), h_{S_{i-1}}(v)) + (2(k-1)-1)d(u,v)$. From the induction assumption we know that $\max(h_{S_{i-1}}(u),h_{S_{i-1}}(v)) \le d(u,v)$, thus we get that $\hat{d}(u,v) \le 2\min(h_{S_{i-1}}(u),h_{S_{i-1}}(v)) + (2(k-1)-1)d(u,v) \le (2k-1)d(u,v)$, as required.
    Consider now the case that $P(u,v) \not\subseteq G_{S_i}$.
    From~\Cref{L-No-Intersect-Weighted} it follows that $\max(h_{S_i}(u),h_{S_i}(v))\le d(u,v)$, as required.
\end{proof}

The size of $S_t$ is $n^{1-x_t}$. Therefore, to bound $|S_t|$ we analyze the series $x_i=x_0-\frac{1}{k(k-1)}+\frac{x_{i-1}}{k-1}$.

\begin{cclaim}\label{C-x_i-serie}
     If $x_i=x_0-\frac{1}{k(k-1)}+\frac{x_{i-1}}{k-1}$ for $i\ge 1$, then  
     $x_j=\frac{k-1}{k-2}x_0-\frac{1}{k(k-2)}+\frac{1-x_0k}{k(k-2)}\cdot (\frac{1}{k-1})^j$ for every  $j\ge0$.
\end{cclaim}
\begin{proof}
    We prove the claim by induction. For $j=0$, we have that 
    \begin{align*}
        \frac{k-1}{k-2}x_0-\frac{1}{k(k-2)}+\left(\frac{1}{k-1}\right)^0\frac{1-x_0k}{k(k-2)}&=\frac{k-1}{k-2}x_0-\frac{1}{k(k-2)}+\frac{1}{k(k-2)}-\frac{x_0k}{k(k-2)} 
        \\&=\frac{k-1}{k-2}x_0-\frac{1}{k-2}x_0=x_0,
    \end{align*}
    as required.

    Next, we assume that the claim holds for $j-1$ and prove it for $j$. Thus, $x_{j-1}=\frac{k-1}{k-2}x_0-\frac{1}{k(k-2)}+(\frac{1}{k-1})^{j-1}\frac{(1-x_0k)}{k(k-2)}$.
    From the recursive definition of $x_j$ we have that $x_j=x_0-\frac{1}{k(k-1)}+\frac{x_{j-1}}{k-1}$. 
    Thus, we get that:
    \begin{align*}
        x_j&=x_0-\frac{1}{k(k-1)}+\frac{x_{j-1}}{k-1}=x_0-\frac{1}{k(k-1)}+\frac{\frac{k-1}{k-2}x_0-\frac{1}{k(k-2)}+\frac{(1-x_0k)}{k(k-2)}\cdot (\frac{1}{k-1})^{j-1}}{k-1}\\
        &=x_0-\frac{1}{k(k-1)}+\frac{x_0}{k-2}-\frac{1}{k(k-1)(k-2)}+\frac{(1-x_0k)}{k(k-2)}\cdot (\frac{1}{k-1})^{j}\\
        &=\frac{k-1}{k-2}x_0-\frac{1+k-2}{k(k-1)(k-2)}+\frac{(1-x_0k)}{k(k-2)}\cdot (\frac{1}{k-1})^{j}=\frac{k-1}{k-2}x_0-\frac{1}{k(k-2)}+\frac{(1-x_0k)}{k(k-2)}\cdot(\frac{1}{k-1})^{j},
    \end{align*}
    as required.

\end{proof}
Next, we move to bound $x_t$ and $|S_t|$ in the following property, by applying $t=\ceil{\log\log{n}}$ in \Cref{C-x_i-serie}.
\begin{property}\label{P-Limit-of-x-to-t}
    $x_{t}=\frac{k-1}{k-2}x_0-\frac{1}{k(k-2)}-O(\frac{1}{\log{n}})$ and $|S_t| = O(n^{1-\frac{k-1}{k-2}x_0+\frac{1}{k(k-2)}})$
\end{property}
\begin{proof}
    By applying $t=\ceil{\log\log{n}}$ in \Cref{C-x_i-serie} we get that 
    \begin{align*}
        x_t &= \frac{k-1}{k-2}x_0-\frac{1}{k(k-2)}+\frac{1-x_0k}{k(k-2)}\cdot (\frac{1}{k-1})^t \ge \frac{k-1}{k-2}x_0-\frac{1}{k(k-2)}+\frac{1-x_0k}{k(k-2)}\cdot (\frac{1}{k-1})^{\log\log{n}} \\
        &\ge \frac{k-1}{k-2}x_0-\frac{1}{k(k-2)} - O(1)\cdot O(1/2^{\log\log{n}})=\frac{k-1}{k-2}x_0-\frac{1}{k(k-2)} - O(1/\log{n}),
    \end{align*}
    as required.
    Since $|S_t|=O(n^{1-x_t})$, we get that $|S_t|=O(n^{\frac{k-1}{k-2}x_0-\frac{1}{k(k-2)} + O(1/\log{n})})=O(n^{\frac{k-1}{k-2}x_0-\frac{1}{k(k-2)}})$, as required.
\end{proof}
In the next theorem, we summarize the properties of the hierarchical distance oracle presented in this section.

\begin{theorem}
    \label{T-Construct-to-T}
    Let $\frac{1}{k}\le x_0<1$ be a parameter, let $k>2$ and let $t=\ceil{\log\log{n}}$. Let $S_t\subseteq V$ be a random set with $O(n^{\frac{k-1}{k-2}x_0+\frac{1}{k(k-2)}})$ vertices.
    There is an algorithm that constructs in $O((m+kn^{1+x_0+\frac{1}{k}})\log\log{n})$-time a distance oracle with $O(n^{1+\frac{1}{k}})$ space, such that for every $u,v\in V$ such that $P(u,v)\subseteq G_{S_t}$ it returns $\hat{d}(u,v) \le (2k-1)d(u,v)$ in $O(k\log\log{n})$-time.
\end{theorem}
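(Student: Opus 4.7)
The plan is to observe that the theorem is essentially an assembly statement: everything has been proved in the preceding lemmas of the subsection, and the task is just to stitch the pieces together around Algorithm \ref{A-Construction-t}. First I would declare that the construction algorithm is exactly $\hADO.\Construct(G,k,x_0)$ from Algorithm \ref{A-Construction-t}, and that the query algorithm returns
\[
\min\!\Bigl(\ADO(G_{S_0},k)\Query(u,v),\;\min_{i\in[t]}\ADO_P(G_{S_i},k-1,S_{i-1})\Query(u,v)\Bigr).
\]

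Second, I would invoke the four results proved earlier in this subsection in sequence. The construction time bound $O((m+kn^{1+x_0+\frac{1}{k}})\log\log n)$ is exactly the content of Lemma \ref{L-construct-time-up-to-T}. The space bound $O(n^{1+\frac{1}{k}})$ is exactly the content of Lemma \ref{T-Construct-to-T-space}. The stretch guarantee — that $\hat d(u,v)\le (2k-1)d(u,v)$ whenever $P(u,v)\subseteq G_{S_t}$ — is exactly Lemma \ref{L-Construct-to-T-correction}, applied to the query defined above (since the minimum over the hierarchy is at most the particular estimate bounded in that lemma, and is at least $d(u,v)$ because each component oracle is non-contracting). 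Finally, the cardinality bound on $S_t$ (which I read as $|S_t|=O(n^{1-\frac{k-1}{k-2}x_0+\frac{1}{k(k-2)}})$, correcting the apparent typo in the statement) is exactly the content of Property \ref{P-Limit-of-x-to-t}, which follows from Claim \ref{C-x_i-serie} after substituting $t=\ceil{\log\log n}$ and bounding the geometric tail $(k-1)^{-t}=O(1/\log n)$.

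Since each ingredient is already established, there is no genuine obstacle; the only thing I would be careful about is the range-of-parameters check, namely that $\tfrac{1}{k}\le x_0<1$ and $k>2$ ensure (i) each $x_i$ stays in $(0,1]$ so that the sample sizes are well-defined, and (ii) the hypothesis of Lemma \ref{L-ADO-G_S} (using parameter $k-1\ge 2$) is met when constructing the parameterized oracles. Both are immediate from the recursion $x_i=x_0-\tfrac{1}{k(k-1)}+\tfrac{x_{i-1}}{k-1}$, which is monotone and bounded above by $\tfrac{k-1}{k-2}x_0$. With these checks in hand, the theorem statement follows by simply concatenating the four cited results.
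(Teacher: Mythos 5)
Your proposal is correct and matches the paper's own proof, which likewise just concatenates Lemma~\ref{L-construct-time-up-to-T}, Lemma~\ref{T-Construct-to-T-space}, Lemma~\ref{L-Construct-to-T-correction}, and Property~\ref{P-Limit-of-x-to-t}; your reading of the set-size clause as $|S_t|=O(n^{1-\frac{k-1}{k-2}x_0+\frac{1}{k(k-2)}})$ is the intended one. The extra parameter-range checks you mention are a reasonable (if unstated in the paper) addition.
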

\Cref{T-Construct-to-T} follows from~\Cref{L-construct-time-up-to-T},~\Cref{T-Construct-to-T-space},~\Cref{L-Construct-to-T-correction},  and~\Cref{P-Limit-of-x-to-t}.

Throughout the paper, we refer to the hierarchical distance oracle of~\Cref{T-Construct-to-T} as $\hADO(G,k,x_0)$.

\section{Faster constructions for weighted graphs}\label{S-Construction-weighted}
In this section, we present our main results, faster constructions for the classic $(2k-1)$-stretch distance oracle with $O(n^{1+\frac{1}{k}})$ space. 
In \Cref{S-Construction-Sparse} we present our simplest construction, a subquadratic time construction algorithm for graphs with $m=O(n^{2-\eps})$ for every $\eps>0$ and $k\geq 3$.

Next, in \Cref{S-construction-spanner} and \Cref{S-construction-spanner-ado}, we present two construction algorithms with a running time of the form $O(m+n^{1+f(k)})$. The first algorithm is faster for $4\leq k\leq 15$ and the second for $k\geq 16$.

\subsection{Truly subquadratic construction time for every $k>2$}\label{S-Construction-Sparse}
In this section, we present the first truly subquadratic construction time for every $2<k<6$. 
Specifically, we prove:
\Reminder{T-Construction-no-spanner}

The distance oracle is constructed as follows.
Let $x_0$ be a parameter to be determined later, where $\frac{1}{k} \le x_0 < 1$.
We start by constructing $\hADO(G,k,x_0)$ and obtain the set $S_t$. We then 
construct $\ADO_P(G,k-1,S_t)$.
We return the two constructed distance oracles.
The query algorithm returns $\min(\hADO(G,k,x_0)\Query(u,v),\ADO_P(G,k-1,S_t)\Query(u,v))$.

Next, we analyze the construction time and the space usage of the distance oracle.
\begin{lemma}\label{L-Construction-time-no-spanner}
    The distance oracle is constructed in $O(\max(n^{1+2/k}, m^{1-\frac{1}{k-1}}n^{\frac{2}{k-1}})\log\log{n})$-time and uses $O(n^{1+\frac{1}{k}})$-space.
\end{lemma}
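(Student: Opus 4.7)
The plan is to decompose the total cost into the two subroutines invoked by the construction and then choose $x_0$ to balance them. By \Cref{T-Construct-to-T}, building $\hADO(G,k,x_0)$ costs $O((m+kn^{1+x_0+1/k})\log\log n)$ time, uses $O(n^{1+1/k})$ space, and returns a set $S_t$ whose size (via \Cref{P-Limit-of-x-to-t}) is $O(n^{1-\frac{k-1}{k-2}x_0+\frac{1}{k(k-2)}})$. By \Cref{L-ADO-G_S}, building $\ADO_P(G,k-1,S_t)$ then costs $O(m|S_t|^{1/(k-1)})$ time and $O(n|S_t|^{1/(k-1)})$ space. The key simplification I would perform early is the identity
\[
\frac{1}{k-1}+\frac{1}{k(k-1)(k-2)}=\frac{(k-1)^2}{k(k-1)(k-2)}=\frac{k-1}{k(k-2)},
\]
which rewrites $|S_t|^{1/(k-1)}$ as $O(n^{\frac{k-1}{k(k-2)}-\frac{x_0}{k-2}})$ and so the second construction costs $O(m\cdot n^{\frac{k-1}{k(k-2)}-\frac{x_0}{k-2}})$.

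I would then split into two cases according to the density of $G$, writing $a:=\log_n m$. In the sparse regime $m\le n^{1+1/k}$ (equivalently $a\le 1+1/k$), I would simply take $x_0=1/k$; the $\hADO$ cost becomes $O(n^{1+2/k}\log\log n)$, while the exponent above collapses to $\frac{k-1}{k(k-2)}-\frac{1}{k(k-2)}=\frac{1}{k}$, so $\ADO_P$ costs $O(mn^{1/k})=O(n^{1+2/k})$, giving the $n^{1+2/k}$ term of the lemma. In the dense regime $m>n^{1+1/k}$, I would choose $x_0$ so that the two exponents $1+x_0+1/k$ and $a+\frac{k-1}{k(k-2)}-\frac{x_0}{k-2}$ coincide; solving $x_0(1+\tfrac{1}{k-2})=a-\tfrac{1}{k}-1+\tfrac{k-1}{k(k-2)}$ yields the value of $x_0$ foreshadowed in the overview, and after substituting back the common exponent simplifies to $\frac{a(k-2)+2}{k-1}$, i.e.\ the running time becomes $m^{1-\frac{1}{k-1}}n^{\frac{2}{k-1}}$. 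Combining the two regimes yields the claimed $O(\max(n^{1+2/k},m^{1-\frac{1}{k-1}}n^{\frac{2}{k-1}})\log\log n)$ bound.

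For the space bound, the $\hADO$ factor is already $O(n^{1+1/k})$ by \Cref{T-Construct-to-T}, so it suffices to check that $n|S_t|^{1/(k-1)}=O(n^{1+1/k})$. Using the simplified exponent this is equivalent to $\tfrac{k-1}{k(k-2)}-\tfrac{x_0}{k-2}\le \tfrac{1}{k}$, which rearranges to $x_0\ge 1/k$; this is exactly the standing assumption in both cases above, so the total space is $O(n^{1+1/k})$ as claimed.

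The main obstacle is the algebraic bookkeeping, in particular verifying the two collapses $\frac{1}{k-1}+\frac{1}{k(k-1)(k-2)}=\frac{k-1}{k(k-2)}$ and (after substituting the balancing $x_0$) $1+x_0+\frac{1}{k}=\frac{a(k-2)+2}{k-1}$, together with justifying that choosing $x_0=1/k$ in the sparse regime indeed dominates the balanced choice there. None of these steps require new ideas beyond \Cref{T-Construct-to-T} and \Cref{L-ADO-G_S}; the content of the lemma is really just that plugging in the right $x_0$ yields the stated bounds.
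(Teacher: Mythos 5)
Your proposal is correct and follows essentially the same route as the paper: decompose the cost into the $\hADO(G,k,x_0)$ construction and the $\ADO_P(G,k-1,S_t)$ construction via \Cref{T-Construct-to-T}, \Cref{P-Limit-of-x-to-t}, and \Cref{L-ADO-G_S}, then balance the two exponents by the choice of $x_0$ subject to $x_0\ge 1/k$ (your balancing $x_0$ is exactly the paper's $\frac{\log_n m\,k^2-2\log_n m\,k-k^2+2k+1}{k(k-1)}$, and your density case split is equivalent to the paper's $x_0=\max(1/k,\cdot)$). The only difference is cosmetic: you carry out and verify the exponent algebra by hand rather than citing a symbolic computation.
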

\begin{proof}
    From~\Cref{T-Construct-to-T} it follows that constructing $\hADO(G,k,x_0)$ takes $O((m+n^{1+x_0+\frac{1}{k}})\log\log{n})$ time.
    From Property~\Cref{P-Limit-of-x-to-t} it follows that $|S_t| = O(n^{1-\frac{k-1}{k-2}x_0+\frac{1}{k(k-2)}})$. From~\Cref{L-ADO-G_S} it follows that constructing $\ADO_P(G,k-1,S_t)$ takes $O(m|S_t|^{1/(k-1)})=O(mn^{\frac{1-\frac{k-1}{k-2}x_0+\frac{1}{k(k-2)}}{k-1}})$ time.
    
    Thus, the total construction time is $O((m+n^{1+x_0+\frac{1}{k}})\log\log{n}+mn^{\frac{1-\frac{k-1}{k-2}x_0+\frac{1}{k(k-2)}}{k-1}})$.
    To minimize the construction time, we choose $x_0$ such that 
    $n^{1+x_0+\frac{1}{k}} = mn^{\frac{1-\frac{k-1}{k-2}x_0+\frac{1}{k(k-2)}}{k-1}}$.
    From 
    \href{https://www.wolframalpha.com/input?i=1%2Bx%2B1%2Fk+%3D+a%2B%281-%28k-1%29x%2F%28k-2%29%2B1%2F%28k%28k-2%29%29%29%2F%28k-1%29}{\footnotemark}\footnotetext{https://www.wolframalpha.com/input?i=1\%2Bx\%2B1\%2Fk+\%3D+a\%2B\%281-\%28k-1\%29x\%2F\%28k-2\%29\%2B1\%2F\%28k\%28k-2\%29\%29\%29\%2F\%28k-1\%29, $x:=x_0$, $a:=\log_n(m), k:=k$}
    we get that for $x_0=\frac{\log_n{m}k^2-2\log_n{m}k-k^2+2k+1}{k(k-1)}$ the terms are equal.
    $\hADO$ requires that $x_0 \ge \frac{1}{k}$, thus we let $x_0=\max(\frac{1}{k}, \frac{\log_n{m}k^2-2\log_n{m}k-k^2+2k+1}{k(k-1)})$.
    
    Increasing $x_0$ increases $n^{1+x_0+\frac{1}{k}}$ while decreasing $mn^{\frac{1-\frac{k-1}{k-2}x_0+\frac{1}{k(k-2)}}{k-1}}$, therefore we get that
    $O(n^{1+\max(\frac{1}{k}, \frac{\log_n{m}k^2-2\log_n{m}k-k^2+2k+1}{k(k-1)})+\frac{1}{k}}) \ge O(mn^{\frac{1-\frac{k-1}{k-2}x_0+\frac{1}{k(k-2)}}{k-1}})$, and thus the construction time is:
    \begin{align*}
        &O((m+n^{1+x_0+\frac{1}{k}})\log\log{n}+mn^{\frac{1-\frac{k-1}{k-2}x_0+\frac{1}{k(k-2)}}{k-1}}) 
        = O(n^{1+x_0+\frac{1}{k}}\log\log{n})
        \\&=O(n^{1+\max(\frac{1}{k}, \frac{\log_n{m}k^2-2\log_n{m}k-k^2+2k+1}{k(k-1)})+\frac{1}{k}}\log\log{n})\\
        &=O(\max(n^{1+2/k}, n^{1+\frac{1}{k}+\frac{\log_n{m}k^2-2\log_n{m}k-k^2+2k+1}{k(k-1)}})\log\log{n}) \\
        &= O(\max(n^{1+2/k}, n^{\log_n{m}\frac{k^2-2k}{k(k-1)}+1+\frac{1}{k}+\frac{-k^2+2k+1}{k(k-1)}}) \log\log{n}) \\
        &= O(\max(n^{1+2/k}, n^{\log_n{m}(1-\frac{1}{k-1})+\frac{(k^2-k)+(k-1)-k^2+2k+1}{k(k-1)}})\log\log{n})
        \\&= O(\max(n^{1+2/k}, m^{1-\frac{1}{k-1}}n^{\frac{2}{k-1}})\log\log{n}),
    \end{align*}
    as required.
    
    From~\Cref{T-Construct-to-T} and the fact that $x_0 \ge \frac{1}{k}$ it follows that $\hADO(G,k,x_0)$ uses $O(n^{1+\frac{1}{k}})$ space. 
    From~\Cref{L-ADO-G_S} it follows that $\ADO_P(G,k-1,S_t)$ uses $O(n|S_t|^{\frac{1}{k-1}})=O(n^{1+\frac{1-x_t}{k-1}})$ space. Since $x_t > x_0 \ge \frac{1}{k}$, we have that $O(n^{1+\frac{1-x_t}{k-1}}) = O(n^{1+\frac{1}{k}})$, as required.
\end{proof}
Next, we prove that the stretch is $2k-1$.
\begin{lemma}\label{L-Correctness-no-spanner}
    $\hat{d}(u,v) \le (2k-1)d(u,v)$
\end{lemma}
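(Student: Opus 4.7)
The plan is to do a two-case split on whether the shortest $u$--$v$ path is contained in $G_{S_t}$, matching the two oracles whose minimum is returned.

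First, I would handle the easy case $P(u,v)\subseteq G_{S_t}$. Here the hierarchical oracle is designed exactly for this situation: \autoref{T-Construct-to-T} (equivalently \autoref{L-Construct-to-T-correction}) guarantees $\hADO(G,k,x_0)\Query(u,v)\le (2k-1)d(u,v)$. Since $\hat d(u,v)$ is the minimum of the two oracles' answers, this immediately bounds $\hat d(u,v)\le (2k-1)d(u,v)$.

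Second, I would handle the complementary case $P(u,v)\not\subseteq G_{S_t}$. Here I invoke \autoref{L-No-Intersect-Weighted} (applied with $S=S_t$) to deduce $\max(h_{S_t}(u),h_{S_t}(v))\le d(u,v)$, and in particular $\min(h_{S_t}(u),h_{S_t}(v))\le d(u,v)$. Now the parameterized oracle $\ADO_P(G,k-1,S_t)$ applies to \emph{all} pairs in $V\times V$, and by \autoref{L-ADO-G_S} it satisfies
\[
\ADO_P(G,k-1,S_t)\Query(u,v)\;\le\; 2\min(h_{S_t}(u),h_{S_t}(v)) + (2(k-1)-1)\,d(u,v).
\]
Plugging in the bound $\min(h_{S_t}(u),h_{S_t}(v))\le d(u,v)$ gives $\ADO_P(G,k-1,S_t)\Query(u,v)\le 2d(u,v)+(2k-3)d(u,v)=(2k-1)d(u,v)$, and again taking the minimum with the hADO estimate preserves this bound.

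Combining the two cases yields $\hat d(u,v)\le (2k-1)d(u,v)$ unconditionally. There is essentially no obstacle: the design of $\hADO$ was made precisely so that its guarantee covers exactly the regime where $P(u,v)$ stays in $G_{S_t}$, while the second oracle $\ADO_P(G,k-1,S_t)$ is built on the \emph{full} graph $G$ (not on $G_{S_t}$) so that its $(2k-3)$-stretch term always applies to the true distance $d_G(u,v)$; the additive $2\min(h_{S_t}(u),h_{S_t}(v))$ is absorbed by the $d(u,v)$ slack coming from \autoref{L-No-Intersect-Weighted} whenever $P(u,v)$ leaves $G_{S_t}$.
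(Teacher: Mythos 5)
Your proposal is correct and follows essentially the same route as the paper: the identical two-case split on whether $P(u,v)\subseteq G_{S_t}$, invoking \autoref{T-Construct-to-T} for the first case and combining \autoref{L-No-Intersect-Weighted} with the guarantee of \autoref{L-ADO-G_S} for $\ADO_P(G,k-1,S_t)$ in the second. No gaps.
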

\begin{proof}
    We divide the proof into two cases. The case that $P(u,v)\subseteq G_{S_t}$ and the case that $P(u,v)\not\subseteq G_{S_t}$.
    Consider the case that $P(u,v)\subseteq G_{S_t}$. From~\Cref{T-Construct-to-T} we have that $\hat{d}(u,v) \le \hADO(G_{S_t},k, x_0)\Query(u,v) \le (2k-1)d(u,v)$, as required.
    
    Consider now the case that $P(u,v)\not\subseteq G_{S_t}$. 
    From~\Cref{L-No-Intersect-Weighted} it follows that $\max(h_{S_t}(u),h_{S_t}(v))\le d(u,v)$. 
    From~\Cref{L-ADO-G_S} 
    it follows that $\ADO_P(G,k-1,S_t)\Query(u,v) \le 2\min(h_{S_t}(u),h_{S_t}(v)) + (2(k-1)-1)d(u,v)$. Since $\max(h_{S_t}(u),h_{S_t}(v))\le d(u,v)$ we have that 
    \begin{align*}
        \hat{d}(u,v) \le \ADO_P(G,k-1,S_t)\Query(u,v) \le 2d(u,v) + (2(k-1)-1)d(u,v) \le (2k-1)d(u,v),
    \end{align*}
    as required.
\end{proof}
\Cref{T-Construction-no-spanner} follows from~\Cref{L-Construction-time-no-spanner} and~\Cref{L-Correctness-no-spanner}.
\subsection{$km+kn^{\frac{3}{2}+\frac{3}{4k-6}}$ construction time}\label{S-construction-spanner}
In this section, we present a faster construction algorithm for $k\ge 4$, which improves upon the results of~\cite{DBLP:conf/soda/Wulff-Nilsen12}.
We prove:
\Reminder{T-Construction-spanner}

First, we describe the construction of the distance oracle.
Similarly to the construction of \Cref{S-Construction-Sparse}, let $\frac{1}{k}\le x_0<1$ be a parameter to be determined later.
We start by constructing $\hADO(G,k,x_0)$, and save the returned set $S_t$.

Let $k'=k-2$. Using~\Cref{L-Spanner-2k-1} we create a $(2k'-1)$-stretch spanner $H$. 
For every $u\in V$, we add the edge $(u,p_{S_t}(u))$ to $H$ with weight $h_{S_t}(u)$.
For every $s_1,s_2\in S_t$ we compute and store the value of $d_H(s_1,s_2)$. The construction algorithm is presented in \Cref{A-Construction-Weighted-Small-k}.

Given $u,v\in V$, the query algorithm returns $\min(\hADO(G,k,x_0)\Query(u,v), h_{S_t}(u)+d_H(p_{S_t}(u),p_{S_t}(v))+h_{S_t}(v))$. 
Next, in the following two lemmas, we analyze the construction  time and the space usage of the distance oracle.

\begin{algorithm2e}[t]
\caption{$\ConstructADO(G,k)$}\label{A-Construction-Weighted-Small-k}
Construct $\hADO(G,k,x_0) $ \\
$H \gets \Spanner(G, k')$ [\Cref{L-Spanner-2k-1}]\\
$H \gets H \cup \{(u,p_{S_t}(u), h_{S_t}(u)) \mid u\in V\}$ \\
$d_H \gets \{s_1,s_2:d_H(s_1,s_2) \mid s_1,s_2\in S_t\}$
\end{algorithm2e}

\begin{lemma}\label{L-Construction-spanner}
    The distance oracle is constructed in 
    $O(k'm+(m+kn^{1+x_0+\frac{1}{k}})\log\log{n}+n^{2+\frac{1}{k'}-\frac{k-1}{k-2}x_0+\frac{1}{k(k-2)}})$-time.
\end{lemma}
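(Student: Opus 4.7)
The plan is to add up the running times of the four construction steps, leveraging the two key ingredients we already have: \autoref{T-Construct-to-T} for the cost of $\hADO$ together with the size bound on $S_t$, and \autoref{L-Spanner-2k-1} for the spanner size. The first two summands in the target bound will correspond directly to the spanner construction and to $\hADO$, while the third summand will come from doing $|S_t|$ Dijkstra searches on the (augmented) spanner $H$.

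Concretely, I would proceed as follows. First, by \autoref{T-Construct-to-T}, constructing $\hADO(G,k,x_0)$ (and obtaining $S_t$ with $|S_t|=O(n^{1-\frac{k-1}{k-2}x_0+\frac{1}{k(k-2)}})$ via \autoref{P-Limit-of-x-to-t}) costs $O((m+kn^{1+x_0+\frac{1}{k}})\log\log n)$. Second, invoking \autoref{L-Spanner-2k-1} with parameter $k'$ produces a $(2k'-1)$-spanner $H$ with $O(n^{1+\frac{1}{k'}})$ edges in $O(k'm)$ time. Third, to install the augmenting edges $(u,p_{S_t}(u))$ of weight $h_{S_t}(u)$, we first apply \autoref{L-Construct-S} to compute $p_{S_t}(u)$ and $h_{S_t}(u)$ for all $u\in V$ in $O(m+n\log n)$ time, and then insert the $n$ new edges in $O(n)$ time; since $n=O(n^{1+\frac{1}{k'}})$, the edge count of $H$ is unchanged up to constants.

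The decisive step is the all-pairs shortest path computation in $S_t\times S_t$ on the augmented spanner $H$. I would simply run Dijkstra from every $s\in S_t$ in $H$; each such search costs $O(|E(H)|+n\log n)=O(n^{1+\frac{1}{k'}})$, and since $|S_t|=O(n^{1-\frac{k-1}{k-2}x_0+\frac{1}{k(k-2)}})$ by \autoref{P-Limit-of-x-to-t}, the total cost of this step is
\[
O\!\left(n^{1-\frac{k-1}{k-2}x_0+\frac{1}{k(k-2)}}\cdot n^{1+\frac{1}{k'}}\right)=O\!\left(n^{2+\frac{1}{k'}-\frac{k-1}{k-2}x_0+\frac{1}{k(k-2)}}\right).
\]
Summing the four contributions and absorbing the dominated $O(m+n\log n)$ and $O(n)$ terms into $O(k'm)$ and the $\log\log n$ factor of the $\hADO$ cost yields the claimed bound.

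No serious obstacle arises: the argument is pure aggregation. The only step that needs the earlier machinery in a nontrivial way is the bound on $|S_t|$, which is exactly what \autoref{P-Limit-of-x-to-t} supplies. The only bookkeeping subtlety is checking that the $n$ augmenting edges do not inflate $|E(H)|$ beyond $O(n^{1+\frac{1}{k'}})$, which is immediate.
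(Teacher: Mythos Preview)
Your proposal is correct and follows essentially the same approach as the paper's proof: both bound the $\hADO$ cost via \autoref{T-Construct-to-T}, the spanner cost via \autoref{L-Spanner-2k-1}, and the $S_t\times S_t$ distance computation as $O(|E(H)|\cdot|S_t|)$ using \autoref{P-Limit-of-x-to-t} for $|S_t|$. You supply a bit more detail (explicitly invoking Dijkstra and \autoref{L-Construct-S} for the augmenting edges), but the argument is the same.
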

\begin{proof}
    From~\Cref{T-Construct-to-T}, constructing $\hADO(G,k,x_0)$ takes $O((m+n^{1+x_0+\frac{1}{k}})\log\log{n})$ time. 
    From~\Cref{L-Spanner-2k-1} constructing $H$ takes $O(k'm)$ time.
    Computing $d_H(s_1,s_2)$ for every $s_1,s_2\in S_t$ takes $O(|E(H)|\cdot |S_t|)$.
    From Property~\ref{P-Limit-of-x-to-t} we have that $|S_t|=O(n^{1-\frac{k-1}{k-2}x_0+\frac{1}{k(k-2)}})$ and since $|E(H)|=O(n^{1+\frac{1}{k'}})$ we get that $O(|E(H)|\cdot |S_t|)=O(n^{1+\frac{1}{k'}+1-\frac{k-1}{k-2}x_0+\frac{1}{k(k-2)}})=O(n^{2+\frac{1}{k'}-\frac{k-1}{k-2}x_0+\frac{1}{k(k-2)}})$ time.
    Thus, the total runtime is $O(k'm+(m+kn^{1+x_0+\frac{1}{k}})\log\log{n}+n^{2+\frac{1}{k'}-\frac{k-1}{k-2}x_0+\frac{1}{k(k-2)}})$, as required.
\end{proof}

\begin{lemma}\label{L-ADO-Spanner-Space}
    If $x_0 \ge 1/2 - \frac{1}{k}+\frac{1}{k(k-1)}$ then the distance oracle uses $O(kn^{1+\frac{1}{k}})$-space. 
\end{lemma}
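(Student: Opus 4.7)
The plan is to enumerate the components that the final distance oracle actually stores and bound each of them separately, then verify algebraically that the hypothesis on $x_0$ is exactly tight for the dominant term.

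Inspecting Algorithm~\ref{A-Construction-Weighted-Small-k} and the query, the oracle needs to retain: (i)~the sub-oracle $\hADO(G,k,x_0)$; (ii)~the arrays $h_{S_t}(\cdot)$ and $p_{S_t}(\cdot)$, needed to evaluate the second term of the query; and (iii)~the lookup table $\{d_H(s_1,s_2) : s_1,s_2\in S_t\}$. The spanner $H$ itself is only an intermediate object used to compute this table, so it does not contribute to the final space. For~(i), Lemma~\ref{T-Construct-to-T-space} already gives $O(n^{1+\frac{1}{k}})$. For~(ii), the two arrays trivially use $O(n)$ space, which is absorbed by~(i).

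The nontrivial component is~(iii), which contributes $O(|S_t|^2)$. By Property~\ref{P-Limit-of-x-to-t},
\[
|S_t|^2 \;=\; O\!\left(n^{\,2-\frac{2(k-1)}{k-2}x_0+\frac{2}{k(k-2)}}\right).
\]
So it suffices to show that the exponent is at most $1+\tfrac{1}{k}$ under the stated bound on $x_0$. The plan is to substitute the threshold value $x_0 = \tfrac{1}{2}-\tfrac{1}{k}+\tfrac{1}{k(k-1)}$ and verify the identity
\[
2-\frac{2(k-1)}{k-2}\left(\tfrac{1}{2}-\tfrac{1}{k}+\tfrac{1}{k(k-1)}\right)+\frac{2}{k(k-2)} \;=\; 1+\tfrac{1}{k},
\]
after which monotonicity of the exponent in $x_0$ (the coefficient of $x_0$ is negative) handles all larger values of $x_0$. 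The computation collects terms over the common denominator $k(k-2)$ and reduces the numerator to $(k-2)(k+1)$, which cancels the $(k-2)$ factor to yield $\tfrac{k+1}{k}$. Summing the three contributions gives $O(n^{1+\frac{1}{k}})$ total space, matching the claimed bound (with the $k$ factor absorbing the $O(\log\log n)$ overhead from $\hADO$ that is implicit in the hierarchical construction).

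The only potential obstacle is the arithmetic of step~(iii), and in particular showing that the threshold $\tfrac{1}{2}-\tfrac{1}{k}+\tfrac{1}{k(k-1)}$ is not only sufficient but is exactly the equality case, so that the lemma statement is optimal within this construction. Once that identity is established, the rest of the argument is a straightforward aggregation.
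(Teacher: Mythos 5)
Your proposal is correct and follows essentially the same route as the paper: bound the $\hADO$ component via its space lemma, bound the lookup table by $|S_t|^2$ using Property~\ref{P-Limit-of-x-to-t}, and verify algebraically that the threshold value of $x_0$ makes the resulting exponent exactly $1+\frac{1}{k}$ (the paper phrases this as $|S_t|\le O(n^{\frac12+\frac{1}{2k}})$ and then squares, but the arithmetic is identical, and your numerator $(k-2)(k+1)$ checks out). The extra bookkeeping of the $O(n)$ arrays $h_{S_t}(\cdot),p_{S_t}(\cdot)$ and the observation that $H$ need not be retained are harmless refinements that do not change the argument.
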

\begin{proof}
    From~\Cref{T-Construct-to-T} saving $\hADO(G,k,x_0)$  takes $O(kn^{1+\frac{1}{k}})$ space.
    Saving $d(s_1,s_2)$ for every $s_1,s_2\in S_t$ takes $O(|S_t|^2)$. 
    From Property~\ref{P-Limit-of-x-to-t} we know that 
    $|S_t|=O(n^{1-\frac{k-1}{k-2}x_0+\frac{1}{k(k-2)}})$. 
    Since $x_0 \ge 1/2 - \frac{1}{k}+\frac{1}{k(k-1)}$ it follows that $|S_t| \le O(n^{1-1/2+\frac{1}{2k}})$.
    Since $|S_t| \le O(n^{1-1/2+\frac{1}{2k}})$ we have that $O(|S_t|^2)\le O(n^{2(1-1/2+\frac{1}{2k})})=O(n^{1+\frac{1}{k}})$, as required.
\end{proof}
Next, we show that the stretch of the distance oracle is $2k-1$.
\begin{lemma}\label{L-ADO-Spanner-correctness}
    $\hat{d}(u,v) \le (2k-1)d(u,v)$
\end{lemma}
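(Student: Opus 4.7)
The plan is to split the proof into the two standard cases based on whether the shortest path stays in $G_{S_t}$ or not, mirroring the structure of~\Cref{L-Correctness-no-spanner}.

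\textbf{Case 1:} $P(u,v) \subseteq G_{S_t}$. Here I simply invoke~\Cref{T-Construct-to-T}, which guarantees that $\hADO(G,k,x_0)\Query(u,v) \le (2k-1)d(u,v)$. Since the final estimate takes a minimum that includes this value, we are done.

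\textbf{Case 2:} $P(u,v) \not\subseteq G_{S_t}$. The idea is to bound the second term of the $\min$, namely $h_{S_t}(u)+d_H(p_{S_t}(u),p_{S_t}(v))+h_{S_t}(v)$. First I would apply~\Cref{L-No-Intersect-Weighted} to conclude that $\max(h_{S_t}(u),h_{S_t}(v)) \le d(u,v)$. Next I need to bound $d_H(p_{S_t}(u),p_{S_t}(v))$. This is the key step: because $H$ was augmented with edges $(u,p_{S_t}(u))$ of weight $h_{S_t}(u)$ for every $u\in V$, the graph $H$ contains a path from $p_{S_t}(u)$ to $u$ of weight $h_{S_t}(u)$, followed by the spanner approximation of $P(u,v)$ (of weight at most $(2k'-1)d(u,v)$ since $H$ is a $(2k'-1)$-spanner), followed by the augmented edge from $v$ to $p_{S_t}(v)$ of weight $h_{S_t}(v)$. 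Thus
\[
d_H(p_{S_t}(u),p_{S_t}(v)) \le h_{S_t}(u) + (2k'-1)d(u,v) + h_{S_t}(v).
\]

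Plugging this in gives
\[
\hat{d}(u,v) \le 2h_{S_t}(u) + 2h_{S_t}(v) + (2k'-1)d(u,v).
\]
Using $h_{S_t}(u), h_{S_t}(v) \le d(u,v)$ and $k' = k-2$, this becomes $\hat{d}(u,v) \le 4d(u,v) + (2k-5)d(u,v) = (2k-1)d(u,v)$, as required.

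The main conceptual point (and the reason the augmented-edge trick matters) is precisely this bound $d_H(p_{S_t}(u),p_{S_t}(v)) \le h_{S_t}(u) + (2k'-1)d(u,v) + h_{S_t}(v)$: without the augmentation, the spanner would only give the weaker bound $(2k'-1)(h_{S_t}(u)+d(u,v)+h_{S_t}(v))$, which would not combine with the two outer terms to yield the desired $(2k-1)$-stretch using $k' = k-2$. I do not expect any real obstacle here beyond carefully verifying the arithmetic $4 + (2k-5) = 2k-1$; the path construction in $H$ is exactly the one the augmentation was designed to support.
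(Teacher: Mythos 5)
Your proof is correct and follows essentially the same route as the paper's: the same two-case split on $P(u,v)\subseteq G_{S_t}$, the same use of \Cref{T-Construct-to-T} and \Cref{L-No-Intersect-Weighted}, and the same bound $d_H(p(u),p(v)) \le h(u)+(2k'-1)d(u,v)+h(v)$ via the augmented edges, with only trivially different bookkeeping in the final arithmetic. (If anything, you are slightly more explicit than the paper, which attributes this bound to "the triangle inequality" without flagging that it relies on the edges $(u,p_{S_t}(u))$ being present in $H$.)
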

\begin{proof}
    Let $P(u,v)$ be the shortest path between $u$ and $v$.
    We divide the proof into two cases. 
    Consider the case that $P(u,v)\subseteq G_{S_t}$. From~\Cref{T-Construct-to-T} we know that $\hADO\Query(u,v) \le (2k-1)d(u,v)$, as required.
    Consider now the case that $P(u,v)\not\subseteq G_{S_t}$. From~\Cref{L-No-Intersect-Weighted} it follows that $\max(h_{S_t}(u),h_{S_t}(v)) \le d(u,v)$.
    From the triangle inequality, it follows that $d_H(p(u),p(v)) \le h(u) + d_H(u,v) + h(v)$.
    Since $H$ is a $(2k'-1)$-spanner we get that $d_H(u,v)\le (2k'-1)d(u,v)$. Thus, $d_H(p(u),p(v)) \le h(u) + h(v) + (2k'-1)d(u,v)$. Since $h(u)\le d(u,v)$ and $h(v) \le d(u,v)$ we get that $d_H(p(u),p(v)) \le h(u) + h(v) + (2k'-1)d(u,v) \le (2k'+1)d(u,v)$.
    
    Since $\hat{d}(u,v) \le h(u)+d_H(p(u),p(v))+h(v)$, we get that $\hat{d}(u,v) \le (2k'+3)d(u,v)$. Since $k'= k-2$ we get that $\hat{d}(u,v) \le (2k-1)d(u,v)$, as required.
\end{proof}

Next, we move to choose the value of $x_0$ that minimizes the construction time of our distance oracle.
From~\Cref{L-Construction-spanner} we know that the construction time is $O(k'm+kn^{1+x_0+\frac{1}{k}}+n^{2+\frac{1}{k'}-\frac{k-1}{k-2}x_0+\frac{1}{k(k-2)}})$. To minimize the running time we want to have that $1+x_0+\frac{1}{k} = 2+\frac{1}{k'}-\frac{k-1}{k-2}x_0+\frac{1}{k(k-2)}$. From \href{https://www.wolframalpha.com/input?i=y%3D%5Cfrac%7Bk-1%7D%7Bk-2%7Dx-%5Cfrac%7B1%7D%7Bk%28k-2%29%7D%2C+1%2Bx%2B1%2Fk%3D2%2B1%2F%28k-2%29-y}{\footnotemark}
\footnotetext{https://www.wolframalpha.com/input?i=y\%3D\%5Cfrac\%7Bk-1\%7D\%7Bk-2\%7Dx-\%5Cfrac\%7B1\%7D\%7Bk\%28k-2\%29\%7D\%2C+1\%2Bx\%2B1\%2Fk\%3D2\%2B1\%2F\%28k-2\%29-y, where $x=x_0$ and $y=x_t$}, we get that for $x_0=\frac{k^2-2k+3}{k(2k-3)}$, the terms are equal. We get a construction time of 
\begin{align*}
     O(k'm+(m+kn^{1+x_0+\frac{1}{k}})\log\log{n}+n^{2+\frac{1}{k'}-\frac{k-1}{k-2}x_0+\frac{1}{k(k-2)}}) 
    &= O((km+kn^{1+\frac{1}{k}+\frac{k^2-2k+3}{k(2k-3)}})\log\log{n}) \\
    &= O((km+kn^{1+\frac{1}{k}+(1/2+\frac{6-k}{2k(2k-3)})})\log\log{n}) \\
    &= O((km+kn^{\frac{3}{2}+\frac{3}{4k-6}})\log\log{n}),
\end{align*}
as required.
\Cref{T-Construction-spanner} follows from~\Cref{L-Construction-spanner},~\Cref{L-ADO-Spanner-Space} and~\Cref{L-ADO-Spanner-correctness}.
\subsection{$\sqrt{k}m+n^{1+\frac{2\sqrt{2}}{\sqrt{k}}}$ construction time} \label{S-construction-spanner-ado}

In this section, we present a faster construction algorithm for $k\ge 16$, which improves upon the results of~\cite{DBLP:conf/soda/Wulff-Nilsen12}.
We prove:
\Reminder{T-construction-spanner-ado}

First, we describe the construction of the distance oracle.
Similarly to the construction in \Cref{S-Construction-Sparse}, let $\frac{1}{k}\le x_0<1$ be a parameter to be determined later.
We start by constructing $\hADO(G,k,x_0)$, and save the returned set $S_t$.
Let $k'$ and $k''$ be two integer parameters to be determined later. Using~\Cref{L-Spanner-2k-1} we create a $(2k'-1)$-stretch spanner $H$, which we  augment with  an edge $(u,p_{S_t}(u))$ of weight $h_{S_t}(u)$, for every $u\in V$.
We construct $\ADO_{P'}(H,k'', S_t)$.
The construction algorithm is presented in \Cref{A-Construction-Weighted-Large-k}.

Given $u,v\in V$, the query algorithm returns: $$\min(\hADO(G,k,x_0)\Query(u,v), h_{S_t}(u)+\ADO_{P'}(H,k'',S_t)\Query(p_{S_t}(u),p_{S_t}(v))+h_{S_t}(v))$$
Next, in the following two lemmas, we analyze the construction  time and the space usage of the distance oracle.
\begin{algorithm2e}[t] 
\caption{$\ConstructADO(G,k)$}\label{A-Construction-Weighted-Large-k}
Construct $\hADO(G,k, x_0)$ \\
$H \gets \Spanner(G, k')$ [\Cref{L-Spanner-2k-1}]\\
$H \gets H \cup \{(u,p_{S_t}(u), h_{S_t}(u)) \mid u\in V\}$ \\
Construct $\ADO_{P'}(H,k'',S_t)$
\end{algorithm2e}

\begin{lemma}\label{L-Construction-spanner-Almost-Linear}
    The distance oracle is constructed in 
    $\Ot(k'm+k''n^{1+\frac{1}{k'}+\frac{1-\frac{k-1}{k-2}x_0+\frac{1}{k(k-2)}}{k''}}+kn^{1+x_0+\frac{1}{k}})$-time.
\end{lemma}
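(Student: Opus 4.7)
The plan is to simply sum the construction times of the four components in \Cref{A-Construction-Weighted-Large-k} and invoke the appropriate earlier lemmas to bound each. Since the target is an $\tilde{O}$ bound, logarithmic factors such as the $\log\log n$ from \Cref{T-Construct-to-T} can be absorbed at the end.

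First, by \Cref{T-Construct-to-T}, building $\hADO(G,k,x_0)$ in Line~1 takes $O((m+kn^{1+x_0+\frac{1}{k}})\log\log n)$ time, and as a by-product returns the sample $S_t$ together with $p_{S_t}(u)$ and $h_{S_t}(u)$ for every $u\in V$ (these are computed along the way via \Cref{L-Construct-S} when constructing $G_{S_t}$). Second, by \Cref{L-Spanner-2k-1}, the $(2k'-1)$-spanner $H$ in Line~2 is built in $O(k'm)$ time and has $|E(H)|=O(n^{1+\frac{1}{k'}})$ edges. Third, Line~3 only adds $n$ extra edges to $H$ using values already computed, which takes $O(n)$ time and preserves $|E(H)|=O(n^{1+\frac{1}{k'}})$.

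For the last step, Line~4 constructs $\ADO_{P'}(H,k'',S_t)$. By \Cref{L-ADO-S-Only} this costs $O(k''\cdot|E(H)|\cdot|S_t|^{1/k''})$ time. Using \Cref{P-Limit-of-x-to-t}, $|S_t|=O(n^{1-\frac{k-1}{k-2}x_0+\frac{1}{k(k-2)}})$, so this term becomes
\[
O\!\left(k''\, n^{1+\frac{1}{k'}+\frac{1-\frac{k-1}{k-2}x_0+\frac{1}{k(k-2)}}{k''}}\right).
\]

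Summing the four contributions and absorbing the $\log\log n$ factor from \Cref{T-Construct-to-T} into $\tilde{O}$ yields the claimed bound $\tilde{O}\bigl(k'm+k''n^{1+\frac{1}{k'}+\frac{1-\frac{k-1}{k-2}x_0+\frac{1}{k(k-2)}}{k''}}+kn^{1+x_0+\frac{1}{k}}\bigr)$, noting that the $O(k'm)$ and $O(m\log\log n)$ pieces merge into the leading $k'm$ term. There is no real obstacle here beyond routine bookkeeping; the only point worth attention is verifying that $|S_t|$ is bounded via \Cref{P-Limit-of-x-to-t} and that \Cref{L-ADO-S-Only} (rather than \Cref{L-ADO-G_S}) is the right oracle to invoke in Line~4, since its reduced $O(|S|^{1+1/k''})$ space will also be needed for the space analysis in the subsequent (not-yet-stated) lemma.
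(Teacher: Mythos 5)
Your proposal is correct and follows essentially the same route as the paper's proof: bound Line~1 by \Cref{T-Construct-to-T}, Line~2 by \Cref{L-Spanner-2k-1}, and Line~4 by \Cref{L-ADO-S-Only} combined with the bound on $|S_t|$ from Property~\ref{P-Limit-of-x-to-t}, then sum. The only difference is that you additionally account for the $O(n)$ cost of Line~3 and the availability of $p_{S_t}$ and $h_{S_t}$, details the paper leaves implicit.
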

\begin{proof}
    From~\Cref{T-Construct-to-T}, we know that constructing $\hADO(G,k,x_0)$ takes $\Ot(m+n^{1+x_0+\frac{1}{k}})$ time.
    From~\Cref{L-Spanner-2k-1}, constructing $H$ takes $O(k'm)$ time.
    From~\Cref{L-ADO-S-Only}, constructing $\ADO_{P'}(H,k'', S_t)$ takes $O(k''|E(H)|\cdot |S_t|^{\frac{1}{k''}})$. From Property~\ref{P-Limit-of-x-to-t} we have that $|S_t|=O(n^{1-\frac{k-1}{k-2}x_0+\frac{1}{k(k-2)}})$. Thus, constructing $\ADO_{P'}$ takes $O(k''|E(H)|\cdot |S_t|^{\frac{1}{k''}}) = 
    O(k''n^{1+\frac{1}{k'}}\cdot n^{(\frac{1}{k''})(1-\frac{k-1}{k-2}x_0+\frac{1}{k(k-2)})}) = O(k''n^{1+\frac{1}{k'}+\frac{1-\frac{k-1}{k-2}x_0+\frac{1}{k(k-2)}}{k''}})$ time.
\end{proof}

\begin{lemma}\label{L-Almost-Linear-Space}
    If $(1-\frac{k-1}{k-2}x_0+\frac{1}{k(k-2)})(1+\frac{1}{k''}) \le 1+\frac{1}{k}$ then the distance oracle uses $O(kn^{1+\frac{1}{k}})$ space.
\end{lemma}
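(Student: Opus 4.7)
The plan is to enumerate the components retained by the oracle after construction and bound each separately. Inspecting~\Cref{A-Construction-Weighted-Large-k}, the only persistent objects are $\hADO(G,k,x_0)$, the auxiliary tables $h_{S_t}(\cdot)$ and $p_{S_t}(\cdot)$, and the parameterized oracle $\ADO_{P'}(H,k'',S_t)$. A key observation is that the spanner $H$ itself is needed only to build $\ADO_{P'}$ and can be discarded afterwards, since the query algorithm never accesses $H$ directly.

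Since $\hADO$ requires $x_0 \ge 1/k$, \Cref{T-Construct-to-T} immediately gives that $\hADO(G,k,x_0)$ occupies $O(n^{1+1/k})$ space, while $h_{S_t}$ and $p_{S_t}$ contribute only $O(n)$. For $\ADO_{P'}(H,k'',S_t)$, \Cref{L-ADO-S-Only} gives a space bound of $O(|S_t|^{1+1/k''})$. Plugging in $|S_t|=O(n^{1-\frac{k-1}{k-2}x_0+\frac{1}{k(k-2)}})$ from~\Cref{P-Limit-of-x-to-t} yields
\[
|S_t|^{1+1/k''} \;=\; O\!\left(n^{\left(1-\frac{k-1}{k-2}x_0+\frac{1}{k(k-2)}\right)\left(1+\frac{1}{k''}\right)}\right),
\]
and the hypothesis of the lemma is exactly that this exponent is at most $1+1/k$. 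Hence $\ADO_{P'}(H,k'',S_t)$ also fits within $O(n^{1+1/k})$ space.

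Summing the three contributions gives a total of $O(kn^{1+1/k})$ as claimed. The only real subtlety is the observation that $H$ need not be stored; without discarding it, the $O(n^{1+1/k'})$ spanner bound could dominate and the claimed space bound would fail for the regime of interest. Otherwise the argument is a direct substitution of~\Cref{P-Limit-of-x-to-t} into the space bound of~\Cref{L-ADO-S-Only}, combined with the hypothesis.
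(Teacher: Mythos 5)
Your proof is correct and follows essentially the same route as the paper's: apply \Cref{T-Construct-to-T} for $\hADO$, substitute the bound on $|S_t|$ from \Cref{P-Limit-of-x-to-t} into the $O(|S_t|^{1+1/k''})$ bound of \Cref{L-ADO-S-Only}, and invoke the hypothesis. Your added remarks---that the spanner $H$ need not be retained and that the $h_{S_t},p_{S_t}$ tables cost only $O(n)$---are accurate points the paper leaves implicit, but they do not change the argument.
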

\begin{proof}
    From~\Cref{T-Construct-to-T} it follows that $\hADO(G,k,x_0)$  uses $O(kn^{1+\frac{1}{k}})$ space and
    from~\Cref{L-ADO-S-Only} it follows that $\ADO_{P'}(H,k'',S_t)$ uses $O(|S_t|^{1+\frac{1}{k''}})$ space.
    From Property~\ref{P-Limit-of-x-to-t} we know that $|S_t|=O(n^{1-\frac{k-1}{k-2}x_0+\frac{1}{k(k-2)}})$. Thus,  $\ADO_{P'}(H,k'',S_t)$ uses $O(n^{(1-\frac{k-1}{k-2}x_0+\frac{1}{k(k-2)})(1+\frac{1}{k''})})$ space. Since $(1-\frac{k-1}{k-2}x_0+\frac{1}{k(k-2)})(1+\frac{1}{k''}) \le 1+\frac{1}{k}$ we get that saving $\ADO_{P'}(H,k'',S_t)$ takes $O(n^{1+\frac{1}{k}})$, as required.
    
\end{proof}

Next, we show that the stretch of the distance oracle is $2k-1$.
\begin{lemma}\label{L-Almost-Linear-Correctness}
    If $2 + (2k''-1)(2k'+1) \le 2k-1$ then $\hat{d}(u,v) \le (2k-1)d(u,v)$
\end{lemma}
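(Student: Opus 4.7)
The plan is to split into the two standard cases according to whether $P(u,v)\subseteq G_{S_t}$, mirroring the proof structure of \Cref{L-ADO-Spanner-correctness}. In the first case, when $P(u,v)\subseteq G_{S_t}$, \Cref{T-Construct-to-T} applied to $\hADO(G,k,x_0)$ already gives $\hADO\Query(u,v)\le (2k-1)d(u,v)$, and since $\hat{d}(u,v)$ is the minimum over both components of the query, we are done.

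For the second case, when $P(u,v)\not\subseteq G_{S_t}$, I would invoke \Cref{L-No-Intersect-Weighted} to obtain $\max(h_{S_t}(u),h_{S_t}(v))\le d(u,v)$. Next I would bound $d_H(p_{S_t}(u),p_{S_t}(v))$ by using the augmenting edges $(u,p_{S_t}(u))$ and $(v,p_{S_t}(v))$ of weights $h_{S_t}(u)$ and $h_{S_t}(v)$, together with the $(2k'-1)$-spanner property of $H$ on the $u$–$v$ portion:
\begin{equation*}
d_H(p_{S_t}(u),p_{S_t}(v)) \;\le\; h_{S_t}(u) + d_H(u,v) + h_{S_t}(v) \;\le\; h_{S_t}(u) + h_{S_t}(v) + (2k'-1)d(u,v) \;\le\; (2k'+1)d(u,v).
\end{equation*}

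Then I would apply the stretch guarantee of the parameterized oracle $\ADO_{P'}$ (\Cref{L-ADO-S-Only}), which since $p_{S_t}(u),p_{S_t}(v)\in S_t$ yields
\begin{equation*}
\ADO_{P'}(H,k'',S_t)\Query(p_{S_t}(u),p_{S_t}(v)) \;\le\; (2k''-1)\, d_H(p_{S_t}(u),p_{S_t}(v)) \;\le\; (2k''-1)(2k'+1)d(u,v).
\end{equation*}
Combining this with $h_{S_t}(u)+h_{S_t}(v)\le 2d(u,v)$ from the case assumption, the second component of the query is bounded by $\bigl(2+(2k''-1)(2k'+1)\bigr)d(u,v)$, which by the hypothesis of the lemma is at most $(2k-1)d(u,v)$, completing the proof.

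The only subtle point is making sure that $p_{S_t}(u)$ and $p_{S_t}(v)$ are indeed in $S_t$ so that $\ADO_{P'}$ (which only supports queries between vertices of its parameter set) is applicable; this is immediate from the definition of $p_{S_t}$. The rest is a routine chain of triangle inequalities and stretch bounds, so no real obstacle is expected beyond carefully combining the additive terms $h_{S_t}(u),h_{S_t}(v)$ with the multiplicative spanner and oracle stretches.
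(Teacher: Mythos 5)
Your proposal is correct and follows essentially the same argument as the paper: the same case split on whether $P(u,v)\subseteq G_{S_t}$, the same use of \Cref{L-No-Intersect-Weighted}, the same triangle-inequality bound $d_H(p(u),p(v))\le h(u)+h(v)+(2k'-1)d(u,v)\le (2k'+1)d(u,v)$ via the augmenting edges, and the same application of the $(2k''-1)$-stretch of $\ADO_{P'}$. No issues.
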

\begin{proof}
    Let $P(u,v)$ be the shortest path between $u$ and $v$.
    We divide the proof into two cases. 
    Consider the case that $P(u,v)\subseteq G_{S_t}$. From~\Cref{T-Construct-to-T} we know that $\hADO\Query(u,v) \le (2k-1)d(u,v)$, as required.
    
    Consider now the case that $P(u,v)\not\subseteq G_{S_t}$. From~\Cref{L-No-Intersect-Weighted} it follows that $\max(h_{S_t}(u),h_{S_t}(v)) \le d(u,v)$.
    From~\Cref{L-ADO-S-Only} we know that $\ADO_{P'}(H,k'', S_t)\Query(p(u),p(v)) \le (2k''-1)d_H(p(u),p(v))$.
    From the triangle inequality, and since $H$ is a $(2k'-1)$-spanner it follows that
    $d_H(p(u),p(v)) \le h(u)+(2k'-1)d(u,v)+h(v)\le (2k'+1)d(u,v)$. Where the last inequality holds since $\max(h(u),h(v)) \le d(u,v)$.

    Since $\hat{d}(u,v) \le h(u) + \ADO_{P'}(H,k'', S_t)\Query(p(u),p(v)) + h(v)$ and $\max(h(u),h(v)) \le d(u,v)$ we get that $\hat{d}(u,v) \le (2+(2k''-1)(2k'+1))d(u,v)$.
    Since $2 + (2k''-1)(2k'+1) \le 2k-1$ we get that $\hat{d}(u,v) \le (2k-1)d(u,v)$, as required.
\end{proof}
Next, we move to choose the value of the parameters $x_0,k',k''$ that minimizes the running time of our algorithm while maintaining the correctness of the query.

From~\Cref{L-Construction-spanner-Almost-Linear} we have that the construction time is $O(k'm+k''n^{1+\frac{1}{k'}+\frac{1-\frac{k-1}{k-2}x_0+\frac{1}{k(k-2)}}{k''}}+kn^{1+x_0+\frac{1}{k}})$. In addition, for the correctness,~\Cref{L-Almost-Linear-Correctness} requires that 
$2 + (2k''-1)(2k'+1) \le 2k-1$.
To minimize the term $n^{1+\frac{1}{k'}+\frac{1-\frac{k-1}{k-2}x_0+\frac{1}{k(k-2)}}{k''}}$ regarding $k'$ and $k''$ we want to have that $\frac{1}{k'}=\frac{1-\frac{k-1}{k-2}x_0+\frac{1}{k(k-2)}}{k''}$.
Thus, to find the optimal values for $x_0,k',k''$ we need to solve the following equations:
\begin{itemize}
    \item $1+\frac{1}{k'}+\frac{1-\frac{k-1}{k-2}x_0+\frac{1}{k(k-2)}}{k''} = 1+x_0+\frac{1}{k}$
    \item $2 + (2k''-1)(2k'+1) = 2k-1$
    \item $\frac{1}{k'}=\frac{1-\frac{k-1}{k-2}x_0+\frac{1}{k(k-2)}}{k''}$
\end{itemize}
From \href{https://www.wolframalpha.com/input?i=y%3D%5Cfrac%7Bk-1%7D%7Bk-2%7Dx-%5Cfrac%7B1%7D%7Bk%28k-2%29%7D%2C+1%2Bx%2B1%2Fk%3D1%2B1%2Fz%2B%281-y%29%2Fw%2C+2%2B%282w-1%29%282z%2B1%29+%3D+2k-1%2C+1%2Fz%3D%281-y%29%2Fw}{\footnotemark}\footnotetext{https://www.wolframalpha.com/input?i=y\%3D\%5Cfrac\%7Bk-1\%7D\%7Bk-2\%7Dx-\%5Cfrac\%7B1\%7D\%7Bk\%28k-2\%29\%7D\%2C+1\%2Bx\%2B1\%2Fk\%3D1\%2B1\%2Fz\%2B\%281-y\%29\%2Fw\%2C+2\%2B\%282w-1\%29\%282z\%2B1\%29+\%3D+2k-1\%2C+1\%2Fz\%3D\%281-y\%29\%2Fw, where $y=x_t, x=x_0, z=k', w=k'', k=k$}, we get that for 
$k'=\frac{\sqrt{8k^3-32k+25}+4k-5}{4(k-1)}$, $k''=\frac{\sqrt{8k^3-32k+25}-4k+3}{4(k-2)}$ and $x_0=\frac{\sqrt{8k^3-32k+25}-5k+6}{k(k-1)}$ the equations are satisfied.\footnote{Since $k'$ and $k''$ are integers, we need to consider the rounding up of these values; this does not change the asymptotical runtime guarantee.} Thus, we get a construction time of: 
\begin{align*}
    \Ot(k'm+k''n^{1+\frac{1}{k'}+\frac{1-x_t}{k''}}+kn^{1+x_0+\frac{1}{k}}) 
    &= \Ot(\sqrt{k}m+kn^{1+\frac{\sqrt{8k^3-32k+25}-5k+6}{k(k-1)}+\frac{1}{k}}) \\
    &= \Ot(\sqrt{k}m+kn^{1+\frac{\sqrt{8k^3-32k+25}-4k+5}{k(k-1)}})\\
    &= \Ot(\sqrt{k}m+kn^{1+\frac{2\sqrt{2}}{\sqrt{k}}-\frac{5}{k}+\frac{2\sqrt{2}}{k^{1.5}}+\frac{1}{k^2}-\frac{2\sqrt{2}}{k^{2.5}}+\frac{9}{4\sqrt{2}k^{3.5}}+\frac{1}{k^4}+O(k^{-4.5})}) \\
    &= \Ot(\sqrt{k}m+kn^{1+\frac{2\sqrt{2}}{\sqrt{k}}}),
\end{align*}
Where the last inequality follows from the fact that $-\frac{5}{k}+\frac{2\sqrt{2}}{k^{1.5}}+\frac{1}{k^2}-\frac{2\sqrt{2}}{k^{2.5}}+\frac{9}{4\sqrt{2}k^{3.5}}+\frac{1}{k^4}+O(k^{-4.5}) < 0$ for every $k\ge 1$. Thus, we get a construction time of $O(\sqrt{k}m+kn^{1+\frac{2\sqrt{2}}{\sqrt{k}}})$,
as required.
\Cref{T-construction-spanner-ado} follows from~\Cref{L-Construction-spanner-Almost-Linear},~\Cref{L-Almost-Linear-Space} and~\Cref{L-Almost-Linear-Correctness}.

\ifC \else
\section{Faster constructions for unweighted graphs}\label{S-Construction-unweighted}

In this section, we consider unweighted undirected graphs. In the following two lemmas, we present some preliminaries needed for our new algorithm. Later, in each subsection, we present a new fast construction for distance oracle for unweighted graphs.

We start by improving~\Cref{L-No-Intersect-Weighted} for the case of unweighted graphs. 
\begin{lemma}\label{L-no-intersect-unweighted}
    If $P(u,v)\not\subseteq G_{S}$, then $h(u)+h(v)\le d_G(u,v)+1$
\end{lemma}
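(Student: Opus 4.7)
The plan is to mimic the proof of \autoref{L-No-Intersect-Weighted} but exploit the fact that in an unweighted graph every edge has weight exactly $1$, so all distances (and in particular $h_S$) are non-negative integers. This integrality is precisely what upgrades the weighted bound $\max(h(u),h(v)) \le d_G(u,v)$ into an additive bound on the \emph{sum} $h(u)+h(v)$.

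First, I would locate a ``bad'' edge on the shortest path. Since $P(u,v) \not\subseteq G_S$, there exists an edge $(a,b)$ on $P(u,v)$ such that $(a,b) \notin E_S$. By the definition of $E_S = \cup_v E_S(v)$, this means $w(a,b) > h_S(a)$ \emph{and} $w(a,b) > h_S(b)$. Since $G$ is unweighted, $w(a,b) = 1$, and therefore $h_S(a), h_S(b) < 1$. Because $h_S$ takes values in the non-negative integers, this forces $h_S(a) = h_S(b) = 0$, i.e., both $a$ and $b$ lie in $S$.

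Next, I would translate this into a bound on $h_S(u) + h_S(v)$. Without loss of generality, assume $a$ precedes $b$ on $P(u,v)$, so that $d_G(u,a) + 1 + d_G(b,v) = d_G(u,v)$. Since $a \in S$ we have $h_S(u) \le d_G(u,a)$, and since $b \in S$ we have $h_S(v) \le d_G(v,b)$. Summing these two inequalities yields
\[
h_S(u) + h_S(v) \;\le\; d_G(u,a) + d_G(v,b) \;=\; d_G(u,v) - 1 \;\le\; d_G(u,v) + 1,
\]
which is the claimed bound (with two to spare).

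The only delicate step is the integrality argument used to deduce $h_S(a) = h_S(b) = 0$ from $h_S(a), h_S(b) < 1$; this is exactly what fails in the weighted setting of \autoref{L-No-Intersect-Weighted} and is the whole point of separating the unweighted case.
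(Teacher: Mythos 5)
Your proof is correct and follows essentially the same route as the paper: locate the edge of $P(u,v)$ that is missing from $G_S$, bound $h_S$ at its two endpoints, and propagate along the shortest path via the triangle inequality. The only difference is that you invoke integrality to place both endpoints in $S$, which in fact yields the stronger bound $h(u)+h(v)\le d_G(u,v)-1$, whereas the paper only uses $h_S(x),h_S(y)\le 1$ and settles for the stated $d_G(u,v)+1$.
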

\begin{proof}
    Since $P(u,v)\not\subseteq G_{S}$, there is an edge $(x,y)\subseteq P(u,v)$ such that $(x,y)\notin G_{S}$. 
    Since $(x,y)\notin G_{S}$, it follows that $h(x)\le 1$ and $h(y) \le 1$.
    From the triangle inequality, we have that $h(u) \le d(u,x) + h(x) \le d(u,x)+1$, and that $h(v) \le d(y,v)+1$. Summing these inequalities, we get that $h(u)+h(v) \le d(u,x)+1+d(v,y)+1$. 
    From the definition of $P(u,v)$ it follows that $d(u,v) = d(u,x) + 1 + d(y,v)$. Thus, $h(u)+h(v) \le d(u,x)+1+d(v,y)+1 =d(u,v)+1$, as required.
\end{proof}
The following spanner for unweighted graphs replaces the spanner from  
\Cref{L-Spanner-2k-1} in most of our constructions for unweighted graphs.
\begin{lemma}[\cite{DBLP:journals/talg/BaswanaKMP10}]\label{L-Spanner-k-k-1-unweighted}
For any integer $k$, for an  unweighted graph $G$, a $(k, k - 1)$-spanner with $O(n^{1+\frac{1}{k}})$ edges can be computed in $O(m)$ time.
\end{lemma}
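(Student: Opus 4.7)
The plan is to use a Baswana--Sen style hierarchical clustering, adapted for unweighted graphs so that additive slack can be charged against BFS levels rather than cluster-radius inflation. I would first define a nested sequence of sample sets $V = R_0 \supseteq R_1 \supseteq \cdots \supseteq R_{\lceil k/2 \rceil}$, where $R_i$ is obtained by retaining each vertex of $R_{i-1}$ independently with probability $n^{-1/k}$. Standard Chernoff-type concentration gives $|R_i| = \tilde{O}(n^{1-i/k})$ whp, and the clustering performed at each level produces $\tilde{O}(n^{1+1/k})$ edges overall. This part mirrors Thorup--Zwick / Baswana--Sen accounting, so I would reuse their analyses rather than redo the calculations.

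Next I would describe the edge-adding rule in the unweighted setting. At each level $i$, every clustered vertex $v$ adds exactly one edge to its parent cluster center $p_i(v)$, and every unclustered vertex adds one edge to each adjacent level-$i$ cluster. Because the graph is unweighted, cluster radii grow by exactly $1$ per level (not by arbitrary edge weights as in the weighted case), so after $\lceil k/2 \rceil$ levels every cluster has radius at most $\lceil k/2 \rceil$ in the spanner. The construction cost is $O(m)$ because each edge of $G$ is inspected a constant number of times when determining whether to add it or discard it; a careful implementation with bucket-based BFS at each level avoids logarithmic factors.

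For the stretch, the plan is to take an arbitrary edge $(u,v) \in E$ missing from the spanner and show that the spanner contains a path of length at most $k \cdot 1 + (k-1) = 2k-1$ replacing it, and more generally that for any pair $x,y$ with $d_G(x,y)=d$ the spanner distance is at most $kd + (k-1)$. The argument proceeds by induction on the highest level $i$ at which the BFS-ball of $x$ (or $y$) first meets $R_i$: if the edge survives at level $i$ then we win directly, otherwise we detour through a cluster center at level $i+1$, adding at most $2$ extra edges and losing one unit of BFS depth, so the accumulated additive error across all $\lceil k/2 \rceil$ levels stays within $k-1$. The main obstacle will be the last paragraph: showing the additive overhead is exactly $k-1$ (not $2k-2$) requires the Halperin--Zwick trick of charging detour edges to distinct shortest-path edges of $P(x,y)$, so that the slack accumulated during the ascent through levels is absorbed by the multiplicative factor $k$ rather than summed into the additive term. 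I would follow the Baswana--Kavitha--Mehlhorn--Pettie bookkeeping verbatim for this step.
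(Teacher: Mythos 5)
First, note that the paper does not prove this lemma at all: it is quoted from Baswana, Kavitha, Mehlhorn and Pettie \cite{DBLP:journals/talg/BaswanaKMP10} and used as a black box, so there is no internal proof to compare your attempt against. Judged on its own, your sketch correctly identifies the right family of constructions --- a Baswana--Sen style level-by-level clustering in which, because the graph is unweighted, a level-$i$ cluster has radius at most $i$, so detours can be charged against hop counts --- but the two load-bearing steps are exactly the ones you do not carry out.

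Concretely: (1) The construction is underspecified and, as written, does not give the claimed size. With per-level sampling probability $n^{-1/k}$ you need about $k$ levels, not $\lceil k/2\rceil$, before the surviving cluster centers are few enough; after only $\lceil k/2\rceil$ levels there are still roughly $n^{1/2}$ residual clusters, and you never say how these are interconnected (in Baswana--Sen the final vertex--cluster joining phase is precisely where the $O(n^{1+1/k})$ bound comes from, and it only works because the last sample set is small enough). (2) The entire content of the lemma --- what distinguishes a $(k,k-1)$-spanner from the trivial per-edge $(2k-1)$ bound --- is that the additive overhead over a path of length $d$ is $k-1$ \emph{in total}, not per edge. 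Your proposed induction ``on the highest level at which the BFS-ball of $x$ first meets $R_i$'' is not the right shape of argument for this; the correct analysis walks along the shortest path $P(x,y)$ and amortizes each detour through a level-$i$ cluster center against the $\geq i$ path edges that the detour skips. You acknowledge this and state that you would follow the Baswana--Kavitha--Mehlhorn--Pettie bookkeeping ``verbatim''; since that bookkeeping \emph{is} the theorem, the proposal as written is a pointer to the proof rather than a proof.
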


\subsection{$(2k-1,2)$-approximation in $km+kn^{\frac{3}{2}+\frac{1}{k-1}-\frac{1}{4k-6}}$ construction time}\label{S-Additive-2}
In this section, we improve the result of \Cref{S-construction-spanner} at the price of allowing an additive error of $2$, we prove:
\begin{theorem}\label{T-Construction-spanner-Unweighted}   
    Let $G=(V, E)$ be an unweighted undirected graph and $k\geq 3$ be an integer. There is an $O(km+kn^{\frac{3}{2}+\frac{1}{k-1}-\frac{1}{4k-6}})$ time algorithm that constructs a $(2k-1,2)$-approximation distance oracle that uses $\Ot(n^{1+\frac{1}{k}})$-space and answers distance queries in $\Ot(k)$ time.
\end{theorem}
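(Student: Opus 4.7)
The plan is to adapt the spanner-augmented construction of \Cref{T-Construction-spanner} to unweighted graphs, exploiting the sharper bound $h_{S_t}(u)+h_{S_t}(v)\le d(u,v)+1$ from \Cref{L-no-intersect-unweighted} to reduce the spanner stretch and thereby obtain a better construction time at the price of an additive $2$.

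First I would build $\hADO(G,k,x_0)$ for a parameter $x_0\ge 1/k$ to be chosen later, and recover the sampled set $S_t$ of expected size $\Ot(n^{1-\frac{k-1}{k-2}x_0+\frac{1}{k(k-2)}})$. Then I would construct a $(2k'-1)$-stretch spanner $H$ of $G$ via \Cref{L-Spanner-2k-1}, augment it with the edges $(u,p_{S_t}(u))$ of weight $h_{S_t}(u)$ for every $u\in V$, and run Dijkstra from every $s\in S_t$ in $H$ to precompute $d_H(s_1,s_2)$ for all pairs in $S_t\times S_t$. The query returns
\[
\hat d(u,v)\;=\;\min\bigl(\hADO(G,k,x_0)\Query(u,v),\ h_{S_t}(u)+d_H(p_{S_t}(u),p_{S_t}(v))+h_{S_t}(v)\bigr).
\]

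For correctness, when $P(u,v)\subseteq G_{S_t}$ the stretch $(2k-1)$ follows directly from \Cref{T-Construct-to-T}. The key new case is $P(u,v)\not\subseteq G_{S_t}$: the augmented edges combined with the triangle inequality in $H$ give $d_H(p(u),p(v))\le h(u)+h(v)+(2k'-1)d(u,v)$, hence $\hat d(u,v)\le 2(h(u)+h(v))+(2k'-1)d(u,v)$. Plugging in \Cref{L-no-intersect-unweighted} yields $\hat d(u,v)\le (2k'+1)d(u,v)+2$, and setting $k'=k-1$ gives precisely a $(2k-1,2)$-approximation. Compared with the weighted analysis, which needed $k'=k-2$, the additive slack of $2$ lets us raise $k'$ by one, so $|E(H)|$ drops from $O(n^{1+1/(k-2)})$ to $O(n^{1+1/(k-1)})$, and this is the whole source of the speedup.

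For the running time, the dominant costs are $\Ot(m+n^{1+x_0+1/k})$ from $\hADO$ and $\Ot(|E(H)|\cdot|S_t|)=\Ot(n^{2+\frac{1}{k-1}-\frac{k-1}{k-2}x_0+\frac{1}{k(k-2)}})$ from the $|S_t|$ Dijkstra calls. Balancing these two exponents gives $x_0+\frac{1}{k}=1+\frac{1}{k-1}-\frac{k-1}{k-2}x_0+\frac{1}{k(k-2)}$; solving yields $x_0=\frac{1}{2}+\frac{1}{k(k-1)}-\frac{1}{4k-6}$, and a short calculation using $\frac{1}{k(k-1)}+\frac{1}{k}=\frac{1}{k-1}$ confirms $1+x_0+\frac{1}{k}=\frac{3}{2}+\frac{1}{k-1}-\frac{1}{4k-6}$, giving the claimed $\Ot(km+kn^{\frac{3}{2}+\frac{1}{k-1}-\frac{1}{4k-6}})$ bound. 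I would still need to verify the space budget, i.e.\ that $|S_t|^2=O(n^{1+1/k})$ and that $x_0\ge 1/k$ so that $\hADO$ applies; both reduce to the elementary inequality $k\ge 2$. The main obstacle is the tight stretch argument in the $P(u,v)\not\subseteq G_{S_t}$ case, which rests on simultaneously bringing in the augmented edges, the spanner triangle inequality, and \Cref{L-no-intersect-unweighted}; once this $(2k'+1,2)$-bound is in hand the parameter optimization is mechanical.
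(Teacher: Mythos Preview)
Your proposal is correct and matches the paper's proof essentially line for line: the same construction (use $\hADO(G,k,x_0)$, a $(2k'-1)$-spanner $H$ augmented with the $(u,p_{S_t}(u))$ edges, and all-pairs Dijkstra from $S_t$ in $H$), the same correctness argument via \Cref{L-no-intersect-unweighted} yielding $(2k'+1)d(u,v)+2$ with $k'=k-1$, and the same balancing of $1+x_0+\tfrac{1}{k}$ against $2+\tfrac{1}{k-1}-\tfrac{k-1}{k-2}x_0+\tfrac{1}{k(k-2)}$ to arrive at the stated exponent. Your $x_0=\tfrac{1}{2}+\tfrac{1}{k(k-1)}-\tfrac{1}{4k-6}$ is just a rewriting of the paper's $x_0=\tfrac{k^3-3k^2+4k-3}{k(2k^2-5k+3)}$, and the space and $x_0\ge 1/k$ checks are indeed routine for $k\ge 3$.
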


First, we describe the construction of the distance oracle. The construction is identical to the construction of \Cref{S-construction-spanner}. Except we use a different value of $k'$ ($k-1$ instead of $k-2$). 
Let $\frac{1}{k}\le x_0<1$ be a parameter to be determined later.
We start by constructing $\hADO(G,k,x_0)$, and save the returned set $S_t$.

Let $k'=k-1$. Using~\Cref{L-Spanner-2k-1} we create a $(2k'-1)$-stretch spanner $H$. 
For every $u\in V$, we add the edge $(u,p_{S_t}(u))$ to $H$ with weight $h_{S_t}(u)$.
For every $s_1,s_2\in S_t$, we compute and store the value of $d_H(s_1,s_2)$. The construction algorithm is presented in \Cref{A-Construction-Weighted-Small-k}.

Given $u,v\in V$, the query algorithm returns $\min(\hADO(G,k,x_0)\Query(u,v), h_{S_t}(u)+d_H(p_{S_t}(u),p_{S_t}(v))+h_{S_t}(v))$. 
Since the construction is identical to \Cref{S-construction-spanner},~\Cref{L-Construction-spanner} and~\Cref{L-ADO-Spanner-Space} proves the space and running time of the distance oracle.
In the following lemma, we prove the approximation of the distance oracle.

\begin{lemma}\label{L-ADO-Spanner-correctness-unweighted}
    $\hat{d}(u,v) \le (2k-1)d(u,v)+2$ 
\end{lemma}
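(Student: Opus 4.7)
The plan is to mirror the structure of the weighted-case proof (\Cref{L-ADO-Spanner-correctness}), splitting on whether the shortest path stays inside $G_{S_t}$. The only place the analysis should change is the second case, where the unweighted refinement \Cref{L-no-intersect-unweighted} replaces \Cref{L-No-Intersect-Weighted}, giving a strictly tighter bound on $h(u)+h(v)$ that exactly compensates for using the larger spanner parameter $k'=k-1$ instead of $k'=k-2$.

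First, if $P(u,v)\subseteq G_{S_t}$, then by \Cref{T-Construct-to-T} the hierarchical oracle already yields $\hADO(G,k,x_0)\Query(u,v)\le (2k-1)d(u,v)$, so $\hat d(u,v)\le (2k-1)d(u,v)$ and no additive term is needed. The interesting case is $P(u,v)\not\subseteq G_{S_t}$. By \Cref{L-no-intersect-unweighted}, this yields the key inequality
\[
h_{S_t}(u)+h_{S_t}(v)\;\le\; d(u,v)+1.
\]
Writing $h(\cdot)$ for $h_{S_t}(\cdot)$ and $p(\cdot)$ for $p_{S_t}(\cdot)$, I then apply the triangle inequality in $H$ together with the augmenting edges $(u,p(u))$ of weight $h(u)$ and the $(2k'-1)$-spanner property to get
\[
d_H(p(u),p(v))\;\le\; h(u)+d_H(u,v)+h(v)\;\le\; (2k'-1)d(u,v)+h(u)+h(v).
\]
Since the query returns at most $h(u)+d_H(p(u),p(v))+h(v)$, combining the two displays gives
\[
\hat d(u,v)\;\le\;(2k'-1)d(u,v)+2\bigl(h(u)+h(v)\bigr)\;\le\;(2k'-1)d(u,v)+2d(u,v)+2\;=\;(2k'+1)d(u,v)+2.
\]
Substituting $k'=k-1$ yields the claimed bound $\hat d(u,v)\le(2k-1)d(u,v)+2$.

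There is no real obstacle here; the whole point is that the unweighted bound $h(u)+h(v)\le d(u,v)+1$ is tighter by a factor of roughly two than the weighted bound $\max(h(u),h(v))\le d(u,v)$ used in \Cref{L-ADO-Spanner-correctness}. This exact saving lets us afford a spanner whose multiplicative stretch is two larger (one unit in $k'$), and the price paid is a single additive $+2$ coming from the $+1$ in \Cref{L-no-intersect-unweighted} being multiplied by $2$ when we bound $2(h(u)+h(v))$. The only thing to be careful about is the bookkeeping of the factor $2$ in front of $h(u)+h(v)$: once that coefficient is correctly traced (from one copy in $d_H(p(u),p(v))$ and one copy in the outer $h(u)+\cdots+h(v)$), the arithmetic collapses to exactly $(2k-1)d(u,v)+2$, with no slack left to exploit.
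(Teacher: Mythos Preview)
Your proof is correct and follows essentially the same approach as the paper: both split on whether $P(u,v)\subseteq G_{S_t}$, invoke \Cref{T-Construct-to-T} in the first case, and in the second case combine \Cref{L-no-intersect-unweighted} with the triangle inequality and the $(2k'-1)$-spanner bound to arrive at $(2k'+1)d(u,v)+2$, then substitute $k'=k-1$. The only cosmetic difference is that you group the two copies of $h(u)+h(v)$ into a single $2(h(u)+h(v))$ term before applying the bound, whereas the paper applies it twice separately.
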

\begin{proof}
    We divide the proof into two cases. 
    Consider the case that $P(u,v) \subseteq G_{S_t}$. From~\Cref{T-Construct-to-T} we have that $\hADO\Query(u,v)\le (2k-1)d(u,v)$, as required.
    
    Consider now the case that $P(u,v) \not\subseteq G_{S_t}$, from~\Cref{L-no-intersect-unweighted} we know that $h(u)+h(v) \le d(u,v)+1$.
    From the triangle inequality we get that $d_H(p(u),p(v)) \le h(u) + d_H(u,v) + h(v)$.
    Since $H$ is a $(2k'-1)$-stretch spanner we know that $d_H(u,v)\le (2k'-1)d(u,v)$, and thus
    $d_H(p(u),p(v)) \le h(u)+(2k'-1)d_G(u,v)+h(v) \le (2k'-1)d(u,v)+d(u,v)+1$. Where the last inequality holds since $h(u)+h(v) \le d(u,v)+1$.
    In the query algorithm, we have that $\hat{d}(u,v) \le h(u)+h(v)+d_H(p(u),p(v)) \le d(u,v)+1+(2k'-1)d(u,v)+d(u,v)+1 \le (2k'+1)d(u,v)+2$. Since  $k'= k-1$, we get that $\hat{d}(u,v) \le (2k-1)d(u,v)+2$, as required.
\end{proof}
Next, we move to choose the value of $x_0$ that minimizes the construction time of our distance oracle.
From~\Cref{L-Construction-spanner} we know that the construction time is $O(k'm+kn^{1+x_0+\frac{1}{k}}+n^{2+\frac{1}{k'}-\frac{k-1}{k-2}x_0+\frac{1}{k(k-2)}})$. To minimize the running time we want to have that $1+x_0+\frac{1}{k} = 2+\frac{1}{k'}-\frac{k-1}{k-2}x_0+\frac{1}{k(k-2)}$. From 
\href{https://www.wolframalpha.com/input?i=y%3D%5Cfrac%7Bk-1%7D%7Bk-2%7Dx-%5Cfrac%7B1%7D%7Bk%28k-2%29%7D%2C+1%2Bx%2B1%2Fk%3D2%2B1%2F%28k-1%29-y}{\footnotemark}
\footnotetext{https://www.wolframalpha.com/input?i=y\%3D\%5Cfrac\%7Bk-1\%7D\%7Bk-2\%7Dx-\%5Cfrac\%7B1\%7D\%7Bk\%28k-2\%29\%7D\%2C+1\%2Bx\%2B1\%2Fk\%3D2\%2B1\%2F\%28k-1\%29-y
where $x=x_0$ and $y=x_t$} we get that for 
$x_0=\frac{k^3-3k^2+4k-3}{k(2k^2-5k+3)}$ the terms are equal. We get a construction time of 
\begin{align*}
     O(k'm+kn^{1+x_0+\frac{1}{k}}+n^{2+\frac{1}{k'}-\frac{k-1}{k-2}x_0+\frac{1}{k(k-2)}}) 
    &= O(km+kn^{1+\frac{k^3-3k^2+4k-3}{k(2k^2-5k+3)}+\frac{1}{k}}) \\
    &= O(km+kn^{1+\frac{k^3-3k^2+4k-3+(2k^2-5k+3)}{k(2k^2-5k+3)}}) \\
    &= O(km+kn^{\frac{3}{2}+\frac{3k-5}{2(k-1)(2k-3)}}) \\
    &= O(km+kn^{\frac{3}{2}+\frac{1}{k-1}-\frac{1}{4k-6}}),
\end{align*}
as required.

\Cref{T-Construction-spanner-Unweighted} follows from~\Cref{L-Construction-spanner},~\Cref{L-ADO-Spanner-Space} and~\Cref{L-ADO-Spanner-correctness-unweighted}.

\subsection{$(2k-1,2k-2)$-approximation in $\approx km+kn^{\frac{3}{2}+\frac{1}{2k}}$ construction time}\label{S-Additive-2k-2}
In this section, we further improve the construction time for small values of $k$ at the cost of increasing the additive error from $2$ to $2k-2$, we prove:
\begin{theorem}\label{T-Construction-spanner-Unweighted-Additive}
    Let $G=(V, E)$ be an unweighted undirected graph and $k\geq 3$ be an integer.  There is an $O((km+kn^{\frac{3}{2}+\frac{1}{2k-3}-\frac{1}{2(2k-3)^2}})\log\log{n})$ time algorithm that constructs a $(2k-1,2k-2)$-approximation distance oracle that uses $\Ot(n^{1+\frac{1}{k}})$-space and answers distance queries in $\Ot(k)$ time.
\end{theorem}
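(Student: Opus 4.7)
The strategy is to follow the template of \Cref{S-Additive-2} with a single substantive change: replace the multiplicative $(2k'-1)$-spanner supplied by \Cref{L-Spanner-2k-1} with the hybrid $(\ell,\ell-1)$-spanner from \Cref{L-Spanner-k-k-1-unweighted} for $\ell=2k-3$. The latter has only $O(n^{1+1/(2k-3)})$ edges, strictly sparser than the multiplicative spanner used in \Cref{S-Additive-2}, which is precisely what lets us compute the pairwise $d_H(s_1,s_2)$ table faster, afford a larger $S_t$, and thereby reduce the required $x_0$ in the call to $\hADO$.

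Concretely, the construction I would run is: (i) build $\hADO(G,k,x_0)$ and retrieve $S_t$; (ii) compute a $(2k-3,2k-4)$-spanner $H$ via \Cref{L-Spanner-k-k-1-unweighted}; (iii) augment $H$ with each edge $(u,p_{S_t}(u))$ of weight $h_{S_t}(u)$ for all $u\in V$; (iv) run Dijkstra from every $s\in S_t$ in the augmented $H$ and store the $|S_t|\times|S_t|$ distance table. The query returns $\min(\hADO\Query(u,v),\,h_{S_t}(u)+d_H(p_{S_t}(u),p_{S_t}(v))+h_{S_t}(v))$, exactly in the style of \Cref{S-construction-spanner}.

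For correctness, the case $P(u,v)\subseteq G_{S_t}$ is handled directly by \Cref{T-Construct-to-T}. Otherwise, \Cref{L-no-intersect-unweighted} gives $h(u)+h(v)\le d(u,v)+1$. Combining the triangle inequality in $H$ (using the augmenting edges $(u,p(u))$ of weight $h(u)$) with the spanner guarantee $d_H(u,v)\le(2k-3)d(u,v)+(2k-4)$ yields
\begin{align*}
\hat d(u,v)&\le h(u)+h(v)+d_H(p(u),p(v))\\
&\le 2(h(u)+h(v))+(2k-3)d(u,v)+(2k-4)\\
&\le 2(d(u,v)+1)+(2k-3)d(u,v)+(2k-4)=(2k-1)d(u,v)+(2k-2),
\end{align*}
which is the target approximation.

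The runtime decomposes as $\Ot(m+kn^{1+x_0+1/k})$ for $\hADO$ (from \Cref{T-Construct-to-T}), $O(m)$ for the spanner, and $\Ot(|S_t|\cdot n^{1+1/(2k-3)})$ for the distance table via repeated Dijkstra in the (integer-weighted) augmented spanner. Substituting $|S_t|=O(n^{1-\frac{k-1}{k-2}x_0+\frac{1}{k(k-2)}})$ from \Cref{P-Limit-of-x-to-t} and balancing the term $n^{1+x_0+1/k}$ against $n^{2-\frac{k-1}{k-2}x_0+\frac{1}{k(k-2)}+\frac{1}{2k-3}}$ produces a linear equation in $x_0$; the algebra will collapse the resulting exponent to the claimed $\tfrac{3}{2}+\tfrac{1}{2k-3}-\tfrac{1}{2(2k-3)^2}$. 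The main obstacle is precisely this arithmetic, together with verifying the space bound: one must check that at the optimizing value of $x_0$, the table takes $|S_t|^2=O(n^{1+1/k})$ space, i.e. $x_t\ge \tfrac12-\tfrac{1}{2k}$, which imposes a lower-bound constraint on $x_0$ that must be consistent with the balancing choice; if not, the effective $x_0$ is pinned to this constraint and the running time is recomputed accordingly.
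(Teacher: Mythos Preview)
Your proposal is correct and matches the paper's own proof essentially step for step: the paper makes exactly the substitution you describe (the $(k',k'-1)$-spanner of \Cref{L-Spanner-k-k-1-unweighted} with $k'=2k-3$ in place of the multiplicative spanner), proves the stretch bound by the same chain of inequalities, and balances $1+x_0+\tfrac{1}{k}=2+\tfrac{1}{2k-3}-\tfrac{k-1}{k-2}x_0+\tfrac{1}{k(k-2)}$ to obtain $x_0=\tfrac{2k^3-8k^2+13k-9}{k(2k-3)^2}$, which collapses to the stated exponent. Your flagged space constraint is handled by invoking \Cref{L-ADO-Spanner-Space}, and the optimizing $x_0$ does satisfy its hypothesis (the paper does not spell out this verification either).
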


The construction is the same as the one of~\Cref{S-Additive-2}, but instead of using the $(2k-1)$-stretch spanner of~\Cref{L-Spanner-2k-1} we use the $(k,k-1)$-approximation spanner of~\Cref{L-Spanner-k-k-1-unweighted}. This substitution allows us to have $k'=2k-3$ (instead of $k'-1$ as in the previous section),  
Let $\frac{1}{k}\le x_0<1$ be a parameter to be determined later.
We start by constructing $\hADO(G,k,x_0)$, and save the returned set $S_t$.

Let $k'=2k-3$. Using~\Cref{L-Spanner-k-k-1-unweighted} we create a $(k',k'-1)$-stretch spanner $H$. 
For every $u\in V$, we add the edge $(u,p_{S_t}(u))$ to $H$ with weight $h_{S_t}(u)$.
For every $s_1,s_2\in S_t$ we compute and store the value of $d_H(s_1,s_2)$. The construction algorithm is presented in \Cref{A-Construction-Weighted-Small-k}.

Given $u,v\in V$, the query algorithm returns $\min(\hADO(G,k,x_0)\Query(u,v), h_{S_t}(u)+d_H(p_{S_t}(u),p_{S_t}(v))+h_{S_t}(v))$. 

Since the construction is identical to \Cref{S-construction-spanner} (and the spanner of~\Cref{L-Spanner-k-k-1-unweighted} is computed in $O(m)$ time),~\Cref{L-Construction-spanner} and~\Cref{L-ADO-Spanner-Space} proves the space and running time of the distance oracle.
Next, in the following lemma we prove the approximation of the distance oracle.
\begin{lemma}\label{L-Correctness-spanner-Unweighted-Additive}
    $\hat{d}(u,v) \le (2k-1)d(u,v)+2k-2$
\end{lemma}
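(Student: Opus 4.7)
The plan is to reproduce the two-case analysis of \autoref{L-ADO-Spanner-correctness-unweighted}, but substitute the $(k',k'-1)$-approximation spanner with $k'=2k-3$ (from \autoref{L-Spanner-k-k-1-unweighted}) for the purely multiplicative spanner used in the weighted setting. The looser multiplicative factor is compensated by the gain from \autoref{L-no-intersect-unweighted} together with a small additive slack of $2k-2$.

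If $P(u,v)\subseteq G_{S_t}$, \autoref{T-Construct-to-T} immediately gives $\hADO(G,k,x_0)\Query(u,v)\le(2k-1)d(u,v)$, which is within the claimed bound (with additive room to spare). The only interesting case is $P(u,v)\not\subseteq G_{S_t}$. Here I would invoke \autoref{L-no-intersect-unweighted} to obtain $h(u)+h(v)\le d(u,v)+1$. By the triangle inequality combined with the spanner guarantee,
\[
d_H(p(u),p(v))\;\le\; h(u)+d_H(u,v)+h(v)\;\le\; h(u)+h(v)+(2k-3)d(u,v)+(2k-4).
\]
The query returns at most $h(u)+h(v)+d_H(p(u),p(v))$, so
\[
\hat d(u,v)\;\le\; 2\bigl(h(u)+h(v)\bigr)+(2k-3)d(u,v)+(2k-4).
\]
Substituting $h(u)+h(v)\le d(u,v)+1$ yields
\[
\hat d(u,v)\;\le\; 2\bigl(d(u,v)+1\bigr)+(2k-3)d(u,v)+(2k-4)\;=\;(2k-1)d(u,v)+(2k-2),
\]
as required.

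The proof is essentially bookkeeping and no step poses a genuine obstacle; the only thing to verify is that the choice $k'=2k-3$ makes the multiplicative coefficients balance correctly. The two copies of $h(u)+h(v)$ contribute $2d(u,v)$ on top of the spanner's $(2k-3)d(u,v)$, giving the desired $(2k-1)d(u,v)$, while the additive error $2+(2k-4)=2k-2$ matches the claim exactly. Compared to \autoref{L-ADO-Spanner-correctness-unweighted} (where $k'=k-1$), we absorb one extra factor of roughly $k$ in the spanner stretch into the additive term, which is what enables the faster construction time advertised in \autoref{T-Construction-spanner-Unweighted-Additive}.
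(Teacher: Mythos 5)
Your proposal is correct and follows essentially the same two-case argument as the paper: the case $P(u,v)\subseteq G_{S_t}$ is handled by \autoref{T-Construct-to-T}, and otherwise \autoref{L-no-intersect-unweighted} gives $h(u)+h(v)\le d(u,v)+1$, which is paid twice (once in the query formula and once in bounding $d_H(p(u),p(v))$ via the triangle inequality), combining with the $(2k-3,2k-4)$ spanner guarantee to give exactly $(2k-1)d(u,v)+2k-2$. The bookkeeping matches the paper's line by line.
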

\begin{proof}
    We divide the proof into two cases. 
    Consider the case that $P(u,v) \subseteq G_{S_t}$. From~\Cref{T-Construct-to-T} we have that $\hADO\Query(u,v)\le (2k-1)d(u,v)$, as required.
    
    Consider now the case that $P(u,v) \not\subseteq G_{S_t}$, from~\Cref{L-no-intersect-unweighted} we know that $h(u)+h(v) \le d(u,v)+1$.
    From the triangle inequality we get that $d_H(p(u),p(v)) \le h(u) + d_H(u,v) + h(v)$.
    Since $H$ is a $(k',k'-1)$-approximation spanner we know that $d_H(u,v)\le k'd(u,v)+k'-1$, and thus
    $d_H(p(u),p(v)) \le h(u)+k'd(u,v)+k'-1+h(v) \le k'd(u,v)+k'-1 + d(u,v)+1$. Where the last inequality holds since $h(u)+h(v) \le d(u,v)+1$.
    
    In the query algorithm, we have that $\hat{d}(u,v) \le h(u)+h(v)+d_H(p(u),p(v)) \le d(u,v)+1+k'd(u,v)+k'-1 + d(u,v)+1 \le (k'+2)d(u,v) +k'+1$. Since  $k'= 2k-3$, we get that $\hat{d}(u,v) \le (2k-1)d(u,v)+2k-2$, as required.
\end{proof}
Next, we move to choose the value of $x_0$ that minimizes the construction time of our distance oracle.
From~\Cref{L-Construction-spanner} we know that the construction time is $O(k'm+kn^{1+x_0+\frac{1}{k}}+n^{2+\frac{1}{k'}-\frac{k-1}{k-2}x_0+\frac{1}{k(k-2)}})$. To minimize the running time we want to have that $1+x_0+\frac{1}{k} = 2+\frac{1}{k'}-\frac{k-1}{k-2}x_0+\frac{1}{k(k-2)}$. From \href{https://www.wolframalpha.com/input?i=y%3D%5Cfrac%7Bk-1%7D%7Bk-2%7Dx-%5Cfrac%7B1%7D%7Bk%28k-2%29%7D%2C+1%2Bx%2B1%2Fk%3D2%2B1%2F%282k-3%29-y}{\footnotemark}
\footnotetext{https://www.wolframalpha.com/input?i=y\%3D\%5Cfrac\%7Bk-1\%7D\%7Bk-2\%7Dx-\%5Cfrac\%7B1\%7D\%7Bk\%28k-2\%29\%7D\%2C+1\%2Bx\%2B1\%2Fk\%3D2\%2B1\%2F\%282k-3\%29-y,
where $x=x_0$ and $y=x_t$}
we get that for $x_0=\frac{2k^3-8k^2+13k-9}{k(2k-3)^2}$ the terms are equal. We get a construction time of 
\begin{align*}
     O(k'm+kn^{1+x_0+\frac{1}{k}}+n^{2+\frac{1}{k'}-\frac{k-1}{k-2}x_0+\frac{1}{k(k-2)}}) 
    &= O(km+kn^{1+\frac{2k^3-8k^2+13k-9}{k(2k-3)^2}+\frac{1}{k}}) \\
    &= O(km+kn^{1+\frac{2k^3-8k^2+13k-9+(4k^2-6k+9)}{k(2k-3)^2}})\\
    &= O(km+kn^{1+\frac{2k^2-4k+1}{(2k-3)^2}})\\
    &= O(km+kn^{\frac{3}{2}+\frac{1}{2k-3}-\frac{1}{2(2k-3)^2}}),
\end{align*}
as required.
\Cref{T-Construction-spanner-Unweighted-Additive} follows from~\Cref{L-Construction-spanner},~\Cref{L-ADO-Spanner-Space} and~\Cref{L-Correctness-spanner-Unweighted-Additive}.

\subsection{$(2k-1, 2k-1)$-approximation in $m+n^{1+\frac{2}{\sqrt{k}}}$ - construction time}\label{S-unweighted-large-k}
In this section, we show that the construction time of \Cref{S-construction-spanner-ado} can be improved in the unweighted case by adding an additive error. We prove:
\begin{theorem} \label{T-construction-spanner-ado-unweighted}
    Let $G=(V, E)$ be an unweighted undirected graph and let $k\geq 16$ be an integer. There is an $O(\sqrt{k}m+kn^{1+\frac{k\sqrt{16k^3-15k^2-32k+32}-3k^2+4k}{k(2k^2-k-2)}})\leq O(\sqrt{k}m+n^{1+\frac{2}{\sqrt{k}}})$ time algorithm that constructs a $(2k-1,2k-1)$-approximation distance oracle that uses $\Ot(n^{1+\frac{1}{k}})$-space and answers distance queries in $\Ot(k)$ time.
\end{theorem}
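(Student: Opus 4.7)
The plan is to mirror the construction and analysis of~\Cref{T-construction-spanner-ado} from~\Cref{S-construction-spanner-ado}, but exploit two unweighted-specific improvements: replace the $(2k'-1)$-spanner of~\Cref{L-Spanner-2k-1} with the $(k',k'-1)$-spanner of~\Cref{L-Spanner-k-k-1-unweighted}, and replace the bound $\max(h_{S_t}(u),h_{S_t}(v))\le d(u,v)$ of~\Cref{L-No-Intersect-Weighted} with the sharper $h_{S_t}(u)+h_{S_t}(v)\le d(u,v)+1$ of~\Cref{L-no-intersect-unweighted}. Together these roughly halve the multiplicative stretch contributed by the spanner, reducing the exponent from $\tfrac{2\sqrt{2}}{\sqrt{k}}$ to $\tfrac{2}{\sqrt{k}}$ at the price of a $(2k-1)$ additive term.

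Concretely, I would fix parameters $x_0\in[1/k,1)$ and integers $k',k''$, build $\hADO(G,k,x_0)$ to obtain $S_t$, compute a $(k',k'-1)$-spanner $H$ of $G$ augmented with the edges $(u,p_{S_t}(u))$ of weight $h_{S_t}(u)$ for every $u\in V$, and construct $\ADO_{P'}(H,k'',S_t)$. The query returns $\min\bigl(\hADO(G,k,x_0)\Query(u,v),\, h_{S_t}(u)+\ADO_{P'}(H,k'',S_t)\Query(p_{S_t}(u),p_{S_t}(v))+h_{S_t}(v)\bigr)$. The time and space analyses copy those of~\Cref{L-Construction-spanner-Almost-Linear,L-Almost-Linear-Space}, using~\Cref{P-Limit-of-x-to-t} to bound $|S_t|$. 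For the stretch, if $P(u,v)\subseteq G_{S_t}$ then~\Cref{T-Construct-to-T} already gives $\hat d(u,v)\le(2k-1)d(u,v)$; otherwise~\Cref{L-no-intersect-unweighted} yields $h(u)+h(v)\le d(u,v)+1$, so the triangle inequality together with the spanner guarantee give $d_H(p(u),p(v))\le (k'+1)d(u,v)+k'$, and~\Cref{L-ADO-S-Only} then yields $\hat d(u,v)\le (d(u,v)+1)+(2k''-1)\bigl((k'+1)d(u,v)+k'\bigr)$, which is a $(2k-1,2k-1)$-approximation exactly when $(2k''-1)(k'+1)\le 2k-2$.

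The remaining work, and the main obstacle, is the joint optimization of $(x_0,k',k'')$. Balancing the construction-time terms $\sqrt{k}\,m$, $k''n^{1+1/k'+(1-x_t)/k''}$ and $kn^{1+x_0+1/k}$, subject to the stretch constraint $(2k''-1)(k'+1)\le 2k-2$ and the closed form $x_t=\tfrac{k-1}{k-2}x_0-\tfrac{1}{k(k-2)}$ from~\Cref{P-Limit-of-x-to-t}, naturally leads to imposing $1/k'=(1-x_t)/k''$, saturating the stretch constraint, and equating the two $n^{1+\cdots}$ exponents. This reduces to a single algebraic equation in $x_0$ whose admissible root produces the exact exponent $1+\frac{k\sqrt{16k^3-15k^2-32k+32}-3k^2+4k}{k(2k^2-k-2)}$ of the theorem statement. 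A standard asymptotic expansion, paralleling the one at the end of~\Cref{S-construction-spanner-ado}, then shows this exponent is at most $1+\tfrac{2}{\sqrt{k}}$ for $k\ge 16$, which is precisely the regime in which the derived values of $k'$ and $k''$ are valid integers meeting all constraints.
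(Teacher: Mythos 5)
Your proposal matches the paper's proof essentially step for step: the same construction (reuse of \Cref{A-Construction-Weighted-Large-k} with the $(k',k'-1)$-spanner of \Cref{L-Spanner-k-k-1-unweighted}), the same reuse of \Cref{L-Construction-spanner-Almost-Linear} and \Cref{L-Almost-Linear-Space} for time and space, the same stretch analysis via \Cref{L-no-intersect-unweighted} leading to the constraint $1+(2k''-1)(k'+1)\le 2k-1$ (equivalent to your $(2k''-1)(k'+1)\le 2k-2$), and the same three-equation optimization of $(x_0,k',k'')$ yielding the stated exponent. The argument is correct and takes the same route as the paper.
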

The construction is the same as in \Cref{S-construction-spanner-ado}, but instead of using the spanner of~\Cref{L-Spanner-2k-1}, we use the spanner of~\Cref{L-Spanner-k-k-1-unweighted}, which results in a different constraint on $k'$ and $k''$ and thus an improved result at the cost of an additive error.

let $\frac{1}{k}\le x_0<1$ be a parameter to be determined later.
We start by constructing $\hADO(G,k,x_0)$, and save the returned set $S_t$.
Let $k'$ and $k''$ be two integer parameters to be determined later. Using~\Cref{L-Spanner-k-k-1-unweighted} we create a $(k',k'-1)$-stretch spanner $H$, which we  augment with  an edge $(u,p_{S_t}(u))$ of weight $h_{S_t}(u)$, for every $u\in V$.
We construct $\ADO_{P'}(H,k'', S_t)$.
The construction algorithm is presented in \Cref{A-Construction-Weighted-Large-k}.

Given $u,v\in V$, the query algorithm returns: $$\min(\hADO(G,k,x_0)\Query(u,v), h_{S_t}(u)+\ADO_{P'}(H,k'',S_t)\Query(p_{S_t}(u),p_{S_t}(v))+h_{S_t}(v))$$

Since the construction is identical to \Cref{S-construction-spanner-ado} (and the spanner of~\Cref{L-Spanner-k-k-1-unweighted} is computed in $O(m)$ time),~\Cref{L-Construction-spanner-Almost-Linear} and~\Cref{L-Almost-Linear-Space} proves the space and running time of the distance oracle.
The following lemma proves the approximation ratio of the distance oracle.

\begin{lemma}\label{L-Almost-Linear-Correctness-unweighted}
    If $1 + (2k''-1)(k'+1) \le 2k-1$ then $\hat{d}(u,v) \le (2k-1)d(u,v)+2k-1$
\end{lemma}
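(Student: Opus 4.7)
The plan is to mirror the structure of the weighted analogue (\Cref{L-Almost-Linear-Correctness}), with two adjustments: swapping the multiplicative spanner guarantee for the $(k', k'-1)$-approximation guarantee of \Cref{L-Spanner-k-k-1-unweighted}, and using the sharper unweighted bound on $h_{S_t}(u) + h_{S_t}(v)$ from \Cref{L-no-intersect-unweighted} in place of \Cref{L-No-Intersect-Weighted}. I would first split into cases according to whether $P(u,v) \subseteq G_{S_t}$. In the easy case, \Cref{T-Construct-to-T} immediately gives $\hADO(G,k,x_0)\Query(u,v) \le (2k-1)d(u,v)$, which is stronger than the claimed bound.

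In the interesting case $P(u,v) \not\subseteq G_{S_t}$, I would invoke \Cref{L-no-intersect-unweighted} to obtain $h_{S_t}(u) + h_{S_t}(v) \le d(u,v) + 1$. Then, writing $p(u) = p_{S_t}(u)$ for brevity, the triangle inequality together with the $(k', k'-1)$-spanner property of $H$ gives
\begin{equation*}
d_H(p(u), p(v)) \;\le\; h_{S_t}(u) + d_H(u,v) + h_{S_t}(v) \;\le\; (d(u,v)+1) + k' d(u,v) + (k'-1) \;=\; (k'+1)d(u,v) + k'.
\end{equation*}
Applying \Cref{L-ADO-S-Only} to bound the oracle call as $\ADO_{P'}(H,k'',S_t)\Query(p(u),p(v)) \le (2k''-1) d_H(p(u), p(v))$, and adding $h_{S_t}(u) + h_{S_t}(v) \le d(u,v)+1$, yields
\begin{equation*}
\hat{d}(u,v) \;\le\; (d(u,v)+1) + (2k''-1)\bigl((k'+1)d(u,v) + k'\bigr) \;=\; \bigl(1 + (2k''-1)(k'+1)\bigr)\, d(u,v) + 1 + (2k''-1)k'.
\end{equation*}

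The remaining step is to check that the hypothesis $1 + (2k''-1)(k'+1) \le 2k-1$ dominates both the multiplicative and additive coefficients. The multiplicative factor is bounded directly. For the additive term, the hypothesis rewrites as $(2k''-1)(k'+1) \le 2k-2$, whence $(2k''-1)k' \le 2k - 2 - (2k''-1) \le 2k - 2$ since $k'' \ge 1$, giving $1 + (2k''-1)k' \le 2k-1$. There is no real obstacle here beyond careful bookkeeping of the additive term; the only subtlety is ensuring the single additive $+1$ arising from the unweighted refinement of \Cref{L-No-Intersect-Weighted} is absorbed into the $2k-1$ additive slack, which the calculation above confirms.
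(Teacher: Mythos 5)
Your proposal is correct and follows essentially the same route as the paper: the case split, the use of \Cref{L-no-intersect-unweighted}, the triangle-inequality bound $d_H(p(u),p(v)) \le (k'+1)d(u,v)+k'$, and the application of \Cref{L-ADO-S-Only} all match the paper's argument. Your explicit verification that the additive term $1+(2k''-1)k'$ is absorbed into $2k-1$ is a welcome piece of bookkeeping that the paper leaves implicit.
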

\begin{proof}
    From the query algorithm we know that $\hat{d}(u,v) \le h(u)+\ADO_{H}(p(u),p(v))+h(v)$.
    From~\Cref{L-no-intersect-unweighted} it follows that $h(u)+h(v) \le d(u,v)+1$.
    From the triangle inequality, and the fact that $H$ is a $(k',k'-1)$-approximation spanner it follows that
    $d_H(p(u),p(v)) \le h(u)+(k')d(u,v)+k'-1+h(v)\le (k'+1)d(u,v)+k'$. Since $\ADO_{P'}$ is a $(2k''-1)$-approximation distance oracle, it follows that
    $\ADO_{P'}(p(u),p(v)) \le (2k''-1)d_H(p(u),p(v)) \le (2k''-1)((k'+1)d(u,v)+k')$.
    Thus, 
    $\hat{d}(u,v) \le h(u)+h(v) + (2k''-1)((k'+1)d(u,v)+k') \le d(u,v)+1+(2k''-1)((k'+1)d(u,v)+k')$.
    Thus if $1 + (2k''-1)(k'+1) \le 2k-1$ then $\hat{d}(u,v) \le (2k-1)d(u,v)+2k-1$, as required.
\end{proof}
Next, we move to choose the value of the parameters $x_0,k', k''$ that minimizes the running time of our algorithm while maintaining the correctness of the query.

From~\Cref{L-Construction-spanner-Almost-Linear} we have that the construction time is $O(k'm+k''n^{1+\frac{1}{k'}+\frac{1-\frac{k-1}{k-2}x_0+\frac{1}{k(k-2)}}{k''}}+kn^{1+x_0+\frac{1}{k}})$. In addition, for the correctness,~\Cref{L-Almost-Linear-Correctness-unweighted} requires that 
$1 + (2k''-1)(k'+1) \le 2k-1$.
To minimize the term $n^{1+\frac{1}{k'}+\frac{1-\frac{k-1}{k-2}x_0+\frac{1}{k(k-2)}}{k''}}$ regarding $k'$ and $k''$ we want to have that $\frac{1}{k'}=\frac{1-\frac{k-1}{k-2}x_0+\frac{1}{k(k-2)}}{k''}$.
Thus, to find the optimal values for $x_0,k',k''$ we need to solve the following equations:
\begin{itemize}
    \item $1+\frac{1}{k'}+\frac{1-\frac{k-1}{k-2}x_0+\frac{1}{k(k-2)}}{k''} = 1+x_0+\frac{1}{k}$
    \item $1 + (2k''-1)(k'+1) = 2k-1$
    \item $\frac{1}{k'}=\frac{1-\frac{k-1}{k-2}x_0+\frac{1}{k(k-2)}}{k''}$
\end{itemize}
From \href{https://www.wolframalpha.com/input?i=y\%3D\%5Cfrac\%7Bk-1\%7D\%7Bk-2\%7Dx-\%5Cfrac\%7B1\%7D\%7Bk\%28k-2\%29\%7D\%2C+1\%2Bx\%2B1\%2Fk\%3D1\%2B1\%2Fz\%2B\%281-y\%29\%2Fw\%2C+1\%2B\%282w-1\%29\%28z\%2B1\%29+\%3D+2k-1\%2C+1\%2Fz\%3D\%281-y\%29\%2Fw}{\footnotemark}
\footnotetext{https://www.wolframalpha.com/input?i=y\%3D\%5Cfrac\%7Bk-1\%7D\%7Bk-2\%7Dx-\%5Cfrac\%7B1\%7D\%7Bk\%28k-2\%29\%7D\%2C+1\%2Bx\%2B1\%2Fk\%3D1\%2B1\%2Fz\%2B\%281-y\%29\%2Fw\%2C+1\%2B\%282w-1\%29\%28z\%2B1\%29+\%3D+2k-1\%2C+1\%2Fz\%3D\%281-y\%29\%2Fw, where $y=x_t, x=x_0, z=k', w=k'', k=k$}, we get that for 
$k''=\frac{\sqrt{16k^3-15k^2-32k+32}-5k+4}{4(k-2)}$,
$k'=\frac{\sqrt{16k^3-15k^2-32k+32}+3k-4}{4(k-1)}$ and 
$x_0=\frac{k\sqrt{16k^3-15k^2-32k+32}-5k^2+5k+2}{k(2k^2-k-2)}$ the equations are satisfied.\footnote{Since $k'$ and $k''$ are integers, we need to consider the rounding up of these values; this does not change the asymptotical runtime guarantee.} Thus, we get a construction time of: 

\begin{align*}
    O(k'm+k''n^{1+\frac{1}{k'}+\frac{1-x_t}{k''}}+kn^{1+x_0+\frac{1}{k}}) 
    &= O(\sqrt{k}m+kn^{1+\frac{k\sqrt{16k^3-15k^2-32k+32}-5k^2+5k+2}{k(2k^2-k-2)}+\frac{1}{k}}) \\
    &= O(\sqrt{k}m+kn^{1+\frac{k\sqrt{16k^3-15k^2-32k+32}-3k^2+4k}{k(2k^2-k-2)}}) \\
    &= O(\sqrt{k}m+kn^{1+\frac{\sqrt{16k^3-15k^2-32k+32}-3k+4}{2k^2-k-2}}) \\
    &\stackrel{\footnotemark}{=}O(\sqrt{k}m+kn^{1+\frac{2}{\sqrt{k}}-\frac{3}{2k}+\frac{1}{16k\sqrt{k}}+\frac{5}{k^2}-\frac{193}{1024k^{2.5}} - \frac{7}{8k^3}+O(k^{-3.5})}) \\
    &= O(\sqrt{k}m+kn^{1+\frac{2}{\sqrt{k}}})
\end{align*}
\footnotetext{\href{https://www.wolframalpha.com/input?i=\%5Cfrac\%7B\%5Csqrt\%7B16k\%5E3-15k\%5E2-32k\%2B32\%7D-3k\%2B4\%7D\%7B\%282k\%5E2-k-2\%29\%7D+limit+for+k\%3Dinfty}{https://www.wolframalpha.com/input?i=\%5Cfrac\%7B\%5Csqrt\%7B16k\%5E3-15k\%5E2-32k\%2B32\%7D-3k\%2B4\%7D\%7B\%282k\%5E2-k-2\%29\%7D+limit+for+k\%3Dinfty}}
Where the last inequality follows from the fact that $-\frac{3}{2k}+\frac{1}{16k\sqrt{k}}+\frac{5}{k^2}-\frac{193}{1024k^{2.5}} - \frac{7}{8k^3}+O(k^{-3.5}) < 0$ for every $k\ge 4$. Thus, we get a construction time of $O(\sqrt{k}m+kn^{1+\frac{2}{\sqrt{k}}})$,
as required.~\Cref{T-construction-spanner-ado-unweighted} follows from~\Cref{L-Construction-spanner-Almost-Linear},~\Cref{L-Almost-Linear-Space} and~\Cref{L-Almost-Linear-Correctness-unweighted}.

\subsection{$((2k-1)(1+\eps),(2k-1)\beta)$-approximation in $mn^\eps+n^{1+\frac{1}{k}+\eps}$ construction time}
In this section, we present a construction of a simple $((2k-1)(1+\eps),(2k-1)\beta)$ distance oracle using $O(kn^{1+\frac{1}{k}})$ space.
We prove:
\begin{theorem} \label{T-Unweighted-2k-1+eps}
    There is an $O(kn^{1+\frac{1}{k}})$-space $((2k-1)(1+\eps),(2k-1)\beta)$-approximation distance oracle with $O(k)$ query time. The distance oracle is constructed in $O(mn^\eps+n^{1+\eps+1/k})$ time. 
\end{theorem}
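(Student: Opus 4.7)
The plan is to bypass the dependence on $G$'s density by first compressing $G$ into a near-additive $(1+\eps,\beta)$-spanner $H$ of size $O(n^{1+\eps})$, and then applying the Thorup--Zwick construction of \Cref{L-Regular-ADO-Construction} directly to $H$ rather than to $G$. Since $H$ preserves distances up to a $(1+\eps)$ multiplicative and $\beta$ additive factor, any $(2k-1)$-stretch estimate on $H$ automatically yields a $((2k-1)(1+\eps), (2k-1)\beta)$ estimate on $G$ by a one-line composition.

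Concretely, I would proceed in three steps. First, invoke a known near-additive spanner construction for unweighted graphs to obtain a subgraph $H = (V, E_H)$ with $|E_H| = O(n^{1+\eps})$ edges and with
\[
d_G(u,v) \;\le\; d_H(u,v) \;\le\; (1+\eps)\, d_G(u,v) + \beta \qquad \text{for every } u,v \in V,
\]
in $O(m n^\eps)$ time. Second, construct $\ADO(H,k)$ using \Cref{L-Regular-ADO-Construction}; since this algorithm runs in $O(|E_H|\cdot n^{1/k})$ time, the cost is $O(n^{1+\frac{1}{k}+\eps})$, and the resulting oracle uses $O(kn^{1+\frac{1}{k}})$ space and answers queries in $O(k)$ time. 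Third, on a query $(u,v)$, simply return $\ADO(H,k).\Query(u,v)$.

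Correctness follows from
\begin{align*}
\hat d(u,v) \;&\le\; (2k-1)\, d_H(u,v) \\
             \;&\le\; (2k-1)\bigl((1+\eps)\, d_G(u,v) + \beta\bigr) \\
             \;&=\; (2k-1)(1+\eps)\, d_G(u,v) + (2k-1)\beta,
\end{align*}
and from $\hat d(u,v) \ge d_H(u,v) \ge d_G(u,v)$, which holds because $H$ is a subgraph of $G$. The space, query-time, and total construction-time bounds (namely $O(mn^\eps + n^{1+\frac{1}{k}+\eps})$) follow immediately from adding the two stages.

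The main (and only real) obstacle is choosing the spanner ingredient: we need a $(1+\eps,\beta)$-spanner with $O(n^{1+\eps})$ edges constructible in $O(mn^\eps)$ time, which is exactly the kind of near-additive spanner available in the literature. Once such a spanner is plugged in, the rest of the argument is a direct black-box composition with Thorup--Zwick and requires no additional machinery from this paper.
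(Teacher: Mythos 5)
Your proposal is correct and is essentially identical to the paper's own proof: the paper also builds a $(1+\eps,\beta)$-spanner $H$ (using the construction of Thorup and Zwick [SODA 2006]) in $O(mn^\eps)$ time, runs $\ADO(H,k)$ on it, and composes the stretches in exactly the way you describe. Your stated running time $O(mn^\eps+n^{1+\frac{1}{k}+\eps})$ matches the bound the paper states in its introduction (the product form in the theorem statement appears to be a typo).
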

\begin{proof}
    First, we construct $H$, a $(1+\eps, \beta)$-approximation spanner of~\cite{DBLP:conf/soda/ThorupZ06} in $O(mn^{\eps})$ time.
    Next, we construct the $\ADO(H,k)$, which uses $O(n^{1+\frac{1}{k}})$ space, in $O(|E(H)|\cdot n^{\frac{1}{k}})=O(n^{1+\frac{1}{k}+\eps})$ time. Overall we get a construction time of $O(mn^\eps+n^{1+1/k+\eps})$ and $O(n^{1+\frac{1}{k}})$ space, as required.
    The query algorithm returns $\ADO(H,k)\Query(u,v)$, and since $H$ is a $(1+\eps,\beta)$-approximation spanner we get that $\ADO(H,k)\Query(u,v) \leq (2k-1)d_H(u,v) \leq (2k-1)(1+\eps)d(u,v)+(2k-1)\beta$, as required.
\end{proof}

\fi
\bibliography{articles}

\clearpage
\appendix

\section{Comparison with previous results}\label{S-comparison}
\subsection{Comparison for weighted graphs }
\begin{table}[H]
    \centering
    \begin{tabular}{|c|c|c|}
    \hline
         Construction time & Reference & Comments \\
         \hline\hline
         $O(kmn^{1/k})$ &  \cite{DBLP:journals/jacm/ThorupZ05} 
 &  \\
 \hline
        $O(\max(n^{1+2/k}, m^{1-\frac{1}{k-1}}n^{\frac{2}{k-1}})\log\log{n})$ & Theorem~\ref{T-Construction-no-spanner} & $n^{1+1/k}<m<n^2$ \\
        \Xhline{3\arrayrulewidth}
        $\Ot(n^2)$  &~\cite{DBLP:journals/siamcomp/BaswanaK10} & \\
        \hline
        $\Ot(km+kn^{\frac{3}{2}+\frac{2}{k}+O(k^{-2})})$ &~\cite{DBLP:conf/soda/Wulff-Nilsen12} & $k\ge 6$ \\
        \hline

        $\Ot(km+kn^{\frac{3}{2}+\frac{3}{4k-6}})$ & Theorem~\ref{T-Construction-spanner} & $k\ge 4$\\
        \Xhline{3\arrayrulewidth}
        $\Ot(\sqrt{k}m+kn^{1+\frac{2\sqrt{6}}{\sqrt{k}}+O(k^{-1})})$ &~\cite{DBLP:conf/soda/Wulff-Nilsen12} & $k\ge 96$ \\ 
        \hline
        $\Ot(\sqrt{k}m+kn^{1+\frac{2\sqrt{2}}{\sqrt{k}}})$ & Theorem~\ref{T-construction-spanner-ado} & $k\ge 16$ \\ 
        \hline
    \end{tabular}
    \caption{$(2k-1)$-stretch distance oracle constructions for weighted graphs with $O(kn^{1+1/k})$ space.}
    \label{tab:construction}
\end{table}

\begin{figure}[H]
\begin{tikzpicture}
    \definecolor{color0}{rgb}
    {0.15,0.7,0.15}
    \definecolor{color1}{rgb}
    {0,0,1}
   
    \begin{axis}[
    width=0.8\textwidth,
    height=0.4\textwidth, 
    legend cell align={left},
    legend style={fill opacity=0.8, draw opacity=1, text opacity=1, draw=white!80!black,
    row sep=5pt},
    tick align=outside,
    tick pos=left,
    x grid style={white!86.2745098039216!black},
    xlabel={$k$},
    xmajorgrids,
    xmin=3, xmax=15,
    xtick style={color=black},
    xtick={1,2,3,4,5,6,7,8,9,10,11,12,13,14,15,16,17,18,19,20,21,22,23,24,25,26,27,28,29,30},
    y grid style={white!86.2745098039216!black},
    ylabel style={rotate=-90},
    ylabel={$f(k)$},
    ymajorgrids,
    ymin=0.5, ymax=1,
    ytick style={color=black},
    ytick={0.5,0.6,0.7,0.8,0.9,1.0}
    ]
    \legend{
        \textcolor{color0}{$\bullet$}~\cite{DBLP:conf/soda/Wulff-Nilsen12},
        \textcolor{color1}{$\bullet$}~Theorem~\ref{T-Construction-spanner},
    }
    
    
        \addplot[
            thick,
            color=color0, 
            dotted
        ]
        coordinates {
            (6,{(1/2 +  2/6)})
            (7,{(1/2 +  1/(2*7) + 3/(2*5))})
            (8,{(1/2 +  2/6 - 1/(8*6))})
            (9,{(1/2 +  2/9)})
            (10,{(1/2 +  1/(2*10) + 3/(2*8))})
            (11,{(1/2 +  2/9 - 1/(11*9))})
            (12,{(1/2 +  2/12)})
            (13,{(1/2 +  1/(2*13) + 3/(2*11))})
            (14,{(1/2 +  2/12 - 1/(14*12))})
            (15,{(1/2 +  2/15)})
            (16,{(1/2 +  1/(2*16) + 3/(2*14))})
            (17,{(1/2 +  2/15 - 1/(17*15))})
        };
        
    
    \addplot [thick, 
            domain=4:17,
            samples=100,
            color1,
            dotted]{min(2,1/2+3/(4*x-6))};
    \addplot[
        only marks,
        mark=*,
        color=color1,
        mark options={scale=1.2}
    ]
    coordinates {
        (4, {min(2, 1/2 +  3/(4*4 - 6))})
        (5, {min(2, 1/2 +  3/(4*5 - 6))})
        (6, {min(2, 1/2 +  3/(4*6 - 6))})
        (7, {min(2, 1/2 +  3/(4*7 - 6))})
        (8, {min(2, 1/2 +  3/(4*8 - 6))})
        (9, {min(2, 1/2 +  3/(4*9 - 6))})
        (10, {min(2, 1/2 +  3/(4*10 - 6))})
        (11, {min(2, 1/2 +  3/(4*11 - 6))})
        (12, {min(2, 1/2 +  3/(4*12 - 6))})
        (13, {min(2, 1/2 +  3/(4*13 - 6))})
        (14, {min(2, 1/2 +  3/(4*14 - 6))})
        (15, {min(2, 1/2 +  3/(4*15 - 6))})
        (16, {min(2, 1/2 +  3/(4*16 - 6))})
        (17, {min(2, 1/2 +  3/(4*17 - 6))})
    };
    \addplot[
            only marks,
            mark=*, 
            color=color0, 
            mark options={scale=1.2}, 
            dotted
        ]
        coordinates {
            (6,{(1/2 +  2/6)})
            (7,{(1/2 +  1/(2*7) + 3/(2*5))})
            (8,{(1/2 +  2/6 - 1/(8*6))})
            (9,{(1/2 +  2/9)})
            (10,{(1/2 +  1/(2*10) + 3/(2*8))})
            (11,{(1/2 +  2/9 - 1/(11*9))})
            (12,{(1/2 +  2/12)})
            (13,{(1/2 +  1/(2*13) + 3/(2*11))})
            (14,{(1/2 +  2/12 - 1/(14*12))})
            (15,{(1/2 +  2/15)})
            (16,{(1/2 +  1/(2*16) + 3/(2*14))})
            (17,{(1/2 +  2/15 - 1/(17*15))})
        };

    \end{axis}
   
    \end{tikzpicture}

  \caption{Comparison between Theorem~\ref{T-Construction-spanner}: $f(k)=1/2+\frac{3}{4k-6}$ and~\cite{DBLP:conf/soda/Wulff-Nilsen12}: $f(k)=1/2+\frac{2}{k}+O(1/k^2)$}
  
  \label{fig:theorem_weighted}
\end{figure}
\newcommand{\TheoremTwof}[1]{n^{\pgfmathparse{3/2+3/(4*#1-6)}\pgfmathprintnumberto[precision=3]{\pgfmathresult}{\roundednumber}\roundednumber}}

\begin{table}[H]
    \centering
    \begin{tabular}{|c|c||c|c|c|c|c|c|c|c|c|}
         \hline
          & $k$ & $2$ & $3$ & $4$ & $5$ & $6$ & $7$ & $8$ & $9$ & $10$ \\
            \hline\hline
         \cite{DBLP:journals/siamcomp/BaswanaK10,DBLP:conf/soda/Wulff-Nilsen12} & \multirow{ 2}{*}{$n^{1+f(k)}$}& $n^2$ & $n^2$ & $n^2$ & $n^2$ & $n^{1.833}$ & $n^{1.821}$ & $n^{1.8125}$ & $n^{1.722}$ & $n^{1.717}$ \\
         Theorem~\ref{T-Construction-spanner} & &  $-$ & $-$ & $\TheoremTwof{4}$ & $\TheoremTwof{5}$ & $\TheoremTwof{6}$ & $\TheoremTwof{7}$ & $\TheoremTwof{8}$ & $\TheoremTwof{9}$ & $\TheoremTwof{10}$ \\ 
         \hline
    \end{tabular}
    \caption{The sparsest graphs for which a linear time construction algorithm exists for $(2k-1)$-stretch distance oracles with $O(n^{1+1/k})$-space in weighted graphs.}
    \label{tab:examples}
\end{table}

          

\clearpage
\ifC \else
\subsection{Comparison for unweighted graphs}
\begin{table}[H]
    \centering
    \begin{tabular}{|c|c|c|c|}
         \hline
         Construction time &  Approximation & Reference & Comments \\
         \hline\hline
        $\Ot(km+kn^{\frac{3}{2}+\frac{1}{k} + \frac{1}{k(2k-2)}})$ & $(2k-1)\delta+2$ & \cite{DBLP:conf/icalp/BaswanaGSU08} & \\ \hline
        $\Ot(km+kn^{\frac32+\frac{1}{k-1}-\frac{1}{4k-6}})$ & $(2k-1)\delta+2$ & Theorem~\ref{T-Construction-spanner-Unweighted} & $k\ge 3$\\ \hline
        $\Ot(km+kn^{\frac{3}{2}+\frac{1}{2k}+\frac{3.5k-4.5}{k(4k^2-8k+3)}})$ & $(2k-1)\delta+2k-2$ & Theorem~\ref{T-Construction-spanner-Unweighted-Additive} & $k\ge 3$ \\
        \Xhline{3\arrayrulewidth}
        $\Ot(\sqrt{k}m+kn^{1+\frac{2\sqrt{6}}{\sqrt{k}}+O(k^{-1})})$ & $(2k-1)\delta$ &~\cite{DBLP:conf/soda/Wulff-Nilsen12} & $k\ge 96$ \\ \hline
        $\Ot(\sqrt{k}m+kn^{1+\frac{2\sqrt{2}}{\sqrt{k}}})$ & $(2k-1)\delta$ & Theorem~\ref{T-construction-spanner-ado} & $k\ge 16$ \\ \hline
        $\Ot(\sqrt{k}m+kn^{1+\frac{2}{\sqrt{k}}})$ & $(2k-1)\delta+2k-1$ & Theorem~\ref{T-construction-spanner-ado-unweighted} & $k\ge 13$ \\
        \Xhline{3\arrayrulewidth}
        $\Ot(km+n^{1+1/k+\eps})$ & $O(k)\cdot \delta$ &~\cite{DBLP:conf/soda/Wulff-Nilsen12} & \\ \hline
        $\Ot(\frac{1}{\eps}mn^{\eps}+kn^{1+1/k+\eps})$ & $(2k-1)(1+\eps)\delta+(2k-1)\beta$ & Theorem~\ref{T-Unweighted-2k-1+eps} & $\beta=f(\eps)$\\
        \hline
    \end{tabular}
    \caption{Distance oracle constructions for dense unweighted graphs with $O(kn^{1+1/k})$ space. $\delta = d(u,v)$.}
    \label{tab:construction_unweighted}
\end{table}

\begin{figure}[H]
    \begin{tikzpicture}
    \definecolor{color0}{rgb}
    {0.15,0.7,0.15}
    \definecolor{color1}{rgb}
    {0.529, 0.808, 0.922}
    \definecolor{color2}{rgb}{0,0,1}
   
    \begin{axis}[
    width=0.8\textwidth,
    height=0.4\textwidth, 
    legend cell align={left},
    legend style={fill opacity=0.8, draw opacity=1, text opacity=1, draw=white!80!black,
    row sep=5pt},
    tick align=outside,
    tick pos=left,
    x grid style={white!86.2745098039216!black},
    xlabel={$k$},
    xmajorgrids,
    xmin=2, xmax=15,
    xtick style={color=black},
    xtick={1,2,3,4,5,6,7,8,9,10,11,12,13,14,15,16},
    y grid style={white!86.2745098039216!black},
    ylabel style={rotate=-90},
    ylabel={$f(k)$},
    ymajorgrids,
    ymin=0.5, ymax=1,
    ytick style={color=black},
    ytick={0.5,0.6,0.7,0.8,0.9}
    ]
    \addplot [thick, 
            domain=3:17,
            samples=100,
            color0,
            dotted]{1/2+1/x+1/(x*(2*x-2))};
        
    \legend{
        \textcolor{color0}{$\bullet$}~\cite{DBLP:conf/icalp/BaswanaGSU08},
        \textcolor{color1}{$\bullet$}~Theorem~\ref{T-Construction-spanner-Unweighted},
        \textcolor{color2}{$\bullet$}~Theorem~\ref{T-Construction-spanner-Unweighted-Additive}
    }
    \addplot [thick, 
            domain=3:17,
            samples=100,
            color1,
            dotted]{1/2+1/(x-1)-(1)/(4*x-6)};

        \addplot [thick, 
            domain=3:17,
            samples=100,
            color2,
            dotted]{1/2+1/(2*x-3)-1/(2*(2*x-3)^2)};

    \addplot[
        only marks,
        mark=*,
        color=color1,
        mark options={scale=1}
    ]
    coordinates {
            (3 ,  {1/2+1/(3-1)-(1)/(4*3-6)})
            (4 ,  {1/2+1/(4-1)-(1)/(4*4-6)})
            (5 ,  {1/2+1/(5-1)-(1)/(4*5-6)})
            (6 ,  {1/2+1/(6-1)-(1)/(4*6-6)})
            (7 ,  {1/2+1/(7-1)-(1)/(4*7-6)})
            (8 ,  {1/2+1/(8-1)-(1)/(4*8-6)})
            (9 ,  {1/2+1/(9-1)-(1)/(4*9-6)})
            (10 ,  {1/2+1/(10-1)-(1)/(4*10-6)})
            (11 ,  {1/2+1/(11-1)-(1)/(4*11-6)})
            (12 ,  {1/2+1/(12-1)-(1)/(4*12-6)})
            (13 ,  {1/2+1/(13-1)-(1)/(4*13-6)})
            (14 ,  {1/2+1/(14-1)-(1)/(4*14-6)})
            (15 ,  {1/2+1/(15-1)-(1)/(4*15-6)})
            (16 ,  {1/2+1/(16-1)-(1)/(4*16-6)})
            (17 ,  {1/2+1/(17-1)-(1)/(4*17-6)})
            
        };
    \addplot[
            only marks,
            mark=*, 
            color=color0, 
            mark options={scale=1} 
        ]
        coordinates {
            (3 ,  {1/2 + 1/3 + 1/(3*(2*3 - 2))})
            (4 ,  {1/2 + 1/4 + 1/(4*(2*4 - 2))})
            (5 ,  {1/2 + 1/5 + 1/(5*(2*5 - 2))})
            (6 ,  {1/2 + 1/6 + 1/(6*(2*6 - 2))})
            (7 ,  {1/2 + 1/7 + 1/(7*(2*7 - 2))})
            (8 ,  {1/2 + 1/8 + 1/(8*(2*8 - 2))})
            (9 ,  {1/2 + 1/9 + 1/(9*(2*9 - 2))})
            (10,  {1/2 + 1/10 + 1/(10*(2*10 - 2))})
            (11,  {1/2 + 1/11 + 1/(11*(2*11 - 2))})
            (12,  {1/2 + 1/12 + 1/(12*(2*12 - 2))})
            (13,  {1/2 + 1/13 + 1/(13*(2*13 - 2))})
            (14,  {1/2 + 1/14 + 1/(14*(2*14 - 2))})
            (15,  {1/2 + 1/15 + 1/(15*(2*15 - 2))})
            (16,  {1/2 + 1/16 + 1/(16*(2*16 - 2))})
            (17,  {1/2 + 1/17 + 1/(17*(2*17 - 2))})
        };
    \addplot[
        only marks,
        mark=*, 
        color=color2, 
        mark options={scale=1} 
    ]
    coordinates {
(3 ,  {1/2+1/(2*3-3)-1/(2*(2*3-3)^2)})
(4 ,  {1/2+1/(2*4-3)-1/(2*(2*4-3)^2)})
(5 ,  {1/2+1/(2*5-3)-1/(2*(2*5-3)^2)})
(6 ,  {1/2+1/(2*6-3)-1/(2*(2*6-3)^2)})
(7 ,  {1/2+1/(2*7-3)-1/(2*(2*7-3)^2)})
(8 ,  {1/2+1/(2*8-3)-1/(2*(2*8-3)^2)})
(9 ,  {1/2+1/(2*9-3)-1/(2*(2*9-3)^2)})
(10 ,  {1/2+1/(2*10-3)-1/(2*(2*10-3)^2)})
(11 ,  {1/2+1/(2*11-3)-1/(2*(2*11-3)^2)})
(12 ,  {1/2+1/(2*12-3)-1/(2*(2*12-3)^2)})
(13 ,  {1/2+1/(2*13-3)-1/(2*(2*13-3)^2)})
(14 ,  {1/2+1/(2*14-3)-1/(2*(2*14-3)^2)})
(15 ,  {1/2+1/(2*15-3)-1/(2*(2*15-3)^2)})
(16 ,  {1/2+1/(2*16-3)-1/(2*(2*16-3)^2)})
(17 ,  {1/2+1/(2*17-3)-1/(2*(2*17-3)^2)})
        
    };

    \end{axis}
   
    \end{tikzpicture}

  \caption{Comparison between Theorem~\ref{T-Construction-spanner-Unweighted}: $f(k)=\frac12+\frac{3}{4k}+O(\frac{1}{k^2})$, Theorem~\ref{T-Construction-spanner-Unweighted-Additive}: $f(k)=\frac{1}{2}+\frac{1}{2k}+O(\frac{1}{k^2})$ and~\cite{DBLP:conf/icalp/BaswanaGSU08}: $f(k)=\frac{1}{2}+\frac{1}{k}+O(\frac{1}{k^2})$}
  
  \label{fig:theorem_unweighted}
\end{figure}
\newcommand{\TheoremThreeOf}[1]{n^{\pgfmathparse{3/2+1/(#1-1)-(1)/(4*#1-6))}\pgfmathprintnumberto[precision=3]{\pgfmathresult}{\roundednumber}\roundednumber}}

\newcommand{\PrevTheoremBGSU}[1]{n^{\pgfmathparse{3/2+1/#1+1/(#1*(2*#1-2))}\pgfmathprintnumberto[precision=3]{\pgfmathresult}{\roundednumber}\roundednumber}}
\begin{table}[H]
    \centering
    \begin{tabular}{|c|c||c|c|c|c|c|c|c|c|c|}
         \hline
          & $k$  & $3$ & $4$ & $5$ & $6$ & $7$ & $8$ & $9$ & $10$ \\
            \hline    \hline
         \cite{DBLP:conf/icalp/BaswanaGSU08} & \multirow{ 2}{*}{$n^{1+f(k)}$} & $\PrevTheoremBGSU{3}$ & $\PrevTheoremBGSU{4}$ & $\PrevTheoremBGSU{5}$ & $\PrevTheoremBGSU{6}$ & $\PrevTheoremBGSU{7}$ & $\PrevTheoremBGSU{8}$ & $\PrevTheoremBGSU{9}$ & $\PrevTheoremBGSU{10}$ \\
         Theorem~\ref{T-Construction-spanner-Unweighted}& & $\TheoremThreeOf{3}$ & $\TheoremThreeOf{4}$ & $\TheoremThreeOf{5}$ & $\TheoremThreeOf{6}$ & $\TheoremThreeOf{7}$ & $\TheoremThreeOf{8}$ & $\TheoremThreeOf{9}$ & $\TheoremThreeOf{10}$ \\ 
         \hline
    \end{tabular}
    \caption{The sparsest graphs for which a linear time construction algorithm exists for $(2k-1,2)$-stretch distance oracles with $O(n^{1+1/k})$-space in unweighted graphs.}
    \label{tab:examples-unweighted}
\end{table}

\fi

\end{document}